\title{The Price of Anarchy in One-Sided Allocation Problems\\ with Multi-Unit Demand Agents}
\author{
  Sissi Jiang\\
  \texttt{yuhe.jiang@mail.mcgill.ca}
  \and
  Ndiame Ndiaye,\\
  \texttt{ndiame.ndiaye@mail.mcgill.ca}
    \and
  Adrian Vetta\\
  \texttt{adrian.vetta@mcgill.ca}
  \and
  Eggie Wu\\
  \texttt{qihan.wu@mail.mcgill.ca}
}
\newtheorem{theorem}{Theorem}[section]
\newtheorem{example}{Example}
\newtheorem{lemma}[theorem]{Lemma}
\newtheorem{corollary}[theorem]{Corollary}
\theoremstyle{remark}
\title{The Price of Anarchy of the\\ Asymmetric One-Sided Allocation Problem}
\author{
  Sissi Jiang\\
  \texttt{yuhe.jiang@mail.mcgill.ca}
  \and
  Ndiame Ndiaye,\\
  \texttt{ndiame.ndiaye@mail.mcgill.ca}
    \and
  Adrian Vetta\\
  \texttt{adrian.vetta@mcgill.ca}
  \and
  Eggie Wu\\
  \texttt{qihan.wu@mail.mcgill.ca}
}
\newcommand{\genTime}[2]{t_{#1}(#2)} 
\newcommand{\genValue}[1]{(#1,v_{-i})} 
\newcommand{\NETime}[1]{t_{#1}} 
\newcommand{\SeqTime}[1]{\hat{t}_{#1}} 
\newcommand{\SingTime}[1]{\tilde{t}_{#1}} 
\newcommand{\IndTime}[2]{t^{#1}_{#2}}
\newcommand{\NEValue}[1][]{v_{#1}} 
\newcommand{\TrueValue}[1][]{v'_{#1}} 
\newcommand{\EpsValue}[1]{\hat{u}^{\varepsilon}_{#1}} 
\newcommand{\SeqValue}[1]{\hat{u}_{#1}} 
\newcommand{\IndValue}[2][]{u^{#2}_{#1}}
\newcommand{\remSet}[2]{r^{#1}(#2)}
\newcommand{\remAmm}[3]{q_{#1}^{#2}(#3)}
\newcommand{\conRate}[4]{c_{#1,#2}^{#3}(#4)}
\newcommand{\valInd}[0]{\mathcal{U}}
\newcommand{\valGroup}[0]{\mathcal{V}}
\newcommand{\sumInd}[1]{C^{Y,#1}_i(t)}
\newcommand{\fn}[2]{#1\left(#2\right)}
\newcommand{\set}[1]{\left\{#1\right\}}
\newcommand{\floor}[1]{\left\lfloor#1\right\rfloor}
\newcommand{\ceil}[1]{\left\lceil#1\right\rceil}
\newcommand{\abs}[1]{\left| #1 \right|}
\newcommand{\opInt}[2]{\left(#1,#2\right)}
\newcommand{\clInt}[2]{\left[#1,#2\right]}
\newcommand{\bigO}[1]{O\left(#1\right)}
\newcommand{\bb}[0]{\mathbb}
\begin{document}

\maketitle
    
\begin{abstract}
    We study ``fair mechanisms" for the (asymmetric) one-sided allocation problem with $m$~items and $n$~multi-unit demand agents with additive, unit-sum valuations.
    The symmetric case ($m=n$), the one-sided matching problem, has been studied extensively
    for the special class of unit demand agents, in particular with respect to the folklore {\em Random Priority} mechanism and the {\em Probabilistic Serial} mechanism, introduced by Bogomolnaia and Moulin~\cite{BM01}. These are both fair mechanisms and attention has focused on their structural properties, incentives, and performance with respect to social welfare.
    Under the standard assumption of unit-sum valuation functions,
    Christodoulou et al.~\cite{CFF15} proved that the price of anarchy is $\Theta(\sqrt{n})$ in the one-sided matching problem for both the Random Priority and Probabilistic Serial mechanisms.
    Whilst both Random Priority and Probabilistic Serial are {\em ordinal mechanisms}, these approximation guarantees are the best possible even for the broader 
    class of {\em cardinal mechanisms}.
    
    To extend these results to the general setting of the one-sided allocation problems there are two technical obstacles.
    One, asymmetry ($m\neq n$) is problematic especially when the number of items is much greater than the number of 
    items, $m\gg n$. Two, it is necessary to study multi-unit demand agents rather than simply unit demand agents.
    Our approach is to study a natural cardinal mechanism variant of Probabilistic Serial, which we call {\em Cardinal Probabilistic Serial}. We present structural theorems for this mechanism and use them to 
    obtain bounds on the price of anarchy. Our first main result is an upper bound 
    of $O(\sqrt{n}\cdot \log m)$ on the price of anarchy for the asymmetric one-sided allocation problem with multi-unit demand agents. This upper bound applies to both Probabilistic Serial and Cardinal Probabilistic Serial and there is
    a complementary lower bound of $\Omega(\sqrt{n})$ for any fair mechanism. That lower bound is
    unsurprising. More intriguing is our second main result: the price of anarchy degrades with the number of items.
    Specifically, a logarithmic dependence on the number of items is necessary as we show a lower bound
    of $\Omega(\min \{n\, , \, \log m\})$ for both Probabilistic Serial and Cardinal Probabilistic Serial. 
\end{abstract}

\section{Introduction}

In the {\em one-sided matching problem} a set of $m$ items must be matched in a {\em fair} manner to a set of $n=m$ (symmetry) agents. This is a classical problem in economics and computer science with numerous practical applications, such as assigning 
children to schools, patients to doctors, workers to tasks, social housing to people, etc.
Consequently, there has been a huge amount of research concerning matching mechanisms, their incentive and structural properties, and the social quality of the outcomes they induce. Of course, these mechanisms are restricted by the fact that the allocation must be a matching. Equivalently, this constraint can be viewed as an assumption of {\em unit demand} valuation functions, where each agent desires at most one good. 
However, unit demand valuations are very restrictive. Indeed, in mechanism design primary focus is on {\em multi-unit demand} valuations and Budish et al.~\cite{BCKM13} highlight the importance of moving beyond unit-demand agents in the field of {\em fair} mechanism design. Moreover, in many practical applications the number of items differs from the number of
agents ($m\neq n$, asymmetry) and/or the agents have {\em multi-unit demand} valuations. For example, in estate division or the allocation of shifts to employees, university courses to students, landing and hanger slots to airlines, etc.
This motivates our work: we study fair allocation mechanisms
for the asymmetric one-sided allocation problem with multi-unit demand agents
and analyse the quality of the outcomes they produce with respect to social welfare.

\subsection{Background}
The one-sided matching problem with indivisible items 
was formally introduced by Hylland and Zeckhauser~\cite{HZ79} in 1979, where they
studied the competitive equilibrium from equal incomes (CEEI) mechanism.
This mechanism is ``fair" by the equal incomes assumption. It is also {\em envy-free} but not strategy-proof and, indeed, early work in the economics community focused on the structural and incentive properties of matching mechanisms. For example, Zhou~\cite{Z90} gave an impossibility result showing the non-existence
of a mechanism that is simultaneously strategy-proof, pareto optimal, and symmetric.
See~\cite{AS13,SU11} for surveys on the one-sided matching problem and on matching markets
more generally.

Since monetary transfers are typically not allowed in the
one-sided matching problem, it belongs to the field of
mechanism design without money~\cite{PT09}. 
A folklore mechanism in this realm is Random Priority (RP). 
Applied to the one-sided matching problem, this mechanism orders the agents
uniformly at random. The agents then, in turn, select their favorite item that has not 
previously been selected. 
This mechanism, also popularly known as 
Random Serial Dictatorship (RSD)~\cite{AS98}, is strategy-proof. 

Another prominent mechanism is Probabilistic Serial (PS), introduced by Bogomolnaia and Moulin~\cite{BM01} in 2001. This is a ``consumption'' mechanism: to begin, every agent {\em consumes} their favorite item at the same {\em consumption rate}. When the favorite item of an agent is completely consumed
(that is, together all the agents have consumed exactly one unit of that item)
then this agent switches to consume its next favorite item, etc.
Since its discovery, Probabilistic Serial has become the most well-studied mechanism for the one-sided matching problem.
It has many desirable properties such as envy-freeness and ordinal efficiency when the agents are truthful~\cite{BM01}. 
However, unlike Random Priority, it is not strategy-proof and some of its desirable properties fail to hold when the agents are strategic~\cite{EK16}.
Several extensions to the mechanism have been proposed; see, for example,
\cite{KS06,BCKM13,ASS20}. Aziz et al. studied the manipulability of Probabilistic Serial~\cite{AGM15b} and proved the existence of pure strategy Nash equilibria under the mechanism~\cite{AGM15a}. 

An important recent line of research in the computer science community has been to quantify the 
social welfare of allocations induced by a mechanism in comparison to the optimal obtainable social welfare.
Two approaches abound in the literature~\cite{FFZ14,Z22}. First is the {\em approximation ratio}, where agents are assumed to report truthfully to the mechanism. Second, and more interestingly from a game-theoretic perspective, is the {\em price of anarchy}, where agents are assumed to be strategic~\cite{CFF15}. However, for mechanism design without money, these measures are of little interest without a normalization assumption. As a result, the standard normalization assumption~\cite{BCH12,CKK12,CFF15,FFZ14,Z22} is that the valuation function of each 
agent is {\em unit-sum}. Specifically, agent $i$ has a non-negative value $v'_i(j)$ for item $j$ and $\sum_{j} v'_i(j)=1$.
Under the unit-sum assumption, a breakthrough result of Christodoulou et al.~\cite{CFF15} is that 
price of anarchy is $\Theta(\sqrt{n})$ for both the Random Priority and Probabilistic Serial
mechanisms for the one-sided matching problem.

We remark that both Random Priority and Probabilistic Serial have the characteristic that they are
{\em ordinal mechanisms}. Specifically, rather than requiring the entire valuation function
of each agent, they need only the preference ordering on the items induced by the valuation function.
Interestingly, despite being ordinal mechanisms, these bounds are the best possible even
for the broader class of {\em cardinal mechanisms} where agents are required to submit their
entire valuation function~\cite{CFF15}.

From a practical perspective, unit-sum valuation constraints have wide application. One topical example is sports drafts, such as salary caps in US major sports leagues or the Indian Premier League auction in cricket. Here a unit-sum restriction on bids or salaries imposes equity across teams. Moreover, cardinal mechanisms are preferable as they allow teams to specialize their strategies, such as bidding aggressively for a few specialist positions or bidding conservatively across many positions to build team strength. 

\subsection{Overview and Results}
The aim of this work is to extend the study of one-sided allocation problems
beyond matchings to general allocations. Ergo, we consider asymmetric allocation problems
and allow for agents with multi-unit demand valuation functions rather than
unit demand valuations. 
In particular, we desire a mechanism with provably good social welfare guarantees.
To do this, we primarily focus on cardinal mechanisms rather than ordinal mechanisms. Specifically, we design a cardinal variant of the Probabilistic Serial
(recall that Probabilistic Serial is an ordinal mechanism): at any point in time each agent simultaneously  consumes {\em multiple} items, with the consumption rates of the items weighted in accordance with the cardinal valuation function of the agent.
We call this the {\em Cardinal Probabilistic Serial} (CPS) mechanism and define it formally in Section~\ref{sec:CPS} along with examples.

In Section~\ref{sec:technical} we present structural theorems for the Cardinal Probabilistic Serial mechanism. 
In Section~\ref{sec:POA} we prove our main results.
First, we use our structural theorems to show that
the Cardinal Probabilistic Serial mechanism has a price of anarchy of
$O(\sqrt{n}\cdot \log m)$ for the asymmetric one-sided allocation problem and multi-unit demand
agents with additive unit-sum valuations.
The methodology we develop also applies to the ordinal Probabilistic Serial mechanism giving it the same price of anarchy bound for
multi-unit demand agents.\footnote{We remark that the proof of~\cite{CFF15} for Probabilistic Serial does not apply with multi-unit demand agents, even in the simple symmetric ($m=n$) setting.}
Second, we prove a lower bound of 
$\Omega(\sqrt{n})$ on the price of anarchy for any ``fair" mechanism, where a mechanism is deemed fair if each agent obtains the same number of items in expectation (as is the case for RP, PS and CPS). 
Third, we present a more intriguing lower bound: the price of anarchy degrades with the number of items.
Specifically, a lower bound of $\Omega(\min \{n\, , \, \log m\})$ for both Probabilistic Serial 
and Cardinal Probabilistic Serial is shown; thus, a logarithmic dependence on the number of items is necessary for the price of anarchy in the asymmetric one-sided allocation problem.

Finally, we wrap up in Section~\ref{sec:related} with dicussions on (i) the price of stability, (ii) 
the relative merits {\em in practice} of Probabilistic Serial and Cardinal Probabilistic Serial, and (iii) 
the performance of (extensions of) the Random Priority mechanism in the asymmetric one-sided allocation problem.

\section{The One-Sided Allocation Problem}
In this section we present the asymmetric one-sided allocation problem with multi-unit demand agents. 
There is a set $I$ of $n$ agents and a set $J$ of $m$ items.
Each agent $i\in I$ has a non-negative value $v'_i(j)$ for item $j\in J$.
The agents have additive {\em multi-unit demands}: agent $i$ has a value $v'_i(S)=\sum_{j\in S} v'_i(j)$ for any collection $S\subseteq J$ of items.\footnote{In contrast, for a {\em unit demand} agent $i$, we have $v'_i(S)=\max_{j\in S} v'_i(j)$.}
Furthermore, we assume that valuation functions are {\em unit-sum}, that is,
$\sum_{j\in J} v'_i(j) =1$ for every agent $i$.
Denote by~$\valInd$ the set of unit-sum valuation functions.

Our focus is on direct revelation mechanisms. Given a unit-sum valuation function $v'_i$, agent $i$ can report to the allocation mechanism $M$ a, possibly non-truthful, unit-sum valuation function $v_i$.
We denote the space of feasible reports the mechanism may receive by $\valGroup\equiv \valInd^n$.
Given a set $v=\{v_1,v_2,\dots v_n\}\in \valGroup$ of reported valuations, let
$v_{-i}=\{v_1, \dots, v_{i-1}, v_{i+1},\dots v_n\}$ be the set of reported valuations excluding agent $i$.
We define $M(v) = M(\{v_i, v_{-i}\})$ to be the bundle of items allocated to agent $i$ by the mechanism $M$ given the reported valuations $v=\{v_i, v_{-i}\}$. 
We say that $v'_i(M(v))$ is the {\em payoff} to agent $i$, the true value they have for this bundle.
Further, $v_i$ is a {\em best response} to $v_{-i}$ if it maximizes the resultant 
payoff to agent $i$, that is,  $v_i= \text{argmax}_{\hat{v}_i\in \valInd} v'_i( M(\{\hat{v}_i, v_{-i}\}))$. The reported valuation $v$ is a {\em Nash equilibrium} if $v_i$ is 
best response to $v_{-i}$, for every agent $i\in I$.
Denote by $NE(v')$ the set of valuations which are Nash equilibria with respect to the true valuations $v'=\{v'_1, v'_2,\dots, v'_n\}$.

The {\em social welfare} of the allocation given by the mechanism
is $\sum_{i\in I} v'_i(M(v))$.
Observe that, for additive valuation functions, the social welfare is maximized by simply assigning each item to
the agent that values it the most. Thus the optimal welfare is
$OPT(v')= \sum_{j\in J} \, \max_{i\in I} v_i(j)$.
The {\em price of anarchy} is the worst-case ratio between the optimal welfare
and the social welfare of the {\em worst} Nash equilibrium, namely 
$\sup_{v'}\, \sup_{v\in NE(v')} \, \frac{OPT(v')}{\sum_{i\in I} v'_i(M(v))}$.
Similarly, the {\em price of stability} is the worst-case ratio between the optimal welfare and the social welfare of the {\em best} Nash equilibrium.

\section{The Cardinal Probabilistic Serial Mechanism}\label{sec:CPS}
We are now ready to present our allocation mechanism. We generalize the ordinal mechanism Probabilistic Serial to a cardinal mechanism.
In this consumption mechanism, {\em Cardinal Probabilistic Serial} (CPS), at any time the agents simultaneously consume multiple items rather than just their most preferred remaining item.\footnote{This idea is analogous to the spreading of bids over items in the CEEI mechanism and in trading post games.}
Specifically, at any time, the total consumption rate (speed) of an agent over all items is one but this consumption rate is split amongst the remaining items in proportion to their value to the agent.
Let's now formalize the mechanism.

\subsection{A Cardinal Variant of Probabilistic Serial}

Let $v=\{v_1,v_2,\dots v_n\}\in \valGroup$ be the reported unit-sum valuations.
Each item has a size (quantity) of one unit, and each
agent $i$ has a consumption rate of $1$ at any time. 
At time $t\in\clInt{0}{\frac{m}{n}}$,\footnote{At time $\frac{m}{n}$ all the items have been consumed because there are $m$ units to consume and each of the $n$ agents each consume at a rate of 1.} let $\remSet{v}{t}$ be the set of {\em remaining} items, that is, the items that have not yet been entirely consumed.
Each agent partitions its consumption over the remaining items in proportion to their values. Specifically, at time $t$, the consumption rate of an item $j\in\remSet{v}{t}$ by agent $i$ is denoted  $\conRate{i}{j}{v}{t}$. Formally, if $j\in \remSet{v}{t}$ and $\exists \ell \in \remSet{v}{t}$ such that $v_{i}(\ell)>0$ 
then: 
\begin{equation*}
\conRate{i}{j}{v}{t} 
\, =\, \frac{v_{i}(j)}{\sum\limits_{\ell \in \remSet{v}{\tau}} v_{i}(\ell)}.
\end{equation*}

If $j$ has already been entirely consumed by time $t$, that is, $j\notin \remSet{v}{t}$, then $\conRate{i}{j}{v}{t}=0$. However, if the agent has no value for any of the remaining items then any partition of the consumption rate over the remaining items is allowed and is consistent. 
For our analysis, if this situation ever arises then we assume the consumption rates are chosen adversarially. 

Let $q\colon [m]\times \clInt{0}{\frac{m}{n}}\times \valGroup\rightarrow [0,1]$ be the function denoting the quantity of an item $j$ remaining at time $t$ given the strategies $v$.
Thus:
\begin{equation*}
\remAmm{j}{v}{t}
\, =\, 1-\int_{\tau=0}^t \left(\sum_{i\in I} \conRate{i}{j}{v}{\tau}\right)d\tau
\end{equation*} 

Observe
$j\in \remSet{v}{t}$ if and only if $\remAmm{j}{v}{t}>0$; that is, item $j$ is available at time $t$ if and only if a positive quantity of the good remains.
In particular, at time $0$ we have $\remAmm{j}{v}{0}=1$ and $\remSet{v}{0}=J=[m]=\{1,2,\dots,m\}$. Whilst at time $\frac{m}{n}$ we have $\remAmm{j}{v}{t}=0$ and $\remSet{v}{t}=\emptyset$.
We say that the {\em consumption time} of item $j$ is the earliest time $t$ at which $\remAmm{j}{v}{t}=0$.

We then allocate the items to the agents as follows.
Let $\gamma_{i,j}$ to be the total amount agent $i$ consumed of item $j$.
Then item $j$ is randomly allocated to agent $i$ with probability $\gamma_{i,j}$.

We remark that Cardinal Probabilistic Serial does generalize Probabilistic Serial. Specifically, 
Lemma~\ref{lem:epsilon} 
in the Appendix formally shows how this cardinal mechanism can simulate 
the ordinal mechanism Probabilistic Serial.

\subsection{Examples}
The Cardinal Probabilistic Serial mechanism is easy to understand with some examples. 

\begin{example}
First, consider the valuations in Figure~\ref{fig:valuations},
for three agents (A, B, C) and three items (1, 2, 3).

\begin{figure}[!ht]
  \centering
  \includegraphics[width=0.2\linewidth]{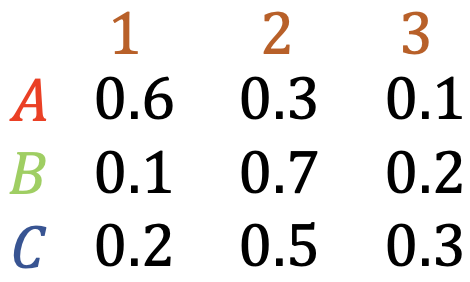}
  \caption{Reported valuations for the three agents.}
  \label{fig:valuations}
\end{figure}

At time $t=0$ the agents consume the items in proportion to their valuations. For example, agent A has a consumption rate of $0.6$ for the item 1, 
$0.3$ for the item 2 and $0.1$ for the item 3.
An important observation is that the consumption rates depend only on the set of remaining items $\remSet{v}{t}$. In particular, the consumption rates remain constant until the
next item has been entirely consumed.
This is at time $t_1=\frac23$, the consumption time of item 2, because the total consumption rate of item 2 until this time is $0.3+0.7+0.5=1.5$. The situation at this time is illustrated in Figure~\ref{fig:t1}. Here agents $A, B, C$ have the colours red, green and blue, respectively, and height in the bar chart represents the amount of an item consumed.

\begin{figure}[!ht]
  \centering
  \includegraphics[width=0.8\linewidth]{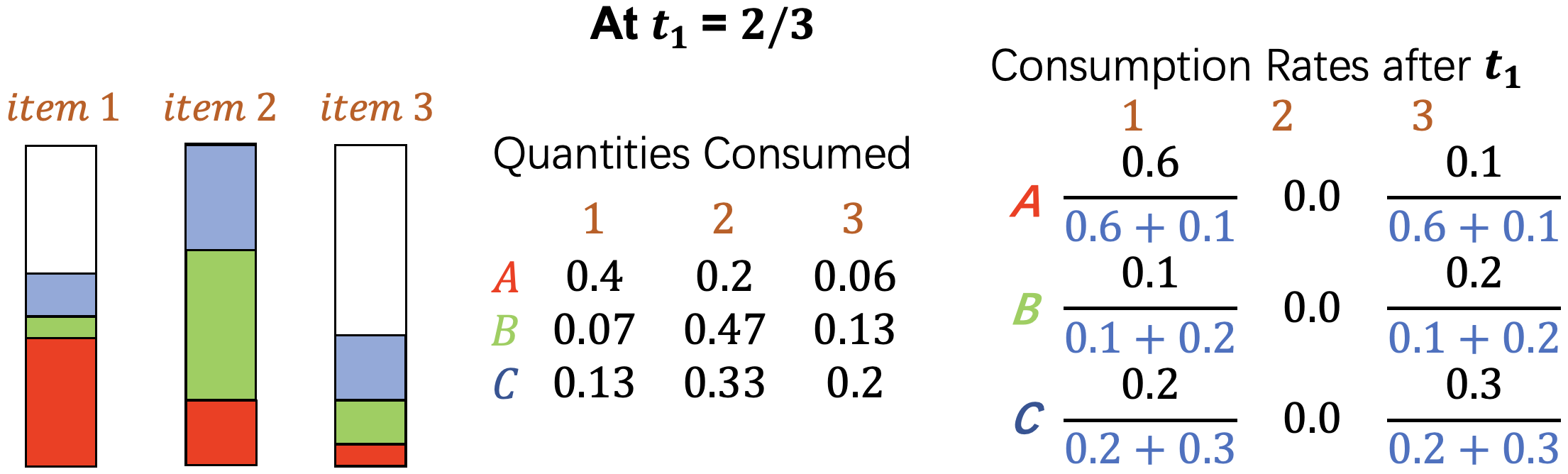}
  \caption{The situation at the consumption time of item 2.}
  \label{fig:t1}
\end{figure}

Because item 2 is no longer available after $t_1=\frac23$, the consumption rates are now updated. 

For example, the consumption rates of agent B for items 1 and 3 are $\frac13$ and $\frac23$, respectively, because its valuations for these 
items are $0.1$ and $0.2$, respectively.
These consumption rates are constant until the consumption time of item~1 at $t_2 \approx 0.92$. 
At this time the amount each agent has consumed is
illustrated in Figure~\ref{fig:t2}. 

\begin{figure}[!ht]
  \centering
  \includegraphics[width=0.8\linewidth]{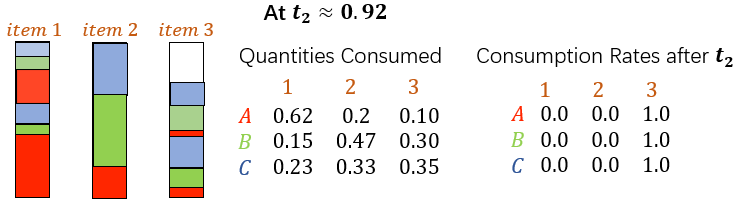}
  \caption{The situation at the consumption time of item 1.}
  \label{fig:t2}
\end{figure}

Now only item 3 remains so each agent consumes it at rate $1$. The consumption time of this last item is $t=1$ and the algorithm terminates.

\begin{figure}[!ht]
  \centering
  \includegraphics[width=0.5\linewidth]{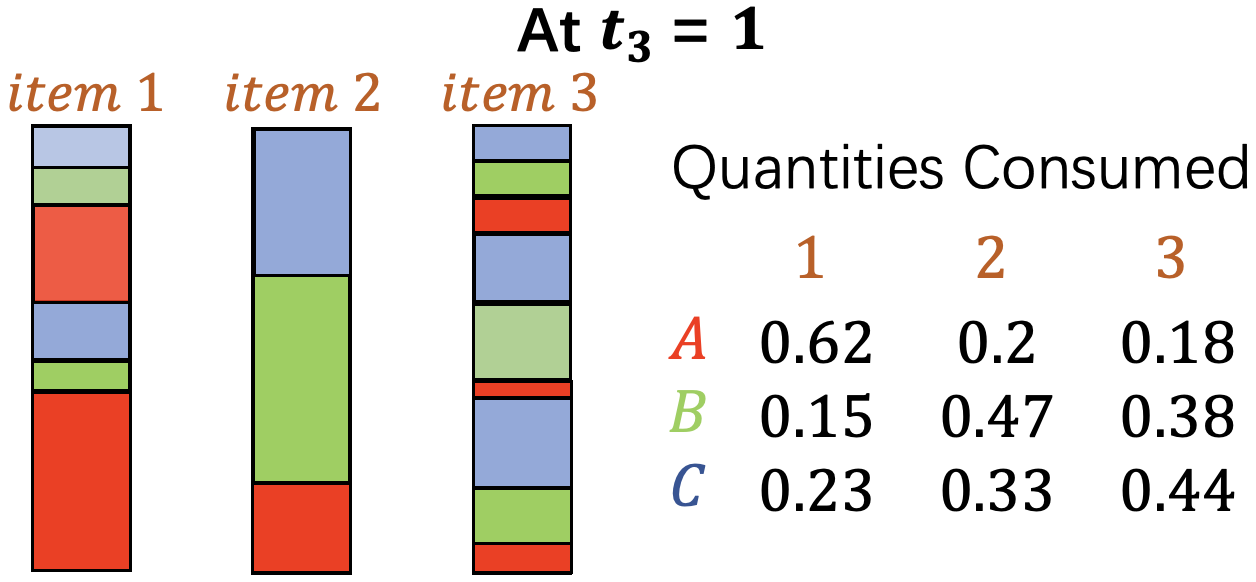}
  \caption{The situation at the consumption time of item 3.}
  \label{fig:t3}
\end{figure}

At this point, see Figure~\ref{fig:t3}, the agent A has consumed the item with quantities $(0.62,0.2,0.18)$, respectively. The agent B has consumed the item with quantities $(0.15,0.47,0.38)$. And agent C has consumed the item with quantities $(0.23, 0.33, 0.44)$.
Thus, item $1$ is assigned to agents A, B and C with probabilities $0.62, 0.15$ and $0.23$, respectively, etc.    
\end{example}

\begin{example}\label{ex:PO}
Let there be two agents (A and B) and two items (1 and 2).
Let $v'_A=(\frac23, \frac13)$ and $v'_B=(\frac13, \frac23)$.
Thus agent A prefers item 1 and the agent B prefers item 2. If the agents report truthfully $v=v'$
then agent A obtains the bundle $(\frac23, \frac13)$
for a payoff of $(\frac23)^2 + (\frac13)^2=\frac59$. Similarly the 
agent B obtains the bundle $(\frac13, \frac23)$
for a payoff of $(\frac13)^2 +(\frac23)^2 =\frac59$. But this is not an equilibrium. In particular, if the agent A deviates and reports $(1,0)$
then it will obtain the bundle $(\frac34,\frac14)$ for a payoff of $\frac34\cdot \frac23+ \frac14\cdot \frac13=\frac{7}{12}> \frac59$.
    
\end{example}

\subsection{The Social Welfare of Equilibria}
Example~\ref{ex:PO} shows that Cardinal Probabilistic Serial, like Probabilistic Serial, is not strategy-proof and motivates studying strategic agents 
and Nash equilibria under this mechanism.
We are especially interested in calculating the price of anarchy of
the mechanism.
For the ordinal mechanism Probabilistic Serial the price of anarchy is
known for the one-sided matching problem due to the work Christodoulou et al~\cite{CFF15}. 
\begin{theorem}\cite{CFF15}\label{PS-Anarchy}
For the one-sided matching problem with unit-sum valuations, the price of anarchy of Probabilistic Serial is $\Theta(\sqrt{n})$.
\end{theorem}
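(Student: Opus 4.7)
My plan is to prove the matching upper and lower bounds of $\Theta(\sqrt{n})$ separately. The upper bound is driven by a ``smoothness''-style deviation argument exploiting the Nash property; the lower bound is witnessed by an explicit adversarial instance.

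For the upper bound $O(\sqrt{n})$, I would fix a Nash equilibrium $v$ with true valuations $v'$, and write $u_i = v'_i(M(v))$ for agent $i$'s equilibrium payoff. The first building block is a single-item deviation: if agent $i$ reports all mass on item $j$, then $i$ consumes $j$ at rate $1$ while the total consumption rate on $j$ is at most $n$, so $j$ survives for at least time $1/n$ and $i$ captures at least a $1/n$-fraction of $j$. By the Nash property, $u_i \ge v'_i(j)/n$ for every $j$. This single-item bound is too weak on its own. The main step considers, for a parameter $k$ to be chosen later, a deviation that spreads $i$'s bid uniformly over a carefully selected set $S_i$ of $k$ high-value items (for instance, the items where $i$ attains the maximum in $OPT = \sum_j \max_{i'} v'_{i'}(j)$, padded up to size $k$). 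Tracking the PS consumption curves under this deviation gives a lower bound on $u_i$ roughly of the form $\frac{1}{k}\sum_{j \in S_i} v'_i(j)/(1+\alpha_j)$, where $\alpha_j$ measures the competition other agents direct at item $j$ in equilibrium. Summing over $i$ and applying Cauchy--Schwarz to the competition terms (using the unit-sum identity $\sum_{i,j} v'_i(j) = n$) balances the two contributions at $k \asymp \sqrt{n}$ and yields $\sum_i u_i \ge \Omega(OPT/\sqrt{n})$.

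For the lower bound $\Omega(\sqrt{n})$, take $n = k^2$ agents and items, split into two groups. Agents $1,\dots,k$ are ``specialists'': agent $i$ has $v'_i(i) = 1$ and $v'_i(j) = 0$ for $j \neq i$. Agents $k+1,\dots,n$ are ``generalists'': each has $v'_i(j) = 1/k$ for $j \in \{1,\dots,k\}$ and $0$ otherwise. The optimum pairs each specialist with her private item to obtain welfare $k = \sqrt{n}$. Under truthful reporting, the total consumption rate on item $j \le k$ is $1 + (n-k)/k = \sqrt{n}$, so item $j$ finishes at time $1/\sqrt{n}$. Each specialist receives $1/\sqrt{n}$ of her private item for payoff $1/\sqrt{n}$, and each generalist receives $1/n$ of each of the $k$ contested items for payoff $1/n$. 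The total welfare is $k \cdot 1/\sqrt{n} + (n-k)\cdot 1/n = O(1)$, giving a ratio of $\Omega(\sqrt{n})$. The remaining check is that truthful reporting is a Nash equilibrium: a specialist's consumption rate on her own item is maximized by the unit-sum bid concentrated on it (a direct convexity check), and any generalist deviation that concentrates mass on a contested item leaves the total rate on that item essentially unchanged, so her captured fraction (and hence payoff) cannot exceed $1/n$.

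The technical crux is the upper bound: the sets $S_i$ must be chosen adaptively so that they jointly cover a constant fraction of $OPT$ while the competition terms $\alpha_j$ remain controlled on average. The two requirements pull in opposite directions---shrinking the $S_i$ concentrates consumption but worsens coverage, while enlarging them improves coverage but dilutes the per-item rate---and making this trade-off tight enough that the final exponent is exactly $1/2$, rather than a weaker $2/3$ or $1$, is where the Cauchy--Schwarz estimate has to be performed with care. I would expect most of the effort to be spent there.
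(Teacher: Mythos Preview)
The paper does not prove this theorem; it is quoted from \cite{CFF15} as background, so there is no in-paper proof to compare your proposal against. That said, your proposal has a genuine gap worth naming.

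Your upper-bound argument rests on a deviation that ``spreads $i$'s bid uniformly over a set $S_i$ of $k$ items'' and then extracts a bound of the shape $\frac{1}{k}\sum_{j\in S_i} v'_i(j)/(1+\alpha_j)$. But Probabilistic Serial is an \emph{ordinal} mechanism: an agent reports a preference order and, at each instant, consumes a single item at rate $1$. There is no way to split consumption simultaneously across several items, so the uniform-spread deviation is simply not a legal strategy in PS, and the displayed bound has no derivation. (Indeed, the present paper introduces Cardinal Probabilistic Serial precisely to make such spread strategies available; see the uniform bidding in Theorem~\ref{thm:main-case-2}.) Your Cauchy--Schwarz balancing step therefore has nothing to stand on.

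The approach actually taken in \cite{CFF15}, and extended here in Lemmas~\ref{lem:minimal-t}--\ref{lem:lower-bound}, is different and uses only ordinal deviations. For unit-demand agents one deviates \emph{single-mindedly} on the one item $j_i$ assigned to $i$ in the optimum, shows that the consumption time of $j_i$ after the deviation is at least a constant fraction of its equilibrium consumption time $t_{j_i}$ (the content of Lemma~\ref{lem:Single-target}), and hence $u_i \ge c\, t_{j_i}\, v'_i(j_i)$. Sorting by consumption time, invoking $t_j \ge j/n$ (Lemma~\ref{lem:time}), and balancing against a trivial constant lower bound on equilibrium welfare gives $O(\sqrt{n})$ without any Cauchy--Schwarz. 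Your single-item observation $u_i \ge v'_i(j)/n$ is the right starting point; the missing sharpening is to replace $1/n$ by the consumption time $t_j$.

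Your lower-bound instance is not the one used in \cite{CFF15} or in the paper's Theorem~\ref{Anarchy-Alloc} (which perturbs the uniform $1/n$ profile), and your equilibrium check is incomplete: in PS each generalist consumes \emph{one} contested item at rate $1$, not all $k$ of them at rate $1/k$, so ``truthful reporting'' does not by itself produce the symmetric rates you assert. You must specify tie-breaking so that exactly $(n-k)/k$ generalists start on each contested item, and then verify that no generalist profits by reordering---in particular, a generalist who moves to a less-contested item and then hops to another after the first is exhausted might do better. The construction can likely be salvaged, but as written this is a sketch, not a proof.
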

In fact, this guarantee extends beyond Nash equilibria to coarse correlated equilibria and to Bayesian settings.
Furthermore, they show this guarantee is the best possible.
\begin{theorem}\cite{CFF15}\label{Anarchy-Gen}
For the one-sided matching problem with unit-sum valuations, the price of anarchy of any unit-sum mechanism is $\Omega(\sqrt{n})$.
\end{theorem}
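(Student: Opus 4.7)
The statement is a universal lower bound that must hold for every unit-sum mechanism. My plan is, for an arbitrary such mechanism $M$, to exhibit a true-valuation profile $v'$ and a Nash equilibrium $v\in NE(v')$ of $M$ on that instance for which $OPT(v')/\sum_i v'_i(M(v)) = \Omega(\sqrt{n})$.

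First I would \emph{symmetrize} $M$ without loss of generality. Pre- and post-composing $M$ with a uniformly random permutation of the agent labels and (independently) of the item labels yields a new mechanism with the same worst-case price of anarchy (an adversarial instance for the symmetrized mechanism can be unsymmetrized to an equally bad instance for the original). Thus I may assume $M$ treats both agents and items symmetrically. One immediate consequence is that if all $n$ agents submit identical reports, then every agent receives every item with probability $1/n$, which already pins down the behaviour on ``flat'' reports. Second I would choose the instance $m=n$ with true valuations that concentrate mass $1-(n-1)\varepsilon$ on each agent's favourite item (a perturbation of the identity assignment), so that $OPT(v')=n-O(n^2\varepsilon)=\Theta(n)$. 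Within this instance I would identify a Nash equilibrium $v$ in which every agent reports a valuation uniformly spread across a set $S_i$ of size $\Theta(\sqrt{n})$ that contains their true favourite. The unit-sum constraint combined with symmetry forces a pigeonhole bound: if the mechanism is to respect each agent's reported $1/\sqrt{n}$ claim on item $i$, it can deliver item $i$ to agent $i$ with probability at most $O(1/\sqrt{n})$. So the expected payoff to each agent is $O(1/\sqrt{n})$, the total welfare is $O(\sqrt{n})$, and the ratio is $\Omega(n/\sqrt{n})=\Omega(\sqrt{n})$.

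The main obstacle is verifying that the proposed diffused profile really is a Nash equilibrium, and that this certification works against \emph{every} symmetric mechanism simultaneously. Directly constructing best responses fails because distinct mechanisms can respond very differently to unilateral deviations. The natural workaround is a fixed-point argument restricted to the symmetric (diffused) strategy class: this class is compact and convex, and a suitable best-response correspondence on it can be shown to admit a fixed point, which is automatically a Nash equilibrium. The upper bound on welfare above applies to every profile in the diffused class, so at whichever fixed point exists, the welfare is $O(\sqrt{n})$. The unit-sum normalization is the essential lever: it prevents any agent from unilaterally capturing their favourite with high probability, because increasing the reported weight on item $i$ necessarily shifts weight off other items and, under a symmetric $M$, the competition from the other $n-1$ agents (who together already place $\Theta(\sqrt{n})\cdot(n-1)/n$ units of claim on item $i$) bounds the marginal gain. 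This delicate quantitative trade-off is what I would expect to be the most intricate step of the argument.
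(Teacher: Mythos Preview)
The paper does not prove this theorem itself (it is cited from~\cite{CFF15}), but the construction is recalled in the proof of Theorem~\ref{Anarchy-Alloc}, and that construction is fundamentally different from yours. The key idea there is a \emph{two-step} construction: first fix a near-uniform profile $v$ in which every value is $\frac{1}{n}\pm\varepsilon$, with $\sqrt{n}$ agents slightly preferring each item; take \emph{any} Nash equilibrium of $M$ under these true valuations $v$; then, for each item $j$, identify the agent $i_j$ among those preferring $j$ who wins $j$ with the smallest probability (at most $1/\sqrt{n}$ by pigeonhole), and change only these agents' \emph{true} valuations to be single-minded for $j$, yielding $v'$. The same strategy profile remains a Nash equilibrium under $v'$, because for an agent whose true values are $\frac{1}{n}\pm\varepsilon$ and who receives exactly one item, maximising expected payoff is equivalent to maximising the probability of winning the slightly preferred item---so the best-response condition is identical under $v$ and under $v'$. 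This sidesteps the need to construct an equilibrium from scratch.

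Your proposal has a genuine gap at precisely the step you flag as ``most intricate.'' A fixed point of the best-response correspondence \emph{restricted to the diffused strategy class} is not a Nash equilibrium of the full game: an agent may have a profitable deviation to a strategy outside the class (for instance, a single-minded report), and nothing in your argument rules this out uniformly across all symmetric mechanisms. Separately, your welfare bound is not established. You assert that symmetry forces $p_{i,i}\le O(1/\sqrt{n})$ because $\sqrt{n}$ agents each ``claim'' item $i$, but in your diffused profile the sets $S_{i'}$ containing $i$ are not in symmetric positions relative to item $i$ (e.g., under a cyclic construction, item $i$ is the ``first'' element of $S_i$ but the ``second'' of $S_{i-1}$), so an anonymous mechanism is free to assign item $i$ to agent $i$ with probability~$1$. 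The \cite{CFF15} trick avoids both problems: by taking a pre-existing equilibrium and altering only the true valuations, no equilibrium needs to be built, and the $1/\sqrt{n}$ bound comes from averaging actual allocation probabilities over the $\sqrt{n}$ agents who genuinely prefer item $j$, not from any assumption about how the mechanism respects reports.
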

The aim of this paper is to extend to results of~\cite{CFF15} to the
asymmetric one-sided allocation problem with multi-unit demand valuations using the Cardinal Probabilistic Serial mechanism.
This we achieve in Section~\ref{sec:UB} with our main positive result:

\begin{restatable}{theorem}{CPSAnarchy}\label{thm:CPS-Anarchy}
    For the asymmetric one-sided allocation problem with multi-unit demand agents, the price of anarchy 
    of Cardinal Probabilistic Serial is $O(\sqrt{n}\cdot \log m)$.
\end{restatable}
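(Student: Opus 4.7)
The plan is to extend the deviation-based arguments of Christodoulou et al.~\cite{CFF15} to the cardinal, multi-unit-demand setting by introducing a dyadic partition of items according to their equilibrium consumption times. Fix a Nash equilibrium $v \in NE(v')$ with true payoffs $u_i = v'_i(M(v))$ and, for each item $j$, let $\NETime{j}$ denote its consumption time under the CPS trajectory induced by $v$. Let $\pi\colon J \to I$ be an optimal assignment, so that $OPT(v') = \sum_j v'_{\pi(j)}(j)$, and put $O_i = \pi^{-1}(i)$. The Nash condition gives $u_i \geq v'_i(M(\hat v_i, v_{-i}))$ for every deviation $\hat v_i \in \valInd$; the proof will instantiate this inequality for carefully chosen deviations and sum the resulting inequalities to bound $OPT(v')/\sum_i u_i$.

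Partition the items into dyadic buckets $B_k = \{j : \NETime{j} \in (2^{-k-1}, 2^{-k}]\}$; since $\NETime{j} \in [1/n, m/n]$ only $O(\log m)$ buckets are nonempty. It is precisely the wide spread in consumption times, unavoidable in the asymmetric multi-unit-demand regime, that forces the logarithmic factor. Within a fixed bucket $B_k$, the items have comparable consumption time, and the problem reduces in spirit to a single-scale matching instance. For each agent $i$, consider deviations $\hat v_i$ supported on $O_i \cap B_k$, for example, placing all weight on a single well-chosen item of $O_i \cap B_k$. Using the structural theorems of Section~\ref{sec:technical} to lower bound the amount of item $j$ consumed by $i$ under such a deviation in terms of $\NETime{j}$, one then partitions agents by whether their share of $OPT$ inside $B_k$ exceeds a threshold $\alpha \approx 1/\sqrt{n}$ and balances the two cases via the unit-sum constraint together with Cauchy--Schwarz, exactly as in the proof of Theorem~\ref{PS-Anarchy}. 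Summing the resulting bounds over the $O(\log m)$ buckets then yields $OPT(v') = O(\sqrt{n} \cdot \log m) \cdot \sum_i u_i$.

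The main obstacle lies in the per-bucket deviation analysis under multi-unit demand. For unit demand, when an agent deviates by ranking $j$ first, its share of $j$ is determined by a simple one-dimensional race against the other agents who also rank $j$ highly; the resulting payoff bound scales cleanly with $\NETime{j}$. Under cardinal multi-unit demand, however, $i$'s consumption of any particular item in the deviation depends on the residual rates of all other agents at all times, and these rates themselves depend on the dynamically evolving set of remaining items. The role of the structural theorems of Section~\ref{sec:technical} is to supply the monotonicity and comparison lemmas needed to decouple the single-item (or single-bucket) bound from the full multi-item trajectory. A further complication is that items in different buckets interact dynamically, since a later bucket may begin being consumed while an earlier bucket is still active, so the deviation argument must carefully identify which items are \emph{effectively remaining} at each scale, rather than relying on a static partition of the timeline.
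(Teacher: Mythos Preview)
Your proposal has a genuine gap. The structural lemmas in Section~\ref{sec:technical}---in particular Lemma~\ref{lem:Single-target} and Lemma~\ref{lem:lower-bound}---apply only to items whose equilibrium consumption time is at most $1$; they cannot be invoked for buckets $B_k$ with $\NETime{j}>1$, which is precisely the regime that distinguishes the asymmetric case $m\gg n$ from the symmetric one. More fundamentally, the single-item deviation you suggest (``placing all weight on a single well-chosen item of $O_i\cap B_k$'') cannot recover an agent's contribution to $OPT$ inside a bucket when that agent is assigned many items there: a single-minded deviation on $j$ yields at best $u_i \ge c\cdot v'_i(j)$, which controls only $\max_{j\in O_i\cap B_k} v'_i(j)$ and not the sum $\sum_{j\in O_i\cap B_k} v'_i(j)$. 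This is exactly the obstacle that separates multi-unit demand from unit demand, and the Cauchy--Schwarz/threshold balancing you invoke (which is the unit-demand argument of~\cite{CFF15}, not the proof of Theorem~\ref{thm:PS-Anarchy}, which in the paper is derived \emph{from} Theorem~\ref{thm:CPS-Anarchy}) does not close it.

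The paper's proof handles the two time regimes with two different \emph{multi-item} deviation strategies. For items with $\NETime{j}\le 1$ it uses sequential bidding (Lemma~\ref{lem:lower-bound}) and feeds the resulting telescoping bound into an optimization program; the $\sqrt{n}\log n$ factor there comes from a change of variables and a grouping of agents by the \emph{size} $\ell_i$ of their optimal bundle (Theorem~\ref{thm:sym}), not from bucketing by consumption time. For items with $\NETime{j}>1$ the paper does use a dyadic partition as you propose, but the crucial deviation is the cardinal-specific \emph{uniform bidding} strategy of Theorem~\ref{thm:main-case-2}, which lets an agent target up to $\lfloor \NETime{j}\rfloor$ items simultaneously and secure a constant fraction of each. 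This is the missing ingredient that supplies the per-bucket bound in the high-time regime; without an analogue of it, the $\sqrt{n}$ factor per bucket in your outline does not follow.
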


Moreover, we then show in Section~\ref{sec:UB-PS} that this upper bound on the price of anarchy also applies the the standard Probabilistic Serial mechanism. 

\begin{restatable}{theorem}{PSAnarchy}\label{thm:PS-Anarchy}
    For the asymmetric one-sided allocation problem with multi-unit demand agents, the price of anarchy 
    of Probabilistic Serial is $O(\sqrt{n}\cdot \log m)$.
\end{restatable}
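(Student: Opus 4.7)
The plan is to adapt the proof of Theorem~\ref{thm:CPS-Anarchy} to the ordinal mechanism PS, exploiting the fact that the CPS argument decomposes cleanly into (i) a ``mimicking'' deviation argument that invokes the Nash condition, and (ii) a combinatorial bound on consumption times that is essentially mechanism-agnostic. For PS I would construct the analogous ordinal deviation: each agent $i$ deviates to an ordering $\hat v_i$ that lists the items $S_i$ assigned to $i$ in the welfare-maximizing allocation at the top, sorted by true value $v'_i$. The Nash condition at a PS equilibrium $v$ then lower bounds the equilibrium payoff of $i$ by the payoff it receives under $(\hat v_i, v_{-i})$ in PS, and summing these per-agent lower bounds is what I would use to lower bound the social welfare at equilibrium.

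To carry this out, I would first reprove the structural statements of Section~\ref{sec:technical} in the PS regime. The identities that express an agent's consumption of an item $j$ in terms of proportional rates under CPS simplify in PS to the statement that agent $i$ consumes $j$ at rate $1$ while $j$ is the top-ranked remaining item under $\hat v_i$ and at rate $0$ otherwise. With this replacement, the dyadic partitioning of items by true value that yields the $\log m$ factor, and the Cauchy--Schwarz-style consolidation that yields the $\sqrt n$ factor, should both carry over essentially unchanged from the CPS proof.

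The main obstacle is the \emph{delay accounting}. In CPS, an agent under $\hat v_i$ consumes all items of $S_i$ simultaneously at rates proportional to their reported values, giving a clean per-item payoff lower bound that integrates directly. In PS, the items of $S_i$ are consumed strictly sequentially, so a low-ranked item in $S_i$ may be reached only after substantial time has elapsed, by which point other agents may already have exhausted it. The key observation to push through is that any time agent $i$ spends consuming an earlier, higher-true-value item of $S_i$ under $\hat v_i$ still contributes a payoff at least as large as what the later items of $S_i$ would have yielded, since $\hat v_i$ orders $S_i$ by $v'_i$. Hence the aggregate payoff from $S_i$ under the deviation is not too small relative to $v'_i(S_i)$. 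Once this accounting replaces the proportional-rate identity used in the CPS proof, the same chain of inequalities as in Theorem~\ref{thm:CPS-Anarchy} yields the $O(\sqrt{n}\cdot \log m)$ bound on the PoA of PS.
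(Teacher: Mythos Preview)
Your proposal correctly identifies that the sequential-deviation (mimicking) part of the CPS argument transfers to PS, and in fact the paper notes exactly this: since sequential and single-minded bids are ordinal, Lemmas~\ref{lem:minimal-t}, \ref{lem:Single-target}, and~\ref{lem:lower-bound} apply verbatim to PS, and hence so does the symmetric analysis of Theorem~\ref{thm:sym}. But this handles only items whose consumption time is at most~$1$, because Lemma~\ref{lem:Single-target} (the ``lose at most $75\%$'' bound) is stated and proved only under the hypothesis $\NETime{j}\le 1$. In the asymmetric setting with $m\gg n$, most of the optimal welfare can sit on items with consumption time far exceeding~$1$, and your delay-accounting argument gives no control there: once $\NETime{j}>1$, a single-minded or sequential deviator need not retain any constant fraction of item~$j$.

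The paper's proof of Theorem~\ref{thm:PS-Anarchy} therefore does \emph{not} reduce to the sequential-deviation argument you outline. It performs a dyadic decomposition of items by \emph{consumption time} (not by true value): $X_0$ contains items consumed by time~$1$, and $X_\ell$ contains items with consumption time in $[2^{\ell-1},2^\ell]$. The $X_0$ case is handled by the sequential argument you describe. For the late buckets $X_\ell$ with $\ell\ge 1$, the CPS proof uses a \emph{uniform bidding} deviation (Theorem~\ref{thm:main-case-2}), but this is a cardinal strategy unavailable in PS. The paper instead proves a PS-specific lemma (Lemma~\ref{lm:UB-PS}): because every nonzero consumption rate in PS equals~$1$, any item with consumption time at least $q$ is \emph{untouched} by the other agents before time $\lfloor q\rfloor-1$; hence an agent who ranks up to $\lfloor q\rfloor-1$ such items first can consume them in their entirety. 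This is the missing idea in your plan, and without it the asymmetric case does not close.
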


These results are tight to within the logarithmic factor. In particular, in Section~\ref{sec:LB} we show that a lower bound of $\Omega(\sqrt{n})$ holds in this setting for {\em any} fair mechanism.
Furthermore, our main negative result is that a logarithmic dependence on $m$ is necessary in the lower bound, 
for both Probabilistic Serial and Cardinal Probabilistic Serial. In particular, the price of anarchy must degrade 
with the number of items. 

\begin{restatable}{theorem}{CPSAnarchyAsym}\label{thm:CPS-Anarchy-m}
    For the asymmetric one-sided allocation problem with multi-unit demand agents, the price of anarchy
    of both Probabilistic Serial and Cardinal Probabilistic Serial is at least $\Omega(\min \{n\, ,\, \log m\})$.
\end{restatable}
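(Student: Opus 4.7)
The plan is to prove the bound by exhibiting, for each $n$ and $m$, an explicit unit-sum multi-unit demand instance and a Nash equilibrium whose expected social welfare is a factor $\Omega(\min\{n,\log m\})$ below the optimum, under both PS and CPS. Since PoA is trivially at most $n$ in this setting, the interesting range is $\log m \leq n$; a separate instance recovers the $\Omega(n)$ factor in the complementary regime by a straightforward extension of Theorem~\ref{Anarchy-Gen} combined with a choice of $m$ exponential in $n$, so the main work is to realise a ratio of $\Omega(\log m)$ when $\log m$ is the bottleneck.

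My construction organises the $m$ items into $n$ disjoint blocks $B_1,\ldots,B_n$ of size $m/n$, and within each block into $L=\lfloor\log_2(m/n)\rfloor$ layers $T_{i,1},T_{i,2},\ldots,T_{i,L}$ of doubling sizes $|T_{i,\ell}|=2^{\ell-1}$. Agent $i$'s true valuation is supported on $B_i$ and assigns every item in $T_{i,\ell}$ the value $(L\cdot 2^{\ell-1})^{-1}$, so that each layer contributes exactly $1/L$ to agent $i$'s unit-sum total. The optimal assignment gives block $B_i$ to agent $i$, yielding $\mathrm{OPT}=n$. For the bad equilibrium I take the symmetric profile in which every agent reports the uniform valuation $v_i(j)=1/m$ (for CPS) or the canonical preference order on $[m]$ (for PS); by symmetry every agent obtains an expected $1/n$-share of every item, contributing $1/n$ to their true payoff and giving social welfare $1$, hence ratio $n$.

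The crux of the proof is verifying that this symmetric profile is in fact a Nash equilibrium, which is where the $\log m$ threshold enters. Freezing the $n-1$ non-deviating agents at rate $(n-1)/m$ on each item, the deviator's best response becomes a continuous scheduling problem over the layered items of $B_i$: by concentrating consumption on the top (smallest, highest-valued) layers, the deviator can grab the singleton layer $T_{i,1}$ nearly entirely, but only by committing a $1/L$-fraction of their consumption budget for a duration of order $L$. Tracking this greedy water-filling across layers, I expect the deviator's total payoff to telescope to $\Theta(\log(m/(nL))/L) + O(1/L)$, which is an $O(1)$ multiple of the baseline $1/n$ precisely when $n = O(L) = O(\log m)$. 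Thus the profile is an equilibrium throughout the regime $n \leq c\log m$, and the resulting ratio of $n$ matches $\Omega(\min\{n,\log m\})$ as claimed.

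The main obstacle will be making the deviation analysis rigorous: CPS dynamics are piecewise linear with breakpoints at consumption times of individual items (in our instance, whole layers), and the best response is a priori an arbitrary unit-sum report that may reshape itself as items disappear. I plan to reduce this to a finite-dimensional optimisation by arguing the best response puts weight only on $B_i$ and is monotone in layer index between breakpoints, then solving the layer-by-layer recursion. The same argument transfers to PS via the infinitesimal perturbation trick of Lemma~\ref{lem:epsilon}, which simulates CPS consumption rates in PS to arbitrary accuracy; since PS's ordinal report space is a strict subset of CPS's cardinal one, the deviator's best-response payoff can only decrease, so the CPS bound transfers to PS without further work.
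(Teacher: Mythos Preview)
Your proposed equilibrium is not an equilibrium, and the construction cannot be repaired while keeping the agents' true valuations disjoint. Concretely: fix the others at the uniform report $1/m$ and let agent $i$ deviate to the report that is uniform on its own block $B_i$ (value $n/m$ on each of the $m/n$ items in $B_i$, zero elsewhere). While all items remain, each $j\in B_i$ is consumed at total rate $(2n-1)/m$ and each $j\notin B_i$ at rate $(n-1)/m$; all of $B_i$ therefore finishes simultaneously at time $m/(2n-1)$, and agent $i$'s share of every $j\in B_i$ is $\frac{n/m}{(2n-1)/m}=\frac{n}{2n-1}$. Hence the deviator's payoff is $\frac{n}{2n-1}\ge \tfrac12$, strictly larger than the baseline $1/n$ for every $n\ge 2$. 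Under PS the situation is worse still: if the other agents all submit the canonical order, agent $n$ can put its block $B_n$ first and consume it entirely uncontested, for payoff $1$. Your layer-by-layer ``water-filling'' analysis only considers sequential top-down deviations and misses these straightforward ones; the layering plays no role in blocking them.

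The structural reason the approach fails is that disjointly supported true valuations give the agents no reason to compete. Any equilibrium in your instance has each agent concentrating on its own block, yielding welfare $\Theta(n)$ and price of anarchy $O(1)$. The paper's construction is built around the opposite idea: it plants $k$ agents who \emph{all} value only item $1$. These agents are trivially best-responding no matter what they do after item $1$ is exhausted, so (via the tie-breaking / perturbation freedom) they can be made to sweep through the remaining items in index order, crowding out the $q$ ``productive'' agents whose blocks have geometrically growing sizes. It is this mass of indifferent ``zombie'' consumers that forces the $q$ block agents' payoffs down to $O(1/n)$ regardless of deviation, giving welfare $O(1)$ against $\mathrm{OPT}=\Theta(q)=\Theta(\log m)$. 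Your proposal lacks any analogue of this interference mechanism.
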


The rest of the paper is dedicated to proving these results.

\section{Single-Minded and Sequential Bidding}\label{sec:technical}

To quantify the price of anarchy we require an understanding of the allocations and payoffs induced at a Nash equilibrium $v\in \valGroup$.
This is difficult to do directly. So a standard approach is, for each agent $i$, to fix the strategies $v_{-i}$ of the other agents and hypothesize
about the payoff obtainable if the agent plays an alternative strategy.
This lower bounds the payoff obtained by the strategy $v_i$ because it is a
best response to $v_{-i}$. Summing over all agents then gives a lower bound on social welfare.

But what alternative strategies should be considered? In Sections~\ref{sec:single-minded} and \ref{sec:sequential}, we study two simple strategies for each agent: single-minded bidding and sequential bidding.
We prove structural properties of these strategies 
and then use these properties to prove a technical lemma.
This technical lemma can be viewed as a generalization 
to the asymmetric allocation problem with multi-unit demand agents of the main technical lemma of~\cite{CFF15}
for the symmetric allocation problem with unit demand agents. 

\subsection{Single-Minded Bidding}\label{sec:single-minded}
As stated, a natural approach in trying to quantify the social welfare of a Nash equilibrium $v$ is to consider alternate strategies for the agents.
Of particular importance is {\em single-minded reporting} where an agent $i$ reports a value $1$ for a specific item $j$ and value $0$ for
every other item. We denote this report by $\SeqValue{j}$.

To analyze this change of strategy, let $\genTime{j}{v}$ be the minimum value of $t$ such that $\remAmm{j}{v}{t}=0$; recall this is the {\em consumption time} of item $j$ under the Nash equilibrium $v$. Now denote by
$\SingTime{j}=t_j\genValue{\SeqValue{j}}$, the consumption time of item $j$ when agent $i$ bids single-mindedly for $j$ and the other agents $-i$ report $v_{-i}$. 

Two properties of single-minded reporting will be useful. First, regardless of the strategies of the other agents $-i$, the consumption time
of item $j$ will be minimized if agent $i$ bids single-mindedly on item $j$. Second, at a Nash equilibrium, if agent $i$ deviates and bids single-mindedly on item $j$ whose consumption time was at most 1, then the  consumption time
of item $j$ can decrease by at most 75\%.

Before proving these two properties we remark that whilst the first property may seem self-evident 
there is a major subtlety due to dynamic knock-on effects. 
Indeed, when an item $\ell\neq j$ for which many agents have high value has been entirely consumed, 
the consumption rates get redistributed among the remaining items. 
It is necessary to show that bidding for $\ell$ does not decrease the completion time of $j$ despite leading to the other agents consuming more of $j$. 

The key to the proof is showing that these indirect
knock-on effects do not outweigh the direct effects of bidding single-mindedly.

\begin{lemma}\label{lem:minimal-t}
Given any $v_{-i}$, the consumption time of item $j$ is minimized when agent $i$ bids single-mindedly for $j$.
That is, $\min_{\overline{u}\in \valInd} t_j\genValue{\overline{u}}=\SingTime{j}$ and $\mathrm{arg}\hspace{-0.5mm}\min_{\overline{u}\in \valInd} t_j\genValue{\overline{u}}=\SeqValue{j}$
\end{lemma}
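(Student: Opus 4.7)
The plan is to fix $v_{-i}$ and, for any alternative report $\bar{u}\in\valInd$, compare two processes: process A, in which agent $i$ plays $\SeqValue{j}$, and process B, in which agent $i$ plays $\bar{u}$. The central claim is the following invariant on every item \emph{other than} $j$:
\[
\remAmm{\ell}{\genValue{\SeqValue{j}}}{t}\;\ge\;\remAmm{\ell}{\genValue{\bar{u}}}{t}\qquad\text{for every }\ell\ne j\text{ and every }t\in[0,\SingTime{j}].
\]
Combined with the conservation identity $\sum_{\ell=1}^{m}\remAmm{\ell}{v}{t}=m-nt$, which is valid for every profile $v$ because each of the $n$ agents consumes at total rate $1$, this invariant forces $\remAmm{j}{\genValue{\SeqValue{j}}}{t}\le\remAmm{j}{\genValue{\bar{u}}}{t}$ on the same time interval. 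Since $\remAmm{j}{\genValue{\SeqValue{j}}}{t}>0$ for every $t<\SingTime{j}$, the same holds for $\remAmm{j}{\genValue{\bar{u}}}{t}$, so $j$ is not consumed in B before time $\SingTime{j}$ and $t_j\genValue{\bar{u}}\ge\SingTime{j}$, as required.

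To prove the invariant I would induct on the phases cut out by the merged sequence of event times at which some item finishes in one of the two processes. Within a single phase the consumption rates are constant, so it suffices to check that, whenever the invariant holds at the start of a phase, the rate of consumption of any $\ell\ne j$ in A is at most its rate in B. The contribution of agent $i$ to item $\ell\ne j$ is $0$ in A and $\bar{u}(\ell)/\bar{U}^B\ge 0$ in B, where $\bar{U}^B=\sum_{m\in\remSet{\genValue{\bar{u}}}{t}}\bar{u}(m)$. For each agent $k\ne i$ the contribution is $v_k(\ell)/V_k^A$ in A versus $v_k(\ell)/V_k^B$ in B, with $V_k^A=\sum_{m\in\remSet{\genValue{\SeqValue{j}}}{t}}v_k(m)$ and $V_k^B$ defined analogously. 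The invariant at the start of the phase, together with $j$ remaining in both processes, gives $\remSet{\genValue{\bar{u}}}{t}\subseteq\remSet{\genValue{\SeqValue{j}}}{t}$, hence $V_k^A\ge V_k^B$ and $v_k(\ell)/V_k^A\le v_k(\ell)/V_k^B$. Summing over agents, the rate of $\ell$ in A is at most that in B, so the invariant is preserved through the phase. Transitions across events are straightforward: if some $\ell\ne j$ finishes in A at time $\tau$, the invariant forces $\remAmm{\ell}{\genValue{\bar{u}}}{\tau}\le 0$ and so $\ell$ has already finished in B; and if $\ell\ne j$ finishes only in B, the inclusion $\remSet{\genValue{\bar{u}}}{\cdot}\subseteq\remSet{\genValue{\SeqValue{j}}}{\cdot}$ strictly strengthens going into the next phase.

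The part I expect to be the main obstacle is recognizing that one must \emph{not} attempt to compare the rate of item $j$ itself between A and B. A direct comparison fails because of the knock-on effect flagged before the lemma: once an item $\ell\ne j$ has finished in B but not in A, the other agents in B concentrate their bids on a strictly smaller remaining set, and their aggregate rate on $j$ in B may jump above the corresponding rate in A. The reframing above is what sidesteps this difficulty: one compares the items $\ell\ne j$, for which the rate comparison always favors A because A has a larger remaining set and agent $i$ contributes nothing outside $j$, and then reads off the desired inequality for $j$ from the conservation of total remaining mass. The $\mathrm{arg}\min$ part of the conclusion then follows because the rate comparison on items $\ell\ne j$ is strict whenever $\bar{u}\ne\SeqValue{j}$ and some such $\ell$ still has positive remaining mass in at least one of the two processes; any residual degenerate configurations of $v_{-i}$ that render the perturbation inactive can be handled by standard tie-breaking or a small-perturbation limiting argument.
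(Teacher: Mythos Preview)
Your approach is correct and takes a genuinely different route from the paper's. The paper proceeds iteratively: starting from an arbitrary report $v_i$ with support $X\cup\{j\}$, it constructs a chain $v_i=u^{|X|},u^{|X|-1},\ldots,u^0=\hat u_j$ by transferring, one item at a time, the mass placed on some $x_k\in X$ onto $j$, and argues at each step that $t_j$ weakly decreases. That step requires a case split on whether $t_j^{k-1}\le t_{x_k}^k$ and, in the harder case, a nontrivial algebraic comparison of agent~$i$'s aggregate consumption rate over a carefully chosen set $Y$ of items under the two adjacent reports. You instead compare $\hat u_j$ to an arbitrary $\bar u$ in one shot by maintaining the invariant $q_\ell^A(t)\ge q_\ell^B(t)$ for every $\ell\ne j$ via phase induction, and then read off $q_j^A(t)\le q_j^B(t)$ from the conservation identity $\sum_\ell q_\ell(t)=m-nt$. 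The conservation trick is exactly what lets you bypass the paper's case analysis: you never compare rates on $j$ directly, only on everything else, and subtract. One small point worth stating explicitly: your rate comparison as written applies only to $\ell\in r^A(t)\cap r^B(t)$; for $\ell\in r^A(t)\setminus r^B(t)$ the rate inequality actually reverses, but the invariant is trivially preserved there since $q_\ell^B=0\le q_\ell^A$. Both arguments yield the containment $r^{(\bar u,v_{-i})}(t)\subseteq r^{(\hat u_j,v_{-i})}(t)$ on $[0,\tilde t_j)$, which is precisely what the subsequent lemma in the paper reuses, so nothing is lost downstream.
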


\begin{proof}
Take any agent $i\in I=[n]$, any item $j\in J=[m]$ and any $v\in \valGroup$. 
We remark that throughout the proof $v_{-i}$ will be fixed but starting from an arbitrary $v_i$ we will shift towards $\SeqValue{j}$.
Next, let $X=\set{j'\in [m]: v_i(j')>0,\ j'\neq j}$ be the set of items, other than $j$, for which 
agent $i$ reports a positive value.
Now label the items of $X$ by increasing consumption time; that is, $x_k$ for $k=1,\dots, |X|$ such 
that $\NETime{x_{k}}$ is increasing. 

The idea behind the proof is to construct a series of valuations $\{v_i=\IndValue{|X|},\IndValue{|X|-1},\dots, \IndValue{1}, 
\IndValue{0}=\SeqValue{j}\}$ such that $t_{j}\genValue{\IndValue{k-1}}\leq t_{j}\genValue{\IndValue{k}}$, for each $k\le |X|$, and where 
$\IndValue{k-1}$ has support of cardinality one less than $\IndValue{k}$.
Thus the consumption time of item $j$ is less with the single-minded report $\SeqValue{j}$ than with the report $v_i$.
Because the choice of $v_i$ was arbitrary the result will follow.

So we begin with  $\IndValue{|X|}=v_i$. Then, given $\IndValue{k}$, we define $\IndValue{k-1}$ as follows.
Let $\IndValue[x_{k}]{k-1}=0$, $\IndValue[j']{k-1}=\IndValue[j]{k}+\IndValue[x_{\ell}]{k}$ and 
$\IndValue[j']{k-1}=\IndValue[j']{k}$ for any other item $j'$.
For simplicity, we will use the notation $\IndValue{k}=\genValue{\IndValue{k}}$ 
and $t_j^{k}=t_{j}\genValue{\IndValue{k}}$.
Now consider $\IndValue{k-1}$ and $\IndValue{k}$, and let $T=\min \set{\IndTime{k}{x_k},\IndTime{k-1}{j}}$. 
If every item has the same consumption time in both $\IndValue{k-1}$ and under $\IndValue{k}$
then $\IndTime{k-1}{j}=\IndTime{k}{j}$, which implies that $\IndTime{k-1}{j}\leq \IndTime{k}{j}$ which is what we wanted.

Otherwise, let $t$ be 
the smallest time such that the set of items that have been consumed is different under both strategies. We wish to show that $t=T$. 
By definition, for any $\tau\in [0,t)$, we have $\remSet{(\IndValue{k})}{\tau}=\remSet{(\IndValue{k-1})}{\tau}$ because 
$j'$ is the first item to be consumed under one strategy but not the other. 
This implies that if either $i'\neq i$ or $j'\notin \set{j,x_{\ell}}$ then
we have $\conRate{i'}{j'}{\IndValue{k}}{\tau}=\frac{v_{i'}(j')}{\sum_{\ell'\in \remSet{(\IndValue{k})}{\tau}} v_{i'}(\ell')}=\conRate{i'}{j'}{\IndValue{k-1}}{\tau}$. 
Thus, if $j'\notin \set{j,x_{k}}$, the quantity of good $j$ remaining at time $t$ is the same for both strategies; that is, 
$\remAmm{j'}{\IndValue{k}}{t}=\remAmm{j'}{\IndValue{k-1}}{t}$, since the consumption integrals are identical in both cases. 
Consequently, it must be the case that $j'\in \set{j,x_{k}}$.

But the consumption rate $\conRate{i}{x_{k}}{\IndValue{k-1}}{\tau}=0<\conRate{i}{x_{k}}{\IndValue{k}}{\tau}$.
Hence, $\remAmm{x_{k}}{\IndValue{k}}{t} < \remAmm{x_{k}}{\IndValue{k-1}}{t}$. So, if $j'=x_{k}$ then its
consumption time must be smaller in $\IndValue{k}$ than in $\IndValue{k-1}$.
Similarly, the consumption rate $\conRate{i}{j}{\IndValue{k-1}}{\tau}=\conRate{i}{j}{\IndValue{k}}{\tau}+\conRate{i}{x_{k}}{\IndValue{k}}{\tau}>\conRate{i}{j}{\IndValue{k}}{\tau}$. Hence, $q_j^{(\IndValue{k'})}(t) > \remAmm{j}{\IndValue{k'-1}}{t}$. 
So, if $j'=j$ then its consumption time must be smaller in $\IndValue{k-1}$ than in $\IndValue{k}$.
This implies $t=T=\min \set{\IndTime{k}{x_k},\IndTime{k-1}{j}}$ as desired.
Thus, we have two cases to consider.

\textbf{\noindent{\tt Case I:}} $\IndTime{k-1}{j}\leq \IndTime{k}{x_{k}}$. Then $j$ must be the first item to be completed in $\IndValue{k-1}$ before being completed in $\IndValue{k}$. 
Hence, $\IndTime{k-1}{j}<\IndTime{k}{j}$ and the result holds.

\textbf{{\tt Case II:}} $\IndTime{k-1}{j}> \IndTime{k}{x_{k}}$.
Observe that $\remSet{(\IndValue{k})}{t}\subseteq \remSet{(\IndValue{k-1})}{t}$ for any time $t\in [\IndTime{k}{x_k}, \IndTime{k-1}{j}]$. 
Thus before $\IndTime{k-1}{j}$, no item can finish earlier under $\IndValue{k-1}$ than under $\IndValue{k}$.
This implies that, unless $j'=j$ and $i'=i$, we 
have $\conRate{i'}{j'}{\IndValue{k}}{t}\geq \conRate{i'}{j'}{\IndValue{k}-1}{t}$ until $j'$ is entirely consumed under $\IndValue{k}$. 
In particular, let $Y$ be the set of items with consumption time in the interval $[\IndTime{k}{x_{k}}, \IndTime{k}{j}]$ under $\IndValue{k}$.
(Note that $\{x_k, j\}\subseteq Y$.)
Then, only these items in $Y$ have a consumption time in $[\IndTime{k}{x_{k}},\IndTime{k-1}{j}]$ under $\IndValue{k-1}$. 
Now set $\sumInd{k}=\sum_{j'\in Y} \conRate{i}{j'}{\IndValue{k}}{t}$ and $\sumInd{k-1}=\sum_{j'\in Y} \conRate{i}{j'}{\IndValue{k-1}}{t}$. 
These are the total consumption rates at time $t$ of agent $i$ for items in $Y$ under under $\IndValue{k}$ and $\IndValue{k-1}$,
respectively. We claim that $\sumInd{k}\leq \sumInd{k-1}$ for $t\in [0,\IndTime{k-1}{j}]$.
To prove this, recall that, for $t\in [0,\IndTime{k}{j}]$, 
$\remSet{(\IndValue{k})}{t}\subseteq \remSet{(\IndValue{k-1})}{t}$.
Furthermore, $\remSet{(\IndValue{k-1})}{t}\setminus \remSet{(\IndValue{k})}{t}\subseteq Y$.
Denoting $\remSet{(\IndValue{k})}{t}=R_1$ and $\remSet{(\IndValue{k-1})}{t}=R_2$, and $\big(\sum\limits_{\ell\in R_2} v_{i}(j')\big)\big(\sum\limits_{j'\in R_1} v_{i}(j')\big)=R_3$ we obtain:

\begin{align*}
\lefteqn{\sumInd{k-1}-\sumInd{k}}\\
&\ =\ \sum\limits_{j'\in Y\cap R_2}\frac{v_{i}(j')}{\sum_{j^*\in R_2} v_{i}(j^*)}-\sum\limits_{j'\in Y\cap R_1}\frac{v_{i}(j')}{\sum_{j^*\in R_1} v_{i}(j^*)}\\
&\ =\ \frac{\sum\limits_{j'\in Y\cap R_2}v_{i}(j')}{\sum\limits_{j'\in R_2} v_{i}(j')}-\frac{\sum\limits_{j'\in Y\cap R_1}v_{i}(j')}{\sum\limits_{j'\in R_1}v_{i}(j')}\\
&\ =\ \frac{\left(\sum\limits_{j'\in Y\cap R_2}v_{i}(j')\right)\left(\sum\limits_{j'\in R_1} v_{i}(j')\right)-\left(\sum\limits_{j'\in Y\cap R_1}v_{i}(j')\right)\left(\sum\limits_{j'\in R_2} v_{i}(j')\right)}{\left(\sum\limits_{j'\in R_2} v_{i}(j')\right)\left(\sum\limits_{j'\in R_1} v_{i}(j')\right)}\\
&\ =\ \frac{\left(\sum\limits_{j'\in Y\cap (R_2\setminus R_1) }v_{i}(j') +\sum\limits_{j'\in Y\cap (R_1\cap R_2) }v_{i}(j')\right)
\sum\limits_{j'\in R_1} v_{i}(j')
-\sum\limits_{j'\in Y\cap R_1}v_{i}(j') \left(\sum\limits_{j'\in R_2\setminus R_1} v_{i}(j')+ \sum\limits_{j'\in R_1\cap R_2}v_{i}(j')\right)}{\left(\sum\limits_{j'\in R_2} v_{i}(j')\right)\left(\sum\limits_{j'\in R_1} v_{i}(j')\right)}\\
\end{align*}
But $R_1\subseteq R_2$ and $R_2\setminus R_1\subseteq Y$. So we have
\begin{align*}
\lefteqn{\sumInd{k-1}-\sumInd{k}}\\
&\ =\ \frac{\left(\sum\limits_{j'\in R_2\setminus R_1 }v_{i}(j') +\sum\limits_{j'\in Y\cap R_1 }v_{i}(j')\right)
\sum\limits_{j'\in R_1} v_{i}(j')
-\sum\limits_{j'\in Y\cap R_1}v_{i}(j') \left(\sum\limits_{j'\in R_2\setminus R_1} v_{i}(j')+ \sum\limits_{j'\in R_1}v_{i}(j')\right)}{\left(\sum\limits_{j'\in R_2} v_{i}(j')\right)\left(\sum\limits_{j'\in R_1} v_{i}(j')\right)}\\
&\ =\ \frac{\left(\sum\limits_{j'\in  R_2\setminus R_1 }v_{i}(j')\right)\left(\sum\limits_{j'\in R_1} v_{i}(j')\right)-\left(\sum\limits_{j'\in Y\cap R_1}v_{i}(j')\right)\left(\sum\limits_{j'\in R_2\setminus R_1} v_{i}(j')\right)}{\left(\sum\limits_{j'\in R_2} v_{i}(j')\right)\left(\sum\limits_{j'\in R_1} v_{i}(j')\right)}\\
&\ =\ \frac{\left(\sum\limits_{j'\in  R_2\setminus R_1 }v_{i}(j')\right)\left(\sum\limits_{j'\in R_1\setminus Y} v_{i}(j')\right)}{\left(\sum\limits_{j'\in R_2} v_{i}(j')\right)\left(\sum\limits_{j'\in R_1} v_{i}(j')\right)}\\
&\ \geq\ 0
\end{align*}
By definition of $Y$, every item in $Y$ has consumption time earlier than $j$ under $\IndValue{k}$.
However, before the consumption time $\IndTime{k}{j}$ of $j$ under $\IndValue{k}$, 
the total consumption rate of all the items in $Y$ is at least as large under $\IndValue{k-1}$ than 
under $\IndValue{k}$. 
Therefore at consumption time $\IndTime{k-1}{j}$ of $j$ under $\IndValue{k-1}$ not every item of $Y$ has been
consumed under $\IndValue{k}$. In particular, $j$ itself has not 
been consumed by then under $\IndValue{k}$. So, $j$ is consumed faster in $\IndValue{k-1}$ 
than in $\IndValue{k}$.

We iterate this argument 
with $k-1$ until we get $k=0$ and we have computed $\IndValue{0}=\SeqValue{j}$ in which only $j$ has non-0 value.
This 
implies that $\IndValue{0}=\SeqValue{j}$ as desired.
\end{proof}

The consumption time of each item in the Nash equilibrium will be denoted as the time $\NETime{j}=t_j(v)$. For convenience we denote $\NETime{0}=0$.
Without loss of generality, we relabel the items so that $\NETime{j}$ is increasing with $j$.

A second property which we require for single-minded bidding is the following, which is an extension of a lemma from~\cite{CFF15}.

\begin{lemma}\label{lem:Single-target}
Let $v$ be a pure Nash equilibrium. Take any agent $i$ and any item $j$ whose consumption time is at most $1$. 
The consumption time of item $j$ decreases by at most 75\% if agent $i$ switches to the single-minded
strategy $\SeqValue{j}$, that is if $\NETime{j}\leq 1$ then $\SingTime{j}\geq \frac{1}{4}\cdot \NETime{j}$.
\end{lemma}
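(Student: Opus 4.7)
The plan is to prove the bound directly from the CPS consumption dynamics; in fact the argument works for an arbitrary $v\in\valGroup$ and does not require the Nash hypothesis at all. The qualitative picture is that the single-minded deviation $\SeqValue{j}$ only accelerates the consumption of $j$ and only decelerates the consumption of every other item, so on the interval $[0,\SingTime{j}]$ the remaining-set under the deviation contains the remaining-set under $v$. This containment caps how fast the other agents can consume $j$ under the deviation, and an integral comparison then forces $\SingTime{j}$ to be comparable to $\NETime{j}$.

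The first step is to establish $\remSet{\genValue{\SeqValue{j}}}{t}\supseteq \remSet{v}{t}$ for every $t\in[0,\SingTime{j}]$, by induction on the event times at which some item is fully consumed under either process (in the same style as Lemma~\ref{lem:minimal-t}). If the containment holds up to an event, then for any item $\ell\neq j$ and any $i'\neq i$ the deviation-denominator $\sum_{\ell'\in \remSet{\genValue{\SeqValue{j}}}{t}} v_{i'}(\ell')$ is at least the $v$-denominator, so $\conRate{i'}{\ell}{\genValue{\SeqValue{j}}}{t}\le \conRate{i'}{\ell}{v}{t}$; combined with the fact that agent $i$ contributes zero to $\ell$ under the deviation, this shows $\ell$ is consumed no faster under the deviation and hence leaves the deviation's remaining set no sooner than it leaves $\remSet{v}{\cdot}$. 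Item $j$ lies in both remaining sets throughout $[0,\SingTime{j}]$, so the containment is preserved across every event.

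Applying the containment to item $j$ itself and using $\conRate{i}{j}{\genValue{\SeqValue{j}}}{t}=1$, I obtain the pointwise inequality $\sum_{i'\in I}\conRate{i'}{j}{\genValue{\SeqValue{j}}}{t}\le 1+R^v_j(t)$ on $[0,\SingTime{j}]$, where $R^v_j(t):=\sum_{i'\in I}\conRate{i'}{j}{v}{t}$. Integrating and using that the total consumption of $j$ equals $1$ under both processes at their respective consumption times,
\begin{equation*}
1 \ \le\ \SingTime{j} \;+\; \int_0^{\SingTime{j}} R^v_j(t)\,dt,
\qquad\text{hence}\qquad
\int_{\SingTime{j}}^{\NETime{j}} R^v_j(t)\,dt \ \le\ \SingTime{j}.
\end{equation*}

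The finish exploits that $R^v_j$ is non-decreasing on $[0,\NETime{j})$, because $\remSet{v}{t}$ only shrinks and so every denominator $\sum_{\ell\in\remSet{v}{t}}v_{i'}(\ell)$ can only shrink. Monotonicity applied to both intervals yields the sandwich $(1-\SingTime{j})/\SingTime{j}\le R^v_j(\SingTime{j})\le \SingTime{j}/(\NETime{j}-\SingTime{j})$, which rearranges to $\SingTime{j}(1+\NETime{j})\ge \NETime{j}$; since $\NETime{j}\le 1$ this already gives $\SingTime{j}\ge \NETime{j}/2$, strictly stronger than the claimed $\NETime{j}/4$. I expect the main obstacle to be the set-containment in the second step: the intuition (``if $i$ stops consuming everything but $j$, the other items can only finish later'') is immediate, but formalising it requires careful bookkeeping when events in the two processes interleave, and one must verify at each transition that the inductive hypothesis on remaining sets is preserved. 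The subsequent integral and monotonicity manipulations are routine.
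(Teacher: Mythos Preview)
Your argument is correct and in fact yields the stronger bound $\SingTime{j}\ge \NETime{j}/(1+\NETime{j})\ge \NETime{j}/2$. Both your proof and the paper's rest on the same foundational fact, namely the set containment $\remSet{v}{t}\subseteq \remSet{\genValue{\SeqValue{j}}}{t}$ on $[0,\SingTime{j})$; the paper extracts this from the proof of Lemma~\ref{lem:minimal-t}, and you reprove it directly. Where the two diverge is in how this containment is exploited. The paper splits into cases according to the first time the other agents' consumption rate of $j$ under the deviation reaches~$1$, then in the hard case bounds the amount the other agents consume under the deviation and compares it to what they consume under $v$ via an auxiliary time-of-consumption function $t(\alpha)$, arriving at $\SingTime{j}\ge \frac14\NETime{j}$. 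You instead pass straight to the integral inequality $1\le \SingTime{j}+\int_0^{\SingTime{j}} R^v_j(t)\,dt$ and combine it with $\int_0^{\NETime{j}} R^v_j(t)\,dt=1$ and the monotonicity of $R^v_j$; this avoids the case split entirely and loses less in the constants. The payoff is a factor-of-two improvement that would propagate through Lemma~\ref{lem:lower-bound} and the subsequent price-of-anarchy bound. One small remark: the containment can fail at the endpoint $t=\SingTime{j}$ precisely because $j$ itself has just left $\remSet{\genValue{\SeqValue{j}}}{\cdot}$, so strictly speaking it holds on the half-open interval $[0,\SingTime{j})$; this is harmless for the integral comparison but worth stating precisely.
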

\begin{proof}
Assume that $\sum_{i'\neq i} \conRate{i'}{j}{\SeqValue{j}}{t}<1$ for any time $t$.
Then $\fn{\remAmm{i}{\SeqValue{j}}{t}}{\frac{1}{2}}>0$ 
which implies that $\SingTime{j}>\frac{1}{2}\geq \frac{1}{4} \NETime{j}$. 
So we may assume there is a smallest time $\tau$ such that $\sum_{i'\neq i} \conRate{i'}{j}{\SeqValue{j}}{\tau}\geq 1$. 
We now have two cases.\\

\noindent{\tt Case I:} $\tau\geq \frac{1}{4} \NETime{j}$\\
By definition, the total consumption rate of item $j$ is positive at time $\tau$.
Thus, item $j$ is still available at time $\frac{1}{4} \NETime{j}$. Consequently
$\SingTime{j}\geq \frac{1}{4} \NETime{j}$.\\

\noindent{\tt Case II:} $\tau < \frac{1}{4} \NETime{j}$\\
Agent $i$ has a consumption rate $1$ for item $j$ under $\SeqValue{j}$ until its consumption time.
Note that before its consumption time the total consumption rate for $j$ is non-decreasing.
In particular, before $\tau$ (phase 1) the total consumption rate of the other agents for $j$ is at most $1$.
After $\tau$ (phase 2) the total consumption rate of the other agents for $j$ is at least $1$.

Then agent $i$ consumes $\tau$ units of good $j$ in phase 1. In phase 2,  the other agents consume $j$ at least as fast as $i$. 
Thus agent $i$ consumes at most half the remaining amount of good $j$, which is obviously at most $\frac{1}{2}$. 
So agent $i$ gets at most $\frac{1}{4} \NETime{j}+\frac{1}{2}$ units of good $j$.
This implies the other agents get at least $\frac{1}{2}-\frac{1}{4}\NETime{j}$ units of good $j$. 

Recall from the proof of Lemma~\ref{lem:minimal-t} that
$\remSet{(v)}{t} = \remSet{(\IndValue{|X|})}{t}\subseteq\remSet{(\IndValue{0})}{t}=\remSet{(\hat{u}_j)}{t}$ for any time $t\le t_j(\hat{u}_j, v_{-i})$,
that is before the consumption time of item $j$ under $\SeqValue{j}$.
This implies that, at each point in time, the total consumption rate at which the agents (excluding $i$) consume item $j$ 
is smaller under $\SeqValue{j}$ than under $v$. 
In particular, if it takes time $t(\alpha)$ for the agents excluding $i$ to consume $\alpha$ units of item $j$ under $v$ then it will take at 
least $t(\alpha)$ for them to consume $\alpha$ units of $j$ under $\SeqValue{j}$. Moreover, recall $\conRate{i'}{j}{\NEValue}{t}$ is non-decreasing. 
Thus, for any $\beta\in (0,1)$, the time it takes the agents excluding $i$ to consume $\beta\cdot \alpha$ units of good $j$ under $v$ is at least 
$\beta\cdot t(\alpha)$. 

Furthermore, if $\alpha$ is the amount of good $j$ that the agents excluding $i$ consume under $v$ then $t(\alpha)=\NETime{j}(v)$. 
Now set $\beta=\frac{\frac{1}{2}-\frac{1}{4}\NETime{j}(v)}{\alpha}$. Then because $\frac{1}{\alpha}\ge 1$ and $\frac14 \NETime{j}(v) \le \frac14$ 
we have that $\beta\ge \frac{1}{4}$. Moreover, since $\alpha\ge \frac{1}{2}-\frac{1}{4}\NETime{j}(v)$, we have $\beta\le 1$. 
Hence the agents excluding $i$ consume at least $\alpha\cdot \beta =\frac{1}{2}-\frac{1}{4}\NETime{j}(v)$ units in time $\NETime{j}(\SeqValue{j})$
under $\SeqValue{j}$ and they consume $\alpha$ units in time $\NETime{j}(v)$ under $v$.
Thus
$$\NETime{j}(\SeqValue{j})\geq t(\beta\alpha) \ge \beta t(\alpha) = \beta \NETime{j}(v)$$
and since $\beta\ge \frac14$ we have:
$$\beta \NETime{j}(v)\geq \frac{1}{4} \NETime{j}(v).$$
Thus $\NETime{j}(\SeqValue{j})\geq \frac{1}{4} \NETime{j}(v)$ as desired.
\end{proof}

\subsection{Sequential Bidding}\label{sec:sequential}

Unfortunately, consideration of deviations to single-minded bidding strategies is insufficient to prove a good price of anarchy bound for the Cardinal Probabilistic Serial mechanism.
Indeed, this is intuitively obvious. If an agent wins many items in the optimal solution to the allocation problem then a strategy that targets a single item will likely to do very poorly in comparison.

To circumvent this problem, we consider a second class of strategies, which we term {\em sequential bidding}.
The idea is that an agent has a target set $X=\{x_1, x_2, \dots, x_k\}$ and,
moreover, requests to consume the items one-at-a-time in the given order.
However, the Cardinal Probabilistic Serial is not perfectly compatible with such a sequential request. But it does allow the agents to mimic such a strategy with arbitrary precision. To see this, given a finite sequence $X=\set{x_1,\dots,x_k}$ and
$\varepsilon\in \opInt{0}{\frac{1}{2}}$, define the {\em epsilon-valuation} $\EpsValue{X}$ to be $1-\sum_{\ell=1}^{k-1} \varepsilon^{\ell}$ if $j=x_1$,  $\varepsilon^\ell$ if $j=x_{\ell}$ for $\ell>1$, and 0 if $j\notin X$.

Then, given a finite sequence $X$, we can define the {\em sequential bidding strategy} $\SeqValue{X}=\lim_{\varepsilon\rightarrow 0} \EpsValue{X}$.
Under this sequential bidding strategy, at any time $t$, the agent will consume the first item in $X$ that has not yet been entirely consumed. 
({\em We defer a formal mathematical justification for the validity of this
construction to Appendix~\ref{sec:epsilon}.})

Using the two properties we obtained for single-minded bidding, 
we can analyse the consequences of deviating to a sequential bidding strategy. Specifically, we prove the following technical lemma.

\begin{lemma}\label{lem:lower-bound}
For any agent $i$, let $v'_i$ be the true value $i$ has for the items and let $v$ be any pure Nash equilibrium with respect to $v'$. 
Then, for any sequence of items $X=\set{x_1,x_2,\dots,x_k}$ whose consumption times are bounded above by $1$, it holds 
that: \[v'_i(CPS(v))\ \geq\ \frac{1}{4}\cdot \sum_{\ell=1}^k (\NETime{x_{\ell}}-\NETime{x_{\ell-1}})\cdot \TrueValue[i](x_{\ell}).\]
\end{lemma}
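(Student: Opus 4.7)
The plan is to exploit the Nash equilibrium condition together with the two structural lemmas on single-minded bidding. For any feasible report $u_i$, the payoff inequality
\[ v'_i(CPS(v)) \geq v'_i(CPS(u_i, v_{-i})) \]
holds, and I would instantiate $u_i$ with the sequential bidding strategy $\hat{u}_X$ constructed in Section~\ref{sec:sequential}. Under $\hat{u}_X$, at every instant agent $i$ devotes their entire unit consumption rate to the lowest-indexed item of $X$ still available. Let $\hat{t}_\ell := t_{x_\ell}(\hat{u}_X, v_{-i})$ with $\hat{t}_0 := 0$, and assume (as in the intended applications) that the items in $X$ are indexed so that $\NETime{x_1} \leq \cdots \leq \NETime{x_k} \leq 1$.

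First I would argue that the ordering remains monotone under $\hat{u}_X$ as well, so that the amount of $x_\ell$ consumed by agent $i$ is exactly $\hat{t}_\ell - \hat{t}_{\ell-1}$; this reduces the task to proving, for every $\ell$,
\[ \hat{t}_\ell - \hat{t}_{\ell-1} \;\geq\; \tfrac{1}{4}\bigl(\NETime{x_\ell} - \NETime{x_{\ell-1}}\bigr). \]
The base case $\ell=1$ is almost immediate: during $[0,\hat{t}_1]$ the sequential deviation behaves exactly like the single-minded bid $\hat{u}_{x_1}$ (this is the content of the $\varepsilon\to 0$ justification deferred to the appendix), so $\hat{t}_1 = \SingTime{x_1}$, and Lemma~\ref{lem:Single-target} gives $\SingTime{x_1} \geq \tfrac{1}{4}\NETime{x_1}$.

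For $\ell \geq 2$ the strategy is to replay the two-case argument of Lemma~\ref{lem:Single-target} inside the shifted game obtained by restarting the clock at $\hat{t}_{\ell-1}$ with $x_1, \ldots, x_{\ell-1}$ already exhausted and with agent $i$ now focused single-mindedly on $x_\ell$. Either the aggregate consumption rate of the agents $-i$ on $x_\ell$ stays below $1$ for a duration of at least $\tfrac{1}{4}(\NETime{x_\ell} - \NETime{x_{\ell-1}})$, in which case agent $i$ alone accumulates the desired amount of $x_\ell$, or it crosses $1$ earlier, in which case a phase-I/phase-II bookkeeping (as in Case II of Lemma~\ref{lem:Single-target}) shows that the $-i$ agents must finish at least half of the remaining $x_\ell$, again leaving agent $i$ with enough. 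The bridge between the deviation game and the equilibrium game is the monotonicity built into the proof of Lemma~\ref{lem:minimal-t}: the set of remaining items under $(\hat{u}_X, v_{-i})$ is contained in that under $v$ up through the consumption time of $x_\ell$, so the other agents' rates on $x_\ell$ under the deviation cannot exceed their rates under $v$. Consequently the equilibrium-side duration $\NETime{x_\ell} - \NETime{x_{\ell-1}}$ is a valid surrogate for the $\NETime{j}$ that appears in the statement of Lemma~\ref{lem:Single-target}.

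The main obstacle will be formalising this shift-in-time application of Lemma~\ref{lem:Single-target}. Concretely I need to check three ingredients: (i) that the identity between single-minded and sequential behavior on the currently active target holds at every break point $\hat{t}_{\ell-1}$, which is the $\varepsilon\to 0$ argument of Appendix~\ref{sec:epsilon}; (ii) that the containment of remaining-sets $\remSet{\hat{u}_X}{t}\subseteq \remSet{v}{t}$, proved locally in Lemma~\ref{lem:minimal-t}, propagates across the entire interval $[\NETime{x_{\ell-1}},\NETime{x_\ell}]$ so that the other agents' cumulative consumption of $x_\ell$ in the equilibrium dominates that under the deviation on this interval; and (iii) that the two-case dichotomy reassembles telescopically without losing the $\tfrac14$ constant. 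Once these ingredients are in place, weighting the per-term inequalities by $v'_i(x_\ell)$ and summing over $\ell$ delivers the claimed lower bound on $v'_i(CPS(v))$.
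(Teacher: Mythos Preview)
Your plan detours through a harder intermediate claim than is needed, and the paper's proof avoids all three of your ``obstacles'' entirely.

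The step you are missing is that Lemma~\ref{lem:minimal-t} already applies to the sequential report $\hat u_X$ itself: for every $\ell$,
\[
\hat t_{x_\ell} \;=\; t_{x_\ell}(\hat u_X, v_{-i}) \;\ge\; \min_{u\in\valInd} t_{x_\ell}(u,v_{-i}) \;=\; \tilde t_{x_\ell} \;\ge\; \tfrac14\, t_{x_\ell},
\]
the last inequality by Lemma~\ref{lem:Single-target}. Both lemmas are invoked once per item, as black boxes; there is no shifted game and no replay of the two-phase argument.

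Given $\hat t_{x_\ell}\ge \tfrac14\, t_{x_\ell}$, the paper then reasons about the \emph{value rate} of agent~$i$'s consumption rather than per-item amounts. With $X$ ordered by decreasing $v'_i$-value, at any time $t<\hat t_{x_\ell}$ item $x_\ell$ is still available, so under $\hat u_X$ agent $i$ is consuming some $x_{\ell'}$ with $\ell'\le\ell$, hence of value at least $v'_i(x_\ell)$. In particular this holds on the whole interval $\bigl[\tfrac14 t_{x_{\ell-1}},\tfrac14 t_{x_\ell}\bigr]$, so the payoff from the deviation is at least $\sum_\ell \max\{\tfrac14 t_{x_\ell}-\tfrac14 t_{x_{\ell-1}},0\}\cdot v'_i(x_\ell)$, which dominates the claimed sum.

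Your reduction to the per-interval inequality $\hat t_\ell-\hat t_{\ell-1}\ge \tfrac14(t_{x_\ell}-t_{x_{\ell-1}})$ is both unnecessary and fragile. It presupposes that the $x_\ell$ are finished in the listed order under the deviation (not obvious), and it does not follow from $\hat t_\ell\ge\tfrac14 t_{x_\ell}$ alone: one can have $\hat t_{\ell-1}$ well above $\tfrac14 t_{x_{\ell-1}}$ while $\hat t_\ell$ sits only slightly above $\hat t_{\ell-1}$. Your proposed bridge via a containment of remaining sets is also stated in the wrong direction: Lemma~\ref{lem:minimal-t} shows the remaining set is \emph{larger} under the single-minded deviation (which is precisely what makes the other agents' rates on the target item smaller, not the containment you wrote). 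Extending that containment to the sequential strategy across successive targets is exactly the extra work that the paper's value-rate argument renders unnecessary.
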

\begin{proof}
Recall that we have:
\begin{itemize}
    \item $\NETime{\ell}$: consumption time of item $x_{\ell}$ in the Nash equilibrium $v$.
    \item $\SingTime{\ell}$: consumption time of item $x_{\ell}$ in $\genValue{\SeqValue{X}}$ where $i$ makes a sequential bid for $X$.
\end{itemize}

We additionally denote the consumption time of item $j$ when an agent switches to a sequential strategy as $\SeqTime{j}=t_j \genValue{\SeqValue{X}}$. Moreover, for convenience we denote $\SeqTime{0}=0$.
    
By Lemma~\ref{lem:minimal-t} we have $\SingTime{j}\geq \SeqTime{\ell}$. By Lemma~\ref{lem:Single-target} we 
have $\SeqTime{\ell}\geq \frac{1}{4} \NETime{\ell}$. Therefore, $\SingTime{j}\geq \frac{1}{4} \NETime{\ell}$. 
Now recall, under the strategy $\SeqValue{X}$, the items of $X=\set{x_1,x_2,\dots,x_k}$ are ordered in decreasing 
order of value for agent $i$. This means that before time $\SingTime{\ell}$ the agent consumes an item whose value is at least
$\TrueValue[i](x_{\ell})$. In particular, because $\SingTime{j}\geq \frac{1}{4} \NETime{j}$, 
if $\NETime{\ell-1}\le \NETime{\ell}$
then during the interval $[\frac{1}{4} \NETime{\ell-1},\frac{1}{4} \NETime{\ell}]$ agent $i$ consumes an item of value at least
$\TrueValue[i](x_{\ell})$. Therefore, agent $i$ has a payoff of
\begin{eqnarray*}
\lefteqn{CPS(\TrueValue; \{\SeqValue{X}, v_{-i}\}) 
\geq \sum_{\ell=1}^k \, \max \left(\frac{1}{4} \NETime{\ell}-\frac{1}{4} \NETime{\ell-1}\, , \, 0\right)\cdot \TrueValue[i](x_\ell)}\\
&\geq \sum_{\ell=1}^k  \left(\frac{1}{4} \NETime{\ell}-\frac{1}{4} \NETime{\ell-1}\right) \cdot \TrueValue[i](x_\ell)
\geq \frac{1}{4}\cdot \sum_{\ell=1}^k \left(\NETime{\ell}-\NETime{\ell-1}\right)\cdot \TrueValue[i](x_\ell)
\end{eqnarray*}
As $v$ is a Nash equilibrium, it must also give agent $i$ a payoff of at least 
$\frac{1}{4}\cdot \sum_{i=1}^k \left(\NETime{\ell}-\NETime{\ell-1}\right)\cdot \TrueValue[i](x_\ell)$. 
\end{proof}

\section{The Price of Anarchy}\label{sec:POA}

We are now ready to quantify the price of anarchy.
We begin with the upper bound for Cardinal Probabilistic Serial in Section~\ref{sec:UB}, followed by the upper bound for Probabilistic Serial in Section~\ref{sec:UB-PS}. We will then present complementary lower bounds in Section~\ref{sec:LB}.

\subsection{An Upper Bound on the Price of Anarchy for Cardinal Probabilistic Serial}\label{sec:UB} 
To give an upper bound on the price of anarchy, we proceed in two steps. In the first step, we will assume that the number of agents and items is the same, that is we consider the {\em symmetric case} where $m=n$ and prove that the price of anarchy is at most $O(\sqrt{n}\cdot \log n)$ with multi-unit demand agents. Then we extend this result to the {\em asymmetric case}, where $m$ is arbitrary, and show the price of anarchy is $O(\sqrt{n}\cdot \log m)$.

\subsubsection{The Symmetric Case.}\label{sec:symmetric}

Throughout this section, we will assume that $m=n$. Observe that this implies that the completion time of each 
item is at most 1, allowing us to apply Lemma~\ref{lem:lower-bound} for all items. 

We now formulate the price of anarchy
as an optimization program. This optimization program is very difficult to handle directly. So our basic approach will be to apply a series of relaxations and simplifications until we obtain a program we can solve.
The task is to ensure the transformations are consistent with generating upper bounds and that they do not degrade the value of the objective function excessively. 

We show the following key result which will be used to prove one case of the main result.

\begin{theorem}\label{thm:sym}
    In the symmetric one-sided allocation problem with multi-unit demand agents, let $OPT$ be the social welfare of the optimal allocation. The social welfare of any Nash equilibrium is at least $\fn{\Omega}{\frac{OPT}{\sqrt{n}\cdot \log n}}$ for Cardinal Probabilistic Serial.
\end{theorem}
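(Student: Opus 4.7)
}
The plan is to combine the per-agent lower bound from Lemma~\ref{lem:lower-bound} with a dyadic bucketing over consumption times, so as to recover the $\Theta(\sqrt n)$ bound of~\cite{CFF15} level-by-level and pay an additional $\log n$ factor from the bucketing step. Since $m=n$, every consumption time lies in $[1/n,1]$, so Lemma~\ref{lem:lower-bound} may be applied to any sequence of items. Fix any Nash equilibrium $v$ and, for each agent $i$, let $OPT_i$ denote the bundle assigned to $i$ in a social-welfare maximising allocation, so that $OPT=\sum_i v'_i(OPT_i)$.

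First, I would bucket items by consumption time. Define $B_k=\{j:\NETime{j}\in(2^{-k-1},2^{-k}]\}$ for $k=0,1,\dots,\lceil\log_2 n\rceil$. A basic counting fact is essential here: at time $\tau$, the total mass consumed is exactly $n\tau$, so $|\{j:\NETime{j}\le\tau\}|\le n\tau$; in particular $|B_k|\le n\cdot 2^{-k}$. Decomposing $OPT=\sum_k\sum_i v'_i(OPT_i\cap B_k)$, there exists a level $k^*$ with $OPT_{k^*}:=\sum_i v'_i(OPT_i\cap B_{k^*})\ge OPT/\log_2 n$. Writing $\tau:=2^{-k^*}$, the task reduces to proving the Nash social welfare is at least $\Omega(OPT_{k^*}/\sqrt n)$.

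Second, I would apply Lemma~\ref{lem:lower-bound} in two complementary ways and take the maximum of the two bounds. Let $X_i=OPT_i\cap B_{k^*}$, listed in increasing order of equilibrium consumption time.
\begin{itemize}
\item \textbf{Single-minded deviation} on any $j\in X_i$ gives $v'_i(CPS(v))\ge\tfrac{1}{4}\NETime{j}\,v'_i(j)\ge\tfrac{\tau}{8}v'_i(j)$; maximising over $j$,
\[
  v'_i(CPS(v))\ \ge\ \tfrac{\tau}{8}\cdot\max_{j\in X_i}v'_i(j).
\]
\item \textbf{Sequential deviation} on all of $X_i$ gives
\[
  v'_i(CPS(v))\ \ge\ \tfrac{1}{4}\sum_{\ell}(\NETime{x_\ell}-\NETime{x_{\ell-1}})\,v'_i(x_\ell),
\]
whose gaps telescope to at most $\tau$.
\end{itemize}
Now I would split into cases on $\tau$ exactly as in the unit-demand analysis of~\cite{CFF15}. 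When $\tau\ge1/\sqrt n$ the single-minded bound summed over agents already suffices: using the counting bound $|B_{k^*}|\le n\tau$ and unit-sum valuations, a Cauchy--Schwarz argument applied to $\sum_i\max_{j\in X_i}v'_i(j)$ versus $\sum_i v'_i(X_i)=OPT_{k^*}$ gives $\sum_i\max_{j\in X_i}v'_i(j)\ge\Omega(OPT_{k^*}/\sqrt n)$, and multiplying by $\tau/8$ yields the claim. When $\tau<1/\sqrt n$, the bucket is small ($|B_{k^*}|\le n\tau<\sqrt n$), so $OPT_{k^*}\le|B_{k^*}|\le\sqrt n$; the sequential bound on $X_i$, combined with the pigeonholing that distributes the telescoping gaps of total length~$\le\tau$ across the high-value items in $X_i$, recovers a payoff of order $\tau\cdot v'_i(X_i)/|X_i|$ per agent, which after summing and using $|B_{k^*}|\le n\tau$ again gives $\Omega(OPT_{k^*}/\sqrt n)$.

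The main obstacle is the second case. In a unit-demand instance ($|X_i|\le 1$), the sequential bound degenerates into the single-minded bound and the argument of~\cite{CFF15} applies directly; but with multi-unit demands the telescoping weights $(t_{x_\ell}-t_{x_{\ell-1}})$ can concentrate on a few $\ell$ that are not aligned with the high-value items, so a naive use of Lemma~\ref{lem:lower-bound} loses up to a factor of $|X_i|$ rather than $\sqrt n$. Reconciling this is the crux: one must pair Lemma~\ref{lem:lower-bound} with the geometric $|B_{k^*}|\le n\tau$ bound to show that the ``bad'' configurations (many items, very tight gaps) can only occur in buckets where $OPT_{k^*}$ is itself small, so that an $O(\sqrt n)$ loss is unavoidable but also sufficient. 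Composing the resulting $\Omega(1/\sqrt n)$ fractional guarantee on level $k^*$ with the $\Omega(1/\log n)$ pigeonhole loss from the bucketing step produces the claimed $SW\ge\Omega(OPT/(\sqrt n\log n))$.
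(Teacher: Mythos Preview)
Your bucketing-by-consumption-time is a genuinely different route from the paper's proof, which instead reformulates the welfare bound as an optimization program, applies a change of variables $y^i_\ell=\ell\bigl(v'_i(x^i_\ell)-v'_i(x^i_{\ell+1})\bigr)$ to linearise the telescoping sum, and then buckets agents by the \emph{index} $\ell_i$ (the position in their optimal bundle that minimises $t_{x^i_\ell}/\ell$), not by consumption time. Both routes ultimately produce a bound of the form $SW\ge\Omega(OPT^2/(n\log^2 n))$ which is then traded off against the trivial $SW\ge\tfrac14$; your geometric argument would be shorter if it went through, while the paper's reduction is more systematic but heavier.

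However, your sketch has a real gap. The Cauchy--Schwarz claim ``$\sum_i\max_{j\in X_i}v'_i(j)\ge\Omega(OPT_{k^*}/\sqrt n)$'' is false as stated: take a single agent with $|X_1|=n\tau$ equal-valued items summing to $1$, so $OPT_{k^*}=1$ but $\sum_i\max=1/(n\tau)$, which is $\ll 1/\sqrt n$ once $\tau\gg 1/\sqrt n$. What Cauchy--Schwarz actually gives (using $v'_i(X_i)\le 1$ and $\sum_i|X_i|\le n\tau$) is
\[
\sum_i\max_{j\in X_i}v'_i(j)\ \ge\ \sum_i\frac{v'_i(X_i)}{|X_i|}\ \ge\ \frac{\bigl(\sum_i v'_i(X_i)\bigr)^2}{\sum_i v'_i(X_i)\,|X_i|}\ \ge\ \frac{OPT_{k^*}^{\,2}}{n\tau},
\]
so after multiplying by $\tau/8$ the $\tau$ cancels and you get $SW\ge OPT_{k^*}^{\,2}/(8n)$. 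This both removes the need for your case split on $\tau$ and resolves the ``main obstacle'' you flag in Case~2: the sequential bound is not needed beyond its single-minded specialisation on the chosen bucket. But note that $OPT_{k^*}^{\,2}/(8n)$ alone is not $\Omega(OPT_{k^*}/\sqrt n)$ when $OPT_{k^*}$ is small; you must combine it with the trivial bound $SW\ge\tfrac14$ (every agent secures $\tfrac1{4n}$ by sequentially bidding on all items), exactly as the paper does in its final step. With those two fixes your approach is complete and yields the same $O(\sqrt n\log n)$ bound.
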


\begin{proof}

The bidding strategies that will be used throughout the section, single-minded bidding and sequential bidding, are applicable to the ordinal version of Probabilistic Serial. So, the bounds on the value the agents get in the Nash Equilibrium by using these strategies for Cardinal Probabilistic Serial will extend to the usual Probabilistic Serial as the Probabilistic Serial strategy used by the other agents can be used for Cardinal Probabilistic Serial. For the rest of the proof, we will work with Cardinal Probabilistic Serial.

Let $\{X_1, X_2, \dots, X_n\}$ be the optimal allocation,
where each agent $i$ receives the bundle of items $X_i=\set{x_1^{i},\dots,x_{k_i}^{i}}$.
Here we assume the items in $X_i$ are ordered by increasing 
consumption time in the Nash equilibrium $v$.
We can then use Lemma~\ref{lem:lower-bound} to lower bound the social welfare of
the Nash equilibrium $v$. To do this first note that, whilst the items of
$X_i$ are ordered by consumption time they are not ordered by value.
In particular, for the lower bound in Lemma~\ref{lem:lower-bound} 
we may use the right-to-left maxima of $\{v'_i(x^i_{1}), v'_i(x^i_{2}), \dots, v'_i(x^i_{k_i})\}$.
This gives a lower bound on the social welfare of the Nash equilibrium $v$ of:
\begin{equation*}
\frac14\sum_{i=1}^{n}\ \sum_{\ell=1}^{k_i}\left( \max_{\ell'=\ell,\dots,k_i}\set{v'_{i}(x_{\ell'})}\cdot (\NETime{x_{\ell}^i}(v)-\NETime{x_{\ell-1}^i}(v))\right)
\end{equation*}
We may then bound the price of anarchy using the following optimization program:
\begin{eqnarray}
\min \hspace{.5cm} \frac{\frac14\cdot\sum_{i=1}^{n} \sum_{\ell=1}^{k_i}\left( \max_{\ell'=\ell,\dots,k_i}\set{v'_{i}(x_{\ell'})}\cdot (\NETime{x_\ell^i}(v)-\NETime{x_{\ell}^i}(v))\right)}{OPT}\label{opt:obj}\\
\text{s.t.}\hspace{2cm} \sum_{i=1}^n \sum_{\ell=1}^{k_i} v'_{i,x_\ell^i}=& OPT&\label{opt:OPT}\\
   \bigcup_{i=1}^n \set{x_\ell^i:\ell\in [k_i]}=&[n]\label{opt:partition}\\
    \sum_{j=1}^n \TrueValue[i](j) =& 1&\forall i\in [n]\label{opt:unit-v'}\\
 \sum_{j=1}^n \NEValue[i](j) =& 1&\forall i\in [n]\label{opt:unit-v}\\
    \NETime{j}(v)\leq& \NETime{j+1}(v)&\forall j\in[n-1]\label{opt:time}\\
       v\in &NE(v')&\label{opt:NE}
\end{eqnarray}

Let's understand this optimization program. 
Constraints (\ref{opt:unit-v'}) and (\ref{opt:unit-v}) state that for every agent $v'_i$ and $v_i$ are unit-sum valuation functions. 
Constraints (\ref{opt:OPT}) and (\ref{opt:partition}) ensure $\{X_1, X_2, \dots, X_n\}$ is a partition of the items with optimal social welfare (with respect to the true valuation functions $v'$).
Next the constraint~(\ref{opt:time}) forces the items to be ordered by increasing consumption time. Finally, the constraint~(\ref{opt:NE})
states that $v$ is a Nash equilibrium with respect to the true valuations $v'$.
The objective function~(\ref{opt:obj}) then gives a worst-case bound
on the price of anarchy using Lemma~\ref{lem:lower-bound}.

However, this optimization program is difficult to analyze so our task now is to simplify the program without weakening the resultant price of anarchy guarantee.
To do this, our first step is to fix $OPT$, the social welfare of the optimal solution. (We will later determine the worst case values of $OPT$.)
In doing so we may omit the denominator from the objective function~(\ref{opt:obj}).
Second, observe that the bound can only be worse if we relax or remove some of the constraints from the optimization program.
In particular, let's omit the Nash equilibrium constraint~(\ref{opt:NE}). This gives:
\begin{eqnarray}
\min \hspace{.5cm} \frac14\cdot\sum_{i=1}^{n} \sum_{\ell=1}^{k_i}\left( \max_{\ell'=\ell,\dots,k_i}\set{v'_{i}(x_{\ell'})}\cdot (\NETime{x_\ell^i}(v)-\NETime{x_{\ell}^i}(v))\right) \label{opt:OPT'}\\
\text{s.t.}\hspace{2cm} \sum_{i=1}^n \sum_{\ell=1}^{k_i} v'_{i,x_\ell^i}=& OPT&\nonumber\\
   \bigcup_{i=1}^n \set{x_\ell^i:\ell\in [k_i]}=&[n]\nonumber\\
    \sum_{j=1}^n \TrueValue[i](j) =& 1&\forall i\in [n]\nonumber\\
 \sum_{j=1}^n \NEValue[i](j) =& 1&\forall i\in [n]\nonumber\\
    \NETime{j}(v)\leq& \NETime{j+1}(v)&\forall j\in[n-1]\nonumber
\end{eqnarray}
The reader may ask if removing the Nash equilibrium constraint~(\ref{opt:NE}) will then
render the optimization program useless.
As we will see, the answer is no because implicitly the Nash equilibrium
conditions have been used in deriving the objective function.
Now note that
$\max_{\ell'=\ell,\dots,k_i}\set{v'_{i}(x_{\ell'})} \ge v'_{i}(x_{\ell})$.
Thus after removing the Nash equilibrium constraint we may now
assume in the worst case that the items of $X_i$ are also 
ordered in decreasing value. That is, the items of $X_i=\set{x_1^{i},\dots,x_{k_i}^{i}}$ decrease in 
both consumption time and in value. Adding this new constraint~(\ref{opt:value}) then gives the program
\begin{eqnarray}
\min \hspace{.5cm} \frac14\cdot\sum_{i=1}^{n} \sum_{\ell=1}^{k_i}\left( v'_{i}(x_{\ell})\cdot (\NETime{x_\ell^i}(v)-\NETime{x_{\ell}^i}(v))\right) \label{opt:OPT''}\\
\text{s.t.}\hspace{2cm} \sum_{i=1}^n \sum_{\ell=1}^{k_i} v'_{i,x_\ell^i}=& OPT&\nonumber\\
   \bigcup_{i=1}^n \set{x_\ell^i:\ell\in [k_i]}=&[n]\nonumber\\
    \sum_{j=1}^n \TrueValue[i](j) =& 1&\forall i\in [n]\nonumber\\
 \sum_{\ell=1}^{k_i} \NEValue[i](x^i_{\ell}) \le& 1&\forall i\in [n]\label{opt:less-unit}\\
    \NETime{j}(v)\leq& \NETime{j+1}(v)&\forall j\in[n-1]\nonumber\\
     \NEValue[i](x^i_\ell)\geq& \NEValue[i](x^i_{\ell+1})&\forall i\in [n], \forall \ell \in[k_i]\label{opt:value}
\end{eqnarray}
Note above that we may replace the unit-sum condition~(\ref{opt:unit-v}) on $v_i$ 
by a constraint~(\ref{opt:less-unit}) only on the values of items in $X_i$.

For the next step, for each agent $i$ we use a change of variables to
$\{y_1^i,y_2^i\dots,y_{k_i}^i\}$.
Specifically, set $y^i_{k_i}=k_i\cdot v'_{i}(x^{i}_{k_i})$.
Then, recursively, set $y^i_{\ell}= \ell\cdot \left(v'_{i}(x^{i}_{\ell})-v'_{i}(x^{i}_{\ell+1})\right)$, for each $\ell=\{k_i-1,\dots, 2,1\}$. 
Observe that 
\begin{equation*}
\sum_{\ell=1}^{k_i} y_\ell^i
\ =\  \sum_{\ell=1}^{k_i} k_i\cdot v'_{i}(x^{i}_{k_i})
\ =\ \sum_{\ell=1}^{k_i} v'_{i}(x_\ell^i)   
\end{equation*}
In particular, the $y^i_\ell$ are non-negative and sum to at most one.
Moreover, we have $v'_{i}(x^{(i)}_\ell)=\sum_{\ell'=\ell}^k \frac{y^i_{\ell'}}{\ell'}$. 
Thus we have
\begin{align*}
    \sum_{i=1}^{n} \sum_{\ell=1}^{k_i}\left( v'_{i,x_\ell^i}\cdot \left(\NETime{x_{\ell}^{i}}(v)-\NETime{x_{\ell-1}^i}(v)\right)\right)
    &=\sum_{i=1}^{n} \sum_{\ell=1}^{k_i}\left( \sum_{\ell'=\ell}^{k_i} \frac{y_{\ell'}^i}{\ell'}\cdot \left(\NETime{x_{\ell}^{i}}-\NETime{x_{\ell-1}^i}\right)\right)\\
    &=\sum_{i=1}^{n}\sum_{\ell'=1}^{k_i}\frac{y_{\ell'}^i}{\ell'} \sum_{\ell=1}^{\ell'} \left(\NETime{x_{\ell}^{i}}-\NETime{x_{\ell-1}^i}\right)\\
    &=\sum_{i=1}^{n} \sum_{\ell'=1}^{k_i}\left( \frac{y_{\ell'}^i}{\ell'}\cdot \NETime{x_{\ell'}^i}\right)
    \end{align*}
So, relabelling $\ell'$ as $\ell$, we obtain the optimization program:
\begin{eqnarray}
\min \hspace{.5cm} \sum_{i=1}^{n} \sum_{\ell=1}^{k_i}\left( \frac{y_\ell^i}{\ell}\cdot \NETime{x_{\ell}^i}\right)&&\label{opt:obj2}\\
\text{s.t.}\hspace{1.75cm} \sum_{i=1}^n \sum_{\ell=1}^{k_i} y_\ell^i=& OPT&\label{opt:y}\\
   \bigcup_{i=1}^n \set{x_\ell^i:\ell\in [k_i]}=&[n]\nonumber\\
    \sum_{\ell=1}^{k_i} y_\ell^i \leq& 1&\forall i\in [n]\\
    y_\ell^i \geq& 0&\forall i\in [n], \forall \ell\in [k_i]\\
     \NETime{j}(v)\leq& \NETime{j+1}(v)&\forall j\in[n-1]\nonumber
\end{eqnarray}
Observe above that, for simplicity, we have removed the factor $\frac{1}{4}$ from the objective function~(\ref{opt:obj2}). We will reincorporate it later. 

Let's now investigate the structure of the optimal solution to this program.
We claim that, for each agent $i$, only one $y^i_\ell$ need be positive.
To see this, assume $y_\ell^i>0$ and $y_{\ell'}^i>0$ for $\ell\neq \ell'$. 
Without loss of generality, let $\frac{\NETime{x_{\ell}^{i}}}{\ell}\leq \frac{\NETime{x_{\ell'}^{i}}}{\ell'}$. Then 
replacing $y_\ell^i$ by $y_\ell^i+y_{\ell'}^i$ and replacing $y_{\ell'}^i$ 
by $0$ decreases (or keeps constant) the objective function. 
We may enforce this by adding a constraint denoting $\ell_i$ to be the 
index which minimizes $\frac{\NETime{x_{\ell}^i}}{\ell}$. For convenience, relabel $\NETime{x_{\ell_i}^i}$ as $t_i$ and $y_{\ell_i}^i$ as $y_i$.
\begin{eqnarray}
\min \hspace{.5cm} \sum_{i=1}^{n} \left( y_i\cdot \frac{t_i}{\ell_i}\right)&&\label{opt:OPT'''}\\
\text{s.t.}\hspace{1.75cm} \sum_{i=1}^n y_i=& OPT&\nonumber\\
 \bigcup_{i=1}^n \set{x_\ell^i:\ell\in [k_i]}=&[n]\nonumber\\
 \ell_i =& \mathrm{arg}\hspace{-0.5mm}\max_{\ell \in [k_i]} \frac{t_i}{\ell} &\forall i \label{opt:li}\\
    y_i \leq& 1&\forall i\in [n]\label{opt:li1}\\
    y_i \geq& 0&\forall i\in [n]\label{opt:li0}\\
     \NETime{j}(v)\leq& \NETime{j+1}(v)&\forall j\in[n-1]\nonumber
\end{eqnarray}
We may now apply a similar trick over pairs of agents.
Assume there are two agents $i,i'$ and two indices $\ell_i, \ell_{i'}$
such that both $y_i$ and $y_{i'}$ are non-integral,
that is, $y_i\in (0,1)$ and $y_{i'}\in (0,1)$. 
Without loss of generality, let $\frac{t_i}{\ell_i}\leq \frac{t_{i'}}{\ell_{i'}}$.
Then replacing $y_i$ by $y_i+\delta$ and replacing $y_{i'}$ 
by $y_{i'}-\delta$, for some small $\delta$, decreases (or keeps constant) the objective function. 
Note that this is a feasible change because constraint~(\ref{opt:y}) is remains tight and constraints~(\ref{opt:li1}) and~(\ref{opt:li0}) still hold.
Moreover, setting $\delta=\min\set{1-y_i, y_{i'}}$ forces either $y_i$ or 
$y_{i'}$ to become integral.
But this implies there is an optimal solution in which exactly one $y_i$
is non-integral.

In particular, let $k=\floor{OPT}$. Then we may relabel the agents so that
$y_i= 1$ for each $1\le i \le k$, 
$y_{k+1}= OPT-k$, and $y_i= 0$ for each $k+2\le i \le n$. 
Thus our problem simplifies to:
\begin{eqnarray*}
\min \hspace{.5cm} \sum_{i=1}^{k} \frac{t_i}{\ell_i}+
(OPT-k)\cdot \frac{t_{k+1}}{\ell_{k+1}}
&&\\
\text{s.t.}\hspace{1.5cm} \bigcup_{i=1}^n \set{x_\ell^i:\ell\in [k_i]}=&[n]
\end{eqnarray*}
Of course, we can further reduce the objective function by removing its second term. This gives:
\begin{equation}
    \begin{aligned}
    \min \hspace{.5cm} \sum_{i=1}^{k} \frac{t_i}{\ell_i}\\
    \text{s.t.}\hspace{.5cm} \bigcup_{i=1}^n \set{x_\ell^i:\ell\in [k_i]}&=[n]
    \end{aligned}
    \label{eq:final-opt}
\end{equation}

To evaluate (\ref{eq:final-opt}), recall that the items are labelled in increasing order of consumption time. These consumption times then satisfy the following property.
\begin{lemma}\label{lem:time}
The consumption time of item $j$ must satisfy $t_j\ge \frac{j}{n}$.
\end{lemma}
\begin{proof}
Each agent has a total consumption rate of $1$.
Consequently, the total consumption rate of all agents is $n$.
Thus at time $t=\frac{j}{n}$ the number of units consumed of all goods is exactly $j$. But the quantity of each good is each is exactly $1$, so at most $j$ goods can have been completely consumed at time $t=\frac{j}{n}$. Hence the consumption time of good $j$ is $t_j\ge \frac{j}{n}$.
\end{proof}

Next we partition the agents into groups depending upon their $\ell_i$.
Specifically, let $I_\tau=\set{i\in [k]: \ell_i\in [2^\tau,2^{\tau+1})}$ for all $0\le \tau \le \ceil{\log n}$. 
Further, for each agent $i\le k$ we let $T_i$ be the consumption time of $x^i_{\ell_i}$.
We then order the agents of $I_\tau$ by increasing $T_i$. We use the notation
$i^\tau_q$ to denote the $q^{th}$ agent of $I_\tau$ in this ordering. 

In particular, by the time $x^i_{\ell_i}$ is consumed for $i=i^\tau_q$, at least $2^{\tau}\cdot q$ items have been consumed. Thus, by Lemma~\ref{lem:time}, the consumption time
of this item is 
$t_{{i_{q}^\tau}}\ge \frac{2^\tau\cdot q}{n}$. In particular,
\begin{equation}
\frac{t_{{i_{q}^\tau}}}{\ell_{i_q^\tau}}
\ \geq \ \frac{2^\tau\cdot q}{n}\cdot\frac{1}{\ell_{i_q^\tau}}
\ \geq \ \frac{2^\tau\cdot q}{n}\cdot\frac{1}{2^{\tau+1}}
\ =\ \frac{q}{2n}  
\end{equation}
We can now obtain a useful bound on the value of the optimization program.
\begin{eqnarray*}
\sum_{i=1}^{k} \frac{t_i}{\ell_i}
&\geq& \sum_{\tau=0}^{\ceil{\log n}} \sum_{q=1}^{|I_\tau|} \frac{q}{2n}\\
&=&\sum_{\tau=0}^{\ceil{\log n}} \frac{|I_\tau| \cdot (|I_\tau|+1)}{4n}\\
&\ge& \frac{1}{4n}\cdot \sum_{\tau=0}^{\ceil{\log n}} |I_\tau|^2\\
&\ge& \frac{1}{4n}\cdot \max_{0\le \tau \le \ceil{\log n}} |I_\tau|^2\\
&\geq& \frac{1}{4n}\cdot \left(\frac{k}{\log n+1}\right)^2\\
&=& \frac{k^2}{4n\log^2 n+\fn{o}{n\log n}}
\end{eqnarray*}
We are now ready to prove our price of anarchy upper bound.
By applying Lemma~\ref{lem:lower-bound}, with $X=[n]$, at any Nash equilibrium each agent is guaranteed
a payoff of at least $\frac{1}{4n}$. Thus the social welfare of any Nash equilibrium is at least $\frac14$.
The price of anarchy is then at most
\begin{align*}
  \max_{k\in[1,n]} &\big\{\min\{\frac{k}{\frac14}\ ,\ \frac{k}{\frac{1}{4}\cdot \frac{k^2}{4n \log^2 n+\fn{o}{n\log n}}}\}\big\}\\
    &\leq \max_{k\in[1,n]} \set{\min\set{4k\ ,\ (16n \log^2 n+\fn{o}{n\log n})/k}}\\
    &= 2\sqrt{n}\cdot \log n+\fn{o}{\sqrt{n\log n}}
\end{align*}
So, the price of anarchy of the Cardinal Probabilistic Serial mechanism is $\bigO{\sqrt{n}\cdot \log n}$ in the symmetric case, even with multi-unit demand agents. .
\end{proof}

\subsubsection{The Asymmetric Case.}\label{sec:asymmetric}

To extend our proof of the upper bound to the asymmetric setting, we proceed by case analysis. The first case is when 
the items which are consumed before time $1$ make the largest contribution to social welfare. 
In this case, the proof used to show Theorem~\ref{thm:sym} can be used to show an upper bound of $O(\sqrt{n}\cdot \log m)$. 
In the second case, when the items which are consumed after time $1$ make the largest contribution to social welfare, 
we show there is then a small set of items that make a significant contribution to social welfare; 
moreover, the agents have a strategy to win these items with constant probability. 

To prove our main positive result, we begin with the following result concerning the first case.

\begin{theorem}\label{thm:main-case-1}
    Let $X$ be the set of items whose consumption time is at most $1$. If $OPT_X$ is the social welfare of the optimal allocation of these items, then the social welfare of any Nash equilibrium is at least $\fn{\Omega}{\frac{OPT_X}{\sqrt{n}\cdot \log |X|}}$ for Cardinal Probabilistic Serial.
\end{theorem}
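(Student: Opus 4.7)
The plan is to mirror the proof of Theorem~\ref{thm:sym} almost verbatim, restricting every construction to items in $X$. Let $\{X_1,\dots,X_n\}$ denote the partition of $X$ induced by the optimal allocation of the items in $X$, so that each bundle $X_i\subseteq X$ and $\sum_i \sum_{j\in X_i} v'_i(j)=OPT_X$. Since every item of $X$ has consumption time at most $1$, the sequential bidding strategy $\hat u_{X_i}$ is a valid deviation for agent $i$ and Lemma~\ref{lem:lower-bound} applies, yielding the pointwise lower bound $v'_i(CPS(v))\ge \tfrac14 \sum_{\ell=1}^{k_i}\bigl(t_{x_\ell^i}-t_{x_{\ell-1}^i}\bigr)\, v'_i(x_\ell^i)$ exactly as in the symmetric case.

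With this in hand, I would set up the same optimization program as in Theorem~\ref{thm:sym}, but with the items ranging over $X$ (of cardinality $|X|$) rather than over $[n]$. The chain of relaxations --- dropping the Nash-equilibrium constraint, re-sorting each $X_i$ in decreasing order of value, changing variables to $y_\ell^i$, concentrating each agent's mass on a single index $\ell_i$, and finally making $y_i$ integral for all but one $i$ --- transfers verbatim, leaving the reduced objective $\sum_{i=1}^{k} t_i/\ell_i$ with $k=\lfloor OPT_X\rfloor$.

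The only quantitative input that needs any reconsideration is the analog of Lemma~\ref{lem:time}. The key observation here is that every item of $X$ is consumed by time $1$ while every item outside $X$ has strictly later consumption time, so the items of $X$ are exactly the first $|X|$ entries in the global consumption ordering. Since the agents collectively consume at total rate $n$, the $j$-th such item still satisfies $t_j \ge j/n$. The dyadic partition $I_\tau=\{i : \ell_i\in[2^\tau,2^{\tau+1})\}$ now has $\tau$ ranging over $0\le\tau\le\lceil\log|X|\rceil$ (because $\ell_i\le k_i\le|X|$) rather than $\lceil\log n\rceil$, and the same averaging gives $\sum_{i=1}^k t_i/\ell_i\ge\Omega\bigl(k^2/(n\log^2|X|)\bigr)$. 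Combining this with the trivial welfare baseline of $\Omega(1)$ and balancing the two bounds at $k=\Theta(\sqrt n\log|X|)$ produces the claimed price of anarchy $O(\sqrt n\log|X|)$, i.e.\ Nash welfare $\Omega\bigl(OPT_X/(\sqrt n\log|X|)\bigr)$.

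The step I expect to be the main obstacle is recovering the $\Omega(1)$ welfare baseline in the asymmetric setting, since agents may place most of their unit-sum value on items outside $X$ and so each agent's contribution to the baseline can be much smaller than $\tfrac{1}{4n}$. The cleanest fix is that the ratio is being measured against $OPT_X$, which already ignores value outside $X$: one can replace each $v'_i$ by its restriction to $X$ (suitably renormalized) to reduce to the symmetric-style argument, or alternatively argue directly that the contribution of the ``leftover'' non-integral agent in the rounding step is at most $OPT_X/k$ and is therefore absorbed in the final balancing between $4k$ and $O(n\log^2|X|)/k$.
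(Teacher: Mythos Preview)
Your plan is correct and is exactly the paper's approach: restrict the Theorem~\ref{thm:sym} argument to the items of $X$, note that $\ell_i\le k_i\le|X|$ so the dyadic partition $I_\tau$ ranges over $0\le\tau\le\lceil\log|X|\rceil$ buckets, and replace $\log n$ by $\log|X|$ in the final estimate. The paper's proof is nothing more than this observation.

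On your flagged obstacle, the $\Omega(1)$ welfare baseline: the resolution is simpler than either of your proposed fixes (and the paper silently assumes it carries over). The quantity being lower-bounded is the \emph{total} Nash welfare, not welfare restricted to items in $X$, so it is irrelevant whether an agent's unit-sum value is concentrated outside $X$. Each agent can still guarantee payoff at least $1/n$ by deviating to the sequential strategy on \emph{all} $m$ items ordered by true value: by Lemma~\ref{lem:time} applied to the whole instance, at time $j/n$ at most $j$ items have been fully consumed, so under this deviation the agent is consuming one of its top $j$ items during $[(j-1)/n,\,j/n]$, and summing gives payoff $\ge\sum_j v'_{i,(j)}/n=1/n$. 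Hence Nash welfare $\ge 1$ unconditionally, and the balancing of $O(k)$ against $O(n\log^2|X|/k)$ proceeds exactly as before. Your fix~1 (renormalizing to $X$) is unnecessary and would change the meaning of $OPT_X$; your fix~2 is aimed at the dropped non-integral term in the rounding, which is a separate (and harmless) relaxation, not the baseline.
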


\begin{proof}
    The proof of Theorem~\ref{thm:sym} for the symmetric case applies in bounding the value the agents get in a Nash equilibrium using only items in $X$. However, we are not guaranteed to have $n$ items in $X$, instead we have $|X|\leq n$.\footnote{We remark that Lemma~\ref{lem:time} implies that $|X|\leq n$.} So, denoting $k=|X|$, when we simplify the optimization program 
    we get the following
    \begin{align*}
        \min \hspace{.5cm} \sum_{i=1}^{k} \frac{t_i}{\ell_i}\\
        \text{s.t.}\hspace{.5cm} \bigcup_{i=1}^n \set{x_\ell^i:\ell\in [k_i]}&=[k]
    \end{align*}
    
    We can then mimic the rest of the proof of Theorem~\ref{thm:sym} and split the items into $\log k$ sets containing items $2^\tau$ to $2^{\tau+1}-1$. Replacing $\log n$ by $\log k$ in the proof, we get a bound of $2\sqrt{n}\cdot \log k+\fn{o}{\sqrt{n\log k}}$, as required.
    \qed
\end{proof}

In order to prove the second case, neither the single-minded bidding strategy nor the sequential bidding strategy is sufficient. We need an additional alternate strategy which we use on items with high consumption times. To deal with these items, we will take advantage of the flexibility we get from having a cardinal mechanism and use a {\em uniform bidding} strategy. With this uniform bidding strategy, agents will report a value of $\frac{1}{k}$ for each item in a set $\set{x_1,x_2,\dots,x_k}$.

\begin{theorem}\label{thm:main-case-2}
    For Cardinal Probabilistic Serial, for any agent $i$, let $v'_i$ be the true value $i$ has for the items and let $v$ be any pure Nash equilibrium with respect to $v'$. 
    Then, for any sequence of items $X=\set{x_1,x_2,\dots,x_k}$ whose consumption times are bounded below by $q\in \bb{N}$, for some $k\leq q$ it holds that: 
    \[v'_i(CPS(v))\ \geq\ \frac{1}{2}\cdot \sum_{\ell=1}^k \TrueValue[i](x_{\ell}).\]
\end{theorem}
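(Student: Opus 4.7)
The plan is to analyze the deviation where agent $i$ bids uniformly on $X$. Define the strategy $u_i = \hat{u}_X$ by $u_i(x_\ell)=1/k$ for each $x_\ell\in X$ and $u_i(j)=0$ for $j\notin X$. Because $v$ is a Nash equilibrium, $v'_i(CPS(v)) \geq v'_i(CPS(u_i, v_{-i}))$, so it suffices to lower bound the payoff under the deviation. Under $(u_i, v_{-i})$, agent $i$'s consumption rate on each remaining item of $X$ equals $1/|X\cap r(t)| \geq 1/k$; in particular, agent $i$ devotes unit rate total to $X$ at all times while $X\cap r(t)\neq \emptyset$.

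Let $T_1$ be the earliest time under $(u_i, v_{-i})$ at which any item of $X$ is fully consumed. The key claim is that $T_1\geq k/2$. Granted the key claim, during $[0, T_1]$ agent $i$ consumes each $x_\ell$ at rate exactly $1/k$, so by time $T_1$ agent $i$ has consumed at least $T_1/k \geq 1/2$ of each item in $X$. Since agent $i$'s per-item rate on the remaining $X$-items only grows once items begin to be exhausted, the final consumption of every $x_\ell\in X$ by agent $i$ is at least $1/2$. This yields $v'_i(CPS(u_i, v_{-i})) \geq \frac{1}{2}\sum_{\ell=1}^{k} v'_i(x_\ell)$, which combined with the Nash condition gives the theorem.

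To prove the key claim I would exploit the consumption-time hypothesis on $X$. The total consumption rate of any $x_\ell$ under $v$ is non-decreasing in time (each $v_j(r(v,\tau))$ in the rate formula is non-increasing) and integrates to at most $1$ on $[0, q]$ because $x_\ell$'s consumption time under $v$ is at least $q$; by the standard averaging argument for non-decreasing nonnegative functions, the consumption of $x_\ell$ under $v$ at any time $t\leq q$ is at most $t/q$. At $t=k/2$ this is at most $1/2$, since $k\leq q$. If one further establishes the monotonicity claim that under $(u_i, v_{-i})$ the consumption of $x_\ell$ by agents $-i$ at any time $t\leq k/2$ is at most the corresponding quantity under $v$, then at $t=k/2$ the $-i$-contribution is at most $1/2$; combined with agent $i$'s own consumption of $(k/2)/k=1/2$, the total stays at most $1$, so no item of $X$ can be fully consumed strictly before $k/2$, establishing $T_1\geq k/2$.

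The main obstacle is this monotonicity claim. The natural route is an interpolation argument in the spirit of Lemma~\ref{lem:minimal-t}: iteratively shift small amounts of agent $i$'s reported weight from items outside $X$ to items inside $X$, moving $v_i$ toward $u_i$ through a sequence of valuations, and argue at each step that $-i$'s consumption of each $x_\ell$ at any fixed time $t\leq k/2$ weakly decreases. Intuitively, under $u_i$ agent $i$ consumes no items outside $X$, so those items are consumed only by $-i$ (at aggregate rate at most $n-1$); this keeps them in $r(t)$ longer, which enlarges every denominator $v_j(r(t))$ and shrinks every $v_j(x_\ell)/v_j(r(t))$. Making this rigorous requires careful handling of knock-on effects on the order in which non-$X$ items exit $r(t)$, analogous to the two-case analysis in the proof of Lemma~\ref{lem:minimal-t}, and this is the technical heart of the proof.
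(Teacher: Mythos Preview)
Your proposal is correct and follows essentially the same approach as the paper: deviate to the uniform bid $u_i$ on $X$, use the non-decreasing consumption-rate observation to bound the $-i$ consumption of each $x_\ell$ under $v$ by at most $1/2$ at time $k/2\le q/2$, and invoke a Lemma~\ref{lem:minimal-t}--style monotonicity argument to transfer this bound to $(u_i,v_{-i})$ and conclude agent $i$ secures $\ge 1/2$ of every item in $X$. The paper's own proof is terser---it simply asserts ``applying the same argument as for the single-minded strategy, no item will be consumed faster under the uniform strategy until an item from $X$ has been consumed''---whereas you articulate the needed monotonicity claim more precisely and flag it as the technical crux; but the content and the structure of the argument are the same.
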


\begin{proof}
    Since all the items have a consumption time at least $q$ and the consumption rate is non-decreasing, the amount which has been consumed at time $\frac{q}{2}$ must be at most half of the amount left at time $q$, which is at most 1. This implies that at time $\frac{q}{2}$, at most half of each item in $X$ has been consumed.

    Now consider the uniform bidding strategy on $X$. Applying the same argument as for the single-minded strategy, no item will be consumed faster under the uniform strategy until an item from $X$ has been consumed. At time $\frac{k}{2}$, unless the remaining agents have consumed more than $\frac{1}{2}$ of some item of $X$, the agent $i$ who switches to the uniform bidding strategy has consumed exactly $\frac{1}{2}$ of each item in $X$. Up to time $\frac{q}{2}$, the remaining agents consume items of $X$ slower than in the Nash equilibrium. Namely, since the consumption time of each item of $X$ is at least $q$, the agents consume at most half of each item by time $\frac{q}{2}$. In particular, since $k\leq q$ they consume at most half of each item by time $\frac{k}{2}$.

    Now agent $i$ consumes exactly $\frac{1}{2}$ a unit of each of these items while the remaining agents consume no more than $\frac{1}{2}$ a unit of each item in total at time $\frac{k}{2}$. Thus $i$ consumes exactly half of the items of $X$ by time $\frac{k}{2}$ using the uniform bidding strategy. Consequently, agent $i$ obtains a value at least half of its total value for $X$, as desired. 
\end{proof}

We can now prove our main positive result.

\CPSAnarchy*

\begin{proof}
First, assume that $OPT\leq 4\sqrt{n}\cdot \log m$. Then, by Lemma~\ref{lem:time}, up to time $\frac{j}{n}$, the agents could consume from their top $j$ items using a sequential strategy. Consequently, each agent can guarantee an expected payoff of $\frac{1}{n}$, which implies the ratio between the optimal value and the value of the worst Nash equilibrium is at most $\frac{1}{4\sqrt{n}\cdot \log m}$.
    
Second, assume $OPT>4\sqrt{n}\log m$. Take a Nash equilibrium $\NEValue$. Now let $X_0$ be the set of items with consumption time between $0$ and $1$ and let $X_\ell$ be the set of items with consumption time between $2^{\ell-1}$ and $2^{\ell}$, for $\ell=1,\dots,\log (m/n)$. Denote by $OPT_\ell$ the contribution of items in $X_\ell$ to the value of $OPT$. 

We have two cases. Either $OPT_0\geq \frac{OPT}{2}$ or $\sum_{\ell=1}^{\log(m/n)} OPT_\ell\geq \frac{OPT}{2}$. In the former case, noting that $|X_0|\leq n$, the result follows from Theorem~\ref{thm:sym}. 
    In the latter case, there must exist some $\ell$ such that $OPT_{\ell}\geq \frac{OPT}{2\log m}>2\sqrt{n}$. We can then apply Theorem~\ref{thm:main-case-2}. Consider the set $\mathcal{A}$ of agents who have $2^\ell\cdot \sqrt{n}$ items. Because there are at most $2^\ell\cdot n$ items in $X_\ell$, if follows that $|\mathcal{A}|\le \sqrt{n}$. Given the valuations are unit-sum, this implies that the contribution of the remaining agents to the value of $OPT_\ell$ is at least $\frac{OPT}{4\log n}$. 

Next, for any $i\notin \mathcal{A}$, let the set $X_{i,\ell}$ consist of their favorite $2^{\ell-2}$ items (or their favorite item if $\ell=1$) from amongst the items whose consumption time is between $2^{\ell-1}$ and $2^{\ell}$ (or all items if there are less than $2^{\ell-2}$ such items). Then, $\sum_{j\in X_{i,\ell}} \TrueValue[i](j)$ is at least $\frac{1}{2\sqrt{n}}$ of the agent's contribution to $OPT_\ell$. In particular:
    \[\sum_{i\in [n]\setminus \mathcal{A}} \sum_{j\in X_{i,\ell}} \TrueValue[i](j) 
    \ \geq\  \frac{OPT}{4\log m}\cdot \frac{1}{2\sqrt{n}}
    \ =\ \frac{OPT}{8\sqrt{n}\cdot \log m}.\]

    That is, if each agent can be guaranteed a constant proportion of $X_{i,\ell}$ in the Nash equilibrium, then the desired result holds. Using the uniform strategy, we can now apply Theorem~\ref{thm:main-case-2} with a lower bound of $\frac{OPT}{16 \sqrt{n}\cdot \log m}$.
    Hence the agents can get a constant proportion of the $X_{i,\ell}$ and their contribution to the optimal welfare is $\fn{\Omega}{\frac{OPT}{\sqrt{n}\log m}}$. So the price of anarchy is at most $O(\sqrt{n}\cdot \log m)$.
    \qed
\end{proof}

We remark that this proof is for pure Nash equilibria. However, as we show in Appendix~\ref{sec:mixed}, the result extends to mixed Nash equilibria and to coarse correlated equilibria. Furthermore, mixed Nash equilibria and coarse correlated equilibria are guaranteed to exist in this model.

\subsection{An Upper Bound on the Price of Anarchy for Probabilistic Serial}\label{sec:UB-PS}

We now use the ideas we have developed for Cardinal Probabilistic Serial to obtain the same upper bound on the price of anarchy for Probabilistic Serial. 
Fortunately, both the single-minded strategies and the sequential bidding strategies are applicable to Probabilistic Serial. 
Moreover, since we have bounds on the value obtained by each agent by switching to the sequential strategy with Cardinal Probabilistic Serial, 
the same bound also applies to Probabilistic Serial. 
Indeed, the set of possible strategies for the remaining agents for Probabilistic Serial is contained within the set of strategies for Cardinal Probabilistic Serial.
This will imply that the bounds obtained using only the sequential strategy still hold in the special case where $m\leq n$. 

Unfortunately, for the case $m>n$, we cannot extend the proof idea for items with a large completion time to work for Probabilistic Serial.
In particular, the uniform bidding strategy does not work for Probabilistic Serial which is an ordinal mechanism.
Instead, we apply a different approach based upon the following lemma.

\begin{lemma}\label{lm:UB-PS}
    For Probabilistic Serial, for any agent $i$, let $v'_i$ be the true value $i$ has for the items and let $v$ be any pure Nash equilibrium with respect to $v'$. 
    Then, for any sequence of items $X=\set{x_1,x_2,\dots,x_k}$ whose consumption times are bounded below by $q\geq 2$, for some $k< \floor{q}$ it holds that: 
    \[v'_i(CPS(v))\ \geq\ \sum_{\ell=1}^k \TrueValue[i](x_{\ell}).\]
\end{lemma}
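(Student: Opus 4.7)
The plan is to prove the lemma via a deviation argument. Let $\hat{v}_i$ be the alternative preference report for agent $i$ that places the items of $X$ at the top of the ranking, in the order $x_1 \succ x_2 \succ \cdots \succ x_k$, with the remaining items ranked arbitrarily below. Since $v$ is a pure Nash equilibrium, $v'_i(PS(v)) \ge v'_i(PS(\hat{v}_i, v_{-i}))$, so it suffices to show that under the profile $(\hat{v}_i, v_{-i})$ agent $i$ consumes each item of $X$ in its entirety, yielding payoff at least $\sum_{\ell=1}^{k} v'_i(x_\ell)$.

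The key structural fact specific to Probabilistic Serial is that each agent's rate on any given item is either $0$ or $1$ at any time, and once an agent begins consuming an item they continue without interruption until it is exhausted. Consequently, the total consumption rate on every item is a non-decreasing, non-negative, integer-valued function of time. Combined with the hypotheses that $x_\ell$'s consumption time in $v$ is at least $q$ and its total consumed quantity equals $1$, this forces the rate on $x_\ell$ to be identically $0$ on $[0,\, q-1]$. Since $k \le \floor{q}-1 \le q-1$, no agent consumes any item of $X$ in the Nash equilibrium $v$ before time $k$.

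The central step is the analogous statement under the deviation: no agent other than $i$ consumes any item of $X$ before time $k$ under $(\hat{v}_i, v_{-i})$. I argue by contradiction. Let $T$ be the infimum of times at which some $i' \ne i$ begins consuming some $x_\ell \in X$ under the deviation, and suppose $T < k$. By the minimality of $T$, only agent $i$ consumes from $X$ on $[0,T)$, so by time $T$ agent $i$ has exhausted exactly $x_1, \ldots, x_{\floor{T}}$ (and is partway through $x_{\floor{T}+1}$) at rate $1$. For $i'$ to begin $x_\ell$ at time $T$, every item ranked above $x_\ell$ in $v_{i'}$ must be exhausted under the deviation. I split these blockers into those inside $X$ and those outside. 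By a monotonicity argument analogous to Lemma~\ref{lem:minimal-t} --- shifting agent $i$'s preference from $v_i$ to $\hat{v}_i$ by adjacent swaps only delays the exhaustion of items outside $X$ --- every outside blocker exhausted by time $T$ under the deviation is also exhausted by time $T$ in $v$. The second paragraph then forces agent $i'$'s top remaining item in $v$ at time $T$ to lie in the inside part of the blocker set, meaning $i'$ is consuming an item of $X$ in $v$ at time $T < q-1$, contradicting the second paragraph. Hence $T \ge k$, and throughout $[0,k]$ agent $i$ consumes $X$ uninterrupted at rate $1$, winning each $x_\ell$ with probability $1$.

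The principal obstacle is the asymmetric coupling between $v$ and the deviation: items outside $X$ are exhausted no earlier under $(\hat{v}_i, v_{-i})$ (since agent $i$ no longer helps consume them), while items inside $X$ may be exhausted earlier (since agent $i$ is now consuming them). No single monotonicity bound compares the two processes uniformly, so the blockers must be split by whether they lie in $X$, and the minimality of $T$ must be used to pin down precisely which inside items agent $i$ has exhausted by time $T$, so that the strong PS-specific Nash-equilibrium constraint that no agent consumes any $x_\ell$ in $v$ before time $q-1$ can be invoked to produce the contradiction.
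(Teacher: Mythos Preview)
Your proof is correct and follows essentially the same approach as the paper: both argue that integer consumption rates in PS force the rate on each $x_\ell$ to vanish on $[0,\,q-1]$ in the equilibrium, and that under the deviation items outside $X$ are exhausted no earlier, so the other agents still cannot reach $X$ before time $k\le \lfloor q\rfloor-1$, leaving agent $i$ to consume all of $X$ uninterrupted. Your contradiction argument via the minimal switching time $T$ and the inside/outside blocker split is a more explicit formalization of what the paper compresses into the one-line assertion that ``it will take longer for the remaining agents to switch from consuming the other items to consuming items from $X$.''
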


\begin{proof}
    If agent $i$ bids for the items of $X$ from time $0$ to $\floor{q}-1$, then the remaining items will be consumed more slowly. In particular, it will take longer for the remaining agents to switch from consuming the other items to consuming items from $X$. But the consumption time of items in $X$ is at least $q$ and the minimum non-zero consumption rate in Probabilistic Serial is 1.
    Therefore, no agent may consume from the set $X$ before time $\floor{q}-1$. Thus, $i$ is the only agent consuming these items so it consumes all of $X$ for a duration of time $k$. Hence, $i$ is guaranteed to win all the items in $X$ and the result follows.
\end{proof}

We can now prove our upper bound for the price of anarchy of Probabilistic Serial. 

\PSAnarchy*

\begin{proof}
First, assume that $OPT\leq 4\sqrt{n}\cdot \log m$. Then a truthful report of the preferences for Probabilistic Serial will guarantee each agent an expected payoff of $\frac{1}{n}$ (similarly to the sequential strategy for Cardinal Probabilistic Serial).
    This implies that the ratio between the optimal value and the value of the worst Nash equilibrium is at most $\frac{1}{4\sqrt{n}\log m}$.
    
Second, assume $OPT>4\sqrt{n}\log m$. As in the proof of Theorem~\ref{thm:CPS-Anarchy}, denote the contribution of the items whose consumption time is at most $1$ by $OPT_0$ and the contribution of those item with consumption time between $2^{\ell-1}$ and $2^{\ell}$ as $OPT_\ell$.
    We have two cases.
   If $OPT_0\geq \frac{OPT}{2}$ then the same proof we used for Cardinal Probabilistic Serial works because using the sequential strategy for our bound is still valid. 

   If $\sum_{\ell=1}^{\log(m/n)} OPT_\ell\geq \frac{OPT}{2}$ then take $\ell$ maximizing $OPT_\ell$. We can now find disjoint sets $X_{i,\ell}$ so that $|X_{i,\ell}|\leq 2^{\ell-1}$ and $\sum_{i=1}^n \sum_{j\in X_{i,\ell}}\TrueValue[i](j)\geq \frac{OPT}{8\sqrt{n}\log m}$.
   If each agent can be guaranteed a constant proportion of $X_{i,\ell}$ in the Nash equilibrium, then the result holds. Using Lemma~\ref{lm:UB-PS}, if $\ell\neq 1$, agent $i$ can obtain all of $X_{i,\ell}$ giving a bound of $\frac{OPT}{8\sqrt{n} \log m}$. On the other hand, if $\ell=1$, the argument used for Cardinal Probabilistic Serial can be applied to show the agent gets a half unit of each item in $X_{i,\ell}$. This gives a lower bound of $\frac{OPT}{16\sqrt{n}\log m}$.
     Hence the agents can get a constant proportion of the $X_{i,\ell}$ and their contribution to the optimal welfare is $\fn{\Omega}{\frac{OPT}{\sqrt{n}\log m}}$. So the price of anarchy is at most $O(\sqrt{n}\cdot \log m)$.
\end{proof}

Ergo, the price of anarchy bound for the asymmetric one-side allocation problem applies to both
Probabilistic Serial and Cardinal Probabilistic Serial.

\subsection{A Lower Bound on the Price of Anarchy}\label{sec:LB} 
We now present two lower bounds on the price of anarchy.
For our first lower bound, we verify that Theorem~\ref{Anarchy-Gen} extends to the symmetric one-sided allocation problem with multi-unit demand agents.
\begin{theorem}\label{Anarchy-Alloc}
For the symmetric one-side allocation problem, the pure price of anarchy of any unit-sum fair mechanism is $\Omega(\sqrt{n})$.
\end{theorem}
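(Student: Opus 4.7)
My plan is to show that the hard instance underlying the proof of Theorem~\ref{Anarchy-Gen} in~\cite{CFF15} transfers essentially verbatim to the symmetric multi-unit demand allocation setting. The key observation is that in the symmetric case $m=n$ the fairness constraint ``each agent obtains exactly one item in expectation'' is precisely the aggregate marginal constraint that a matching imposes: writing $p_{ij}$ for the probability that item $j$ is allocated to agent $i$ under the mechanism, any fair allocation mechanism satisfies $\sum_j p_{ij}=1$ for every $i$ and $\sum_i p_{ij}\le 1$ for every $j$, which are the defining marginals of a fractional perfect matching. Since every social welfare expression $\sum_{i,j} v'_i(j)\,p_{ij}$ depends only on this matrix of marginals, bounds derived from those marginals carry over from fair matching mechanisms to fair multi-unit demand allocation mechanisms.

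I would then proceed as follows. First, import the CFF15 hard instance: $n$ agents and $n$ items with unit-sum true valuations $v'$, together with pure Nash equilibrium reports $v$ under any fair matching mechanism, such that the optimal matching has social welfare $\Theta(\sqrt{n})$ while the welfare at $v$ is $O(1)$. On that instance the welfare-maximising additive allocation already takes the form of a perfect matching, so OPT is unchanged when we reinterpret $v'$ as additive multi-unit demand valuations; in particular OPT$\,=\Theta(\sqrt{n})$ still holds in our setting. Second, verify that the same reports $v$ form a pure Nash equilibrium under any fair allocation mechanism $M$. The expected payoff of agent $i$ under $M$ is $\sum_j v'_i(j)\,p_{ij}$, a convex combination of the item values weighted by allocation probabilities summing to $1$ by fairness; this is formally identical to the payoff of the corresponding matching mechanism, so the best-response calculation is unchanged and the Nash equilibrium property is preserved.

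The main obstacle is showing that the social welfare at $v$ remains $O(1)$ even though $M$ may, in principle, output non-matching bundles. The subtlety is that the marginal matrix $(p_{ij})$ produced by a fair allocation mechanism lies in a polytope strictly larger than the fair-matching polytope (integrality is dropped). I would address this by rerunning the CFF15 welfare upper bound in purely LP terms, using only the marginal conditions $\sum_j p_{ij}=1$ and $\sum_i p_{ij}\le 1$; since the social welfare is a linear function of $(p_{ij})$ and the matching polytope is the integral hull of the relaxed allocation polytope, any LP-based upper bound derived from CFF15 applies to every fair allocation mechanism. Combined with OPT$\,=\Theta(\sqrt{n})$ this yields PoA$\,\ge\Omega(\sqrt{n})$ as required. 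The most delicate point will be ensuring that the CFF15 argument is indeed LP-based rather than relying on the integrality of matchings; if any step of their proof invokes integrality, I expect it can be replaced by a fractional counterpart at the cost of only constant factors.
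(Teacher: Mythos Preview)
Your proposal has a real gap in the second step. The CFF15 construction is mechanism-dependent: one fixes the near-uniform profile $v$, takes a Nash equilibrium of the \emph{given} mechanism at $v$, and only then defines the true valuations $v'$ by selecting, for each of the $\sqrt{n}$ contested items, an agent who receives it with probability at most $1/\sqrt{n}$ in that particular equilibrium. There is no universal pair $(v',v)$ that is an equilibrium for every fair mechanism, so you cannot import an equilibrium produced for some matching mechanism and claim it remains an equilibrium for an unrelated allocation mechanism $M$. Your sentence ``this is formally identical to the payoff of the corresponding matching mechanism, so the best-response calculation is unchanged'' is where the argument breaks: the payoff $\sum_j v'_i(j)\,p_{ij}$ has the same algebraic form, but the probabilities $p_{ij}$ are a function of the report profile \emph{through the mechanism $M$}, and different mechanisms induce entirely different best-response correspondences.

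The paper's proof runs the CFF15 construction directly inside the fair allocation mechanism $M$, and this is precisely where your marginal/LP observation does the work. Fairness ($\sum_j p_{ij}=1$ for every report profile) is what makes the equilibrium-preservation step succeed: for an unchanged near-uniform agent any report yields payoff $\tfrac{1}{n}+O(\varepsilon)$, while for a redefined agent $i_j$, maximising the original near-uniform payoff is \emph{equivalent} to maximising $p_{i_j,j}$ (a one-line computation using $\sum_{j'} p_{i_j,j'}=1$), which is exactly the new payoff. The welfare bound then follows from $p_{i_j,j}\le 1/\sqrt{n}$ and, as you anticipated, uses nothing about integrality of matchings. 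So your underlying intuition is correct and the repair is minor: apply the entire CFF15 argument to $M$ itself rather than attempting to transplant an equilibrium constructed for a different mechanism.
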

\begin{proof}
Consider the example used by Christodoulou et al.~\cite{CFF15} to prove Theorem~\ref{Anarchy-Gen} for the matching problem.
Take the following valuation function:
\[v_i(j)=\begin{cases}
\frac{1}{n}+\varepsilon\text{ if }i=j\cdot \sqrt{n}+i'\text{ for }i'=1,\dots,\sqrt{n}\\
\frac{1}{n}-\frac{\varepsilon}{n-1}\text{ otherwise}
\end{cases}\]
Now consider a Nash equilibrium for $v$.  
Let $i_j$ be the index of the agent who has positive value for item $j$
but has the smallest probability of being assigned $j$ in the Nash equilibrium.
Next, create a new valuation $v'_i(j)$ which is $v_i(j)$ if $i\neq i_{j'}$ for any $j'$ and which is $1\text{ if }i=i_j$ and $0\text{ if }i=i_{j'}\neq i_j$

Since the agents get the same number of items in expectation, a Nash equilibrium for $v$ is also a Nash equilibrium for $v'$ 
where the agents maximize their probability of getting their favorite item.
The social welfare of the optimal allocation is $\sqrt{n}$. 
At the Nash equilibrium, since the agents $i_j$ get assigned $j$ with probability at most $\frac{1}{\sqrt{n}}$,
the social welfare is at most $\sqrt{n}\cdot \frac{1}{\sqrt{n}}+\sqrt{n}\cdot \left(1-\frac{1}{\sqrt{n}}\right)\cdot \left(\frac{1}{n}+\frac{1}{n^3}\right)\leq 3$.
This gives a lower bound of $\Omega(\sqrt{n})$ on the price of anarchy.
\qed
\end{proof}

This bound is not surprising as with multi-unit demand agents the optimal allocation has higher welfare than the optimal matching.
However, our second lower bound is more surprising: for the asymmetric setting, the price of anarchy deteriorates with the number of items!

\CPSAnarchyAsym*

\begin{proof}
Assume we are given $k$ and let $q=o(k)$.
Let the number of agents be $n=k+q$ and the number of items be $m=2^{q}-1$.
Assume that, for $i=1,\dots,k$, agent $i$ has value $1$ for item $1$ and $0$ for the remaining items.
For $i=k+1,\dots,k+q$, agent $i$ has value $\frac{1}{2^i}$ for items $2^{i-k}$ to $2^{i+1-k}-1$. 
Assume, without loss of generality, that when the items they have positive value for have been consumed, 
agents consume from the lowest indexed item at a rate of 1.\footnote{This can be done without loss of generality using the justification for the sequential strategy given in Appendix~\ref{sec:epsilon}.}

Then, the items will be consumed in order and until item $j$ has been consumed, agents $1$ to $k+\floor{\log j}$ are only consuming items $1$ to $j$. So item $j$ is consumed at time $\frac{j}{k+\floor{\log j}}\geq \frac{2j}{n}$.
Thus, agents $1$ to $k$ are consuming item $1$ which they value at $1$ for a time duration of $\frac{1}{k}<\frac{4}{n}$.
Then agents $k+1$ to $k+q$ will consume items $2^{i-k}$ to $2^{i+1-k}-1$, which they value at $\frac{1}{2^{i-k}}$, at a rate of $1$ for a time duration of at most $\frac{2^{i+1-k}-1}{n}$. Hence the value they obtain is at most $\frac{2^{i+1-k}-1}{n}\cdot \frac{1}{2^{i-k}}<\frac{4}{n}$.

Since every agent has value at most $\frac{4}{n}$ from the mechanism, the total social welfare is at most $4$. However, the optimal allocation has welfare $q$ so the price of anarchy of Probabilistic Serial and Cardinal Probabilistic Serial is at least $\Omega(q) =\Omega(\log m)$.

Note, it is easy to see that $O(n)$ is always an upper bound on the price of anarchy in any instance. 
\qed
\end{proof}

We conjecture that the price of anarchy for $n$ agents and $m$ items is, in fact, $\Theta(\min\set{n\, ,\, \sqrt{n}\cdot \log m})$. This would imply our
upper bound of $O(\sqrt{n}\cdot \log m)$ is tight.

\section{Related Problems}\label{sec:related}
\subsection{The Price of Stability}\label{sec:price-of-stability}
We obtain similar bounds for the price of stability.
Below the upper bound follows immediately from our price of anarchy bound. The lower bound is given in Appendix~\ref{app:price-of-stability}.
\begin{theorem}\label{thm:CPS-Stability}
For the one-sided allocation problem with multi-unit demand agents, the price of stability 
of Cardinal Probabilistic Serial is at least $\Omega(\sqrt{n})$ and at most $O(\sqrt{n}\cdot \log n)$.
\end{theorem}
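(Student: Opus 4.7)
The plan is to handle the upper and lower bounds separately. The upper bound follows immediately: the price of stability is by definition at most the price of anarchy, so invoking the symmetric case bound from Theorem~\ref{thm:sym} (which is the core content of Theorem~\ref{thm:CPS-Anarchy} specialized to $m=n$) yields $O(\sqrt{n}\cdot\log n)$ at once.

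For the $\Omega(\sqrt{n})$ lower bound, my plan is to exhibit an instance with $n$ agents and $n$ items for which \emph{every} pure Nash equilibrium has social welfare $O(1)$, while the optimum is $\Omega(\sqrt{n})$. A natural starting point is the valuation profile used in the proof of Theorem~\ref{Anarchy-Alloc}: each agent has $\sqrt{n}$ ``peaked'' items worth $\frac{1}{n}+\varepsilon$ each (the remaining items sharing the residual mass uniformly), with the peaked sets arranged in a balanced bipartite structure so that each item is peaked for exactly $\sqrt{n}$ agents. Matching each agent to one of its peaked items gives the optimum welfare $\Omega(\sqrt{n})$ for suitably chosen $\varepsilon$.

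The main step is then to argue that no Nash equilibrium of CPS can improve on this bound by more than a constant factor. I would use a symmetrization argument: the instance is invariant under a transitive group of permutations of the agents and items, and the CPS consumption dynamics are equivariant under these permutations. Hence, for any pure equilibrium one can average over the group orbit to obtain a symmetric equilibrium with no lower social welfare, reducing the analysis to the symmetric case. Under full symmetry, every agent has the same expected payoff, and a counting argument across the $\sqrt{n}$ agents competing for each peaked item forces the per-agent payoff to be $O(1/\sqrt{n})$, making the total welfare $O(1)$.

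The hard part will be justifying the reduction to symmetric equilibria, since CPS best responses need not be convex and averaging Nash equilibria need not preserve the equilibrium property. The likely fix is to rule out highly concentrated equilibria directly: if some agent bids to win a peaked item with constant probability, then by the symmetry of the construction a competing agent has a strictly profitable deviation to a rival peaked item, destabilizing the putative equilibrium. Handling this case analysis carefully, together with care over tie-breaking for items consumed simultaneously under CPS, is where I anticipate the main technical difficulty.
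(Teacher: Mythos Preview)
Your upper bound argument is correct and matches the paper: the price of stability is at most the price of anarchy, so Theorem~\ref{thm:CPS-Anarchy} (specialized to $m=n$, i.e.\ Theorem~\ref{thm:sym}) immediately gives $O(\sqrt{n}\log n)$.

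Your lower bound plan has a genuine gap. First, the instance you cite from Theorem~\ref{Anarchy-Alloc} is not the one you describe: there each agent peaks on a \emph{single} item (value $\frac{1}{n}+\varepsilon$), not $\sqrt{n}$ items, and the optimum welfare on the base profile $v$ is only $1+n\varepsilon$; the $\Omega(\sqrt{n})$ optimum there comes from the \emph{modified} profile $v'$ constructed a posteriori from a specific equilibrium --- a price-of-anarchy device, not a price-of-stability one. Second, and more seriously, you correctly note that symmetrization does not preserve the Nash property for CPS, and the proposed workaround (``rule out concentrated equilibria by exhibiting a profitable deviation for a rival'') is not a proof: in any equilibrium every agent is \emph{already} best responding, so you cannot produce a profitable deviation without a concrete mechanism-specific calculation, and none is given. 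The case analysis you sketch does not close.

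The paper's route is entirely different and avoids analyzing the equilibrium set directly. It introduces the notion of a \emph{safety strategy}: a report that, regardless of the others' reports, guarantees agent $i$ at least $k/n$ units in expectation from its top $k$ items, for every $k$. The sequential strategy is such a strategy for CPS, by Lemma~\ref{lem:time}: before time $k/n$ at most $k-1$ items are fully consumed, so the agent is always eating from a top-$k$ item. With this in hand, the paper invokes a different instance (from~\cite{CFF15}): agents $1,\dots,\sqrt{n}$ value only their own item $i$ at $1$; agents $\sqrt{n}+1,\dots,n$ value items $1,\dots,\sqrt{n}$ uniformly at $1/\sqrt{n}$. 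The optimum is $\sqrt{n}$. In \emph{any} Nash equilibrium, each non-special agent's payoff is at least its safety payoff, and since its only positive-value items are $1,\dots,\sqrt{n}$ (each worth $1/\sqrt{n}$), this forces it to be consuming at least $1/\sqrt{n}$ units from those items. Summing over the $n-\sqrt{n}$ non-special agents uses up all but one unit of items $1,\dots,\sqrt{n}$, so the special agents' total contribution is at most $1$; the non-special agents trivially contribute at most $\sqrt{n}\cdot\frac{1}{\sqrt{n}}=1$. Hence every equilibrium has welfare at most $2$, yielding the $\Omega(\sqrt{n})$ price-of-stability bound. The safety-strategy lower bound on each agent's equilibrium payoff is the key device your approach is missing.
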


\subsection{The Relative Merits of Probabilistic Serial and Cardinal Probabilistic Serial}\label{sec:compare}

The reader may ask how Probabilistic Serial and Cardinal Probabilistic Serial compare in {\em practical} performance.
Interestingly, they are not directly comparable. In Appendix~~\ref{sec:relative}
we present an instance (Example~3) where CPS dramatically outperforms PS, 
but also an instance (Example~4) where PS dramatically outperforms CPS, under truthful reporting.
As a rule of thumb,  CPS performs much better when 
there is a consensus among the agents on the relative ranking of the items,
but where agents may have very different specific valuations for the items;
estate division and sports drafts are settings where this characteristic naturally arises.
On the other hand, if the agents are primarily interested in disjoint sets
of items then PS performs better as then agents greedily consume those items as quickly as possible.
Investigating the practical performances of PS and CPS via
comprehensive experimental analyses is an important future direction.

\subsection{The Random Priority Mechanism}\label{sec:RSD}
The focus of this paper has been Probabilistic Serial mechanisms (ordinal and cardinal). How does the other classical mechanism, Random Priority, perform in the asymmetric one-sided allocation problem?
To answer this question, we remark that there are two natural ways to implement Random Priority in the asymmetric setting:
\begin{itemize}
    \item {\em Random Priority}: 
    Each agent is randomly sampled once and, upon selection, picks their favorite $\frac{m}{n}$ items from amongst those that are still available.
    \item {\em Repeated Random Priority}: 
 Agents are sampled repeatedly $m$ times in a row (uniformly and independently) and, upon selection, the selected agent picks its favorite available item.
\end{itemize}
\begin{theorem}
  For the asymmetric one-sided allocation problem with multi-unit demand agents, the price of anarchy 
    of Random Priority is at least $n$ when $m\geq n^2$.
\end{theorem}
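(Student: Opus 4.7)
The plan is to exhibit a family of worst-case instances, parameterised by a small $\varepsilon > 0$, with $n$ agents and $m = n^2$ items, on which a pure Nash equilibrium under Random Priority achieves expected welfare tending to $1$ while the optimal welfare tends to $n$; this yields the claimed $\Omega(n)$ lower bound. Concretely, I would let agent $i$'s true valuation be $v'_i(i) = 1 - (m-1)\varepsilon$ and $v'_i(j) = \varepsilon$ for $j \neq i$, which is unit-sum. The optimal allocation assigns item $i$ to agent $i$ and the remaining $m - n$ items arbitrarily, giving welfare $n(1 - (m-1)\varepsilon) + (m - n)\varepsilon \to n$ as $\varepsilon \to 0$.

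For the bad equilibrium I would take the profile in which every agent reports the same strict preference ranking $1 \succ 2 \succ \cdots \succ m$, realised for instance by the unit-sum valuation $u(j) = 2(m-j+1)/(m(m+1))$. Since each selected agent picks their top $m/n = n$ items, the agent sampled at position $k$ deterministically receives the bundle $\{(k-1)n + 1, \ldots, kn\}$. Hence agent $i$ (for $i \le n$) obtains their favourite item $i$ if and only if they are sampled first, an event of probability $1/n$. A direct computation yields expected payoff $\tfrac{1}{n}(1 - (m-1)\varepsilon) + O(n\varepsilon)$ per agent, so the total expected welfare equals $1 + o(1)$ and the ratio of optimal to equilibrium welfare tends to $n$ as $\varepsilon \to 0$.

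The main obstacle is the Nash verification. The key structural observation is that under the opponents' strategies the items are consumed in the fixed order $1, 2, 3, \ldots$, so item $i$ is still available at agent $i$'s turn if and only if agent $i$ is sampled first. Consequently, regardless of what valuation $v_i$ agent $i$ reports, the probability they acquire item $i$ is at most $1/n$; and since every other item contributes only $\varepsilon$ to their true value while their bundle always has size exactly $n$, the maximum achievable expected payoff over all reports is at most $\tfrac{1}{n}(1 - (m-1)\varepsilon) + n\varepsilon$, which is already attained by the prescribed strategy. Therefore no unilateral deviation strictly improves agent $i$'s payoff, confirming the profile is a pure Nash equilibrium. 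For arbitrary $m \ge n^2$ with $n \mid m$ the same construction carries through by appending additional junk items of true value $\varepsilon$, so the $\Omega(n)$ lower bound holds throughout the stated range.
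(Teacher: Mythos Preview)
Your construction is correct and yields the same $\Omega(n)$ bound, but it takes a more laborious route than the paper. The paper's instance puts agent $i$'s value $1-\varepsilon$ on item $i$, value $\frac{\varepsilon}{n-1}$ on the other items in $[n]$, and value $0$ on everything outside $[n]$. Because Random Priority is strategy-proof, truthful reporting is automatically a Nash equilibrium, so no verification is needed; and because each agent's support is exactly $[n]$, whoever is sampled first grabs the entire set $[n]$ and obtains value exactly $1$, while every later agent obtains value exactly $0$. The welfare computation is then one line.

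By contrast, you spread the residual $\varepsilon$-mass over all $m$ items and use a non-truthful common-ranking profile, which forces you to verify the Nash property by hand. Your verification is fine (the key point, that the $k-1$ predecessors of agent $i$ deterministically consume $\{1,\dots,(k-1)n\}$ regardless of $i$'s report, is exactly right), but it is work the paper avoids entirely by exploiting strategy-proofness. The trade-off: your argument is self-contained and does not appeal to any incentive property of the mechanism, whereas the paper's argument is shorter but relies on the reader accepting that this multi-unit variant of Random Priority remains strategy-proof.
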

\begin{proof}
    The upper bound follows because each agent is guaranteed a payoff of at least $\frac{1}{n}$.

    The lower bound follows from assuming we have $m=n^2$ items and agent $i$ has value $1-\varepsilon$ for item $i$ and $\frac{\varepsilon}{n-1}$ for items $i'\in [n]\setminus \{i\}$ (and 0 for the rest). Then, the first agent to be selected gets value 1 but the remaining agents get value 0. In the optimal allocation, every agent gets value $1-\varepsilon$, so the optimal welfare is $n\cdot(1-\varepsilon)$. 
    \qed
\end{proof}

\begin{theorem}
    For the asymmetric one-sided allocation problem with multi-unit demand agents, the price of anarchy 
    of Repeated Random Priority is at least $\Omega(\min \{n\, ,\, \log m\})$.
\end{theorem}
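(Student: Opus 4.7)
The plan is to adapt the lower bound construction from Theorem~\ref{thm:CPS-Anarchy-m} to Repeated Random Priority, exploiting two features of that mechanism. First, Repeated Random Priority is weakly dominant-strategy truthful for additive valuations: at each selection, an agent's payoff is maximized by picking their highest-valued available item, so any report correctly ranking their positively-valued items is a best response. Second, ties among items of equal reported value may be resolved adversarially as part of the strategic report (e.g., by perturbing zero values into vanishingly small positive values), which does not alter an agent's payoff and hence preserves the Nash equilibrium property.

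The instance uses $q = \min\{\lfloor\log_2 m\rfloor,\, \lfloor n/2\rfloor\}$ and $k = n-q$. The items $\{2,3,\ldots,2^q\}$ are partitioned into blocks $B_1,\ldots,B_q$ of sizes $1,2,4,\ldots,2^{q-1}$. Agents $1,\ldots,k$ are ``dummy'' agents, each with $v'_i(1)=1$, while ``block'' agent $k+i$ assigns value $2^{-(i-1)}$ to each item of $B_i$. The optimal allocation awards item~$1$ to a dummy and $B_i$ to agent $k+i$, giving $\mathrm{OPT}=q+1=\Theta(\min\{n,\log m\})$. For the Nash equilibrium, dummies report item~$1$ first and then (via an infinitesimal perturbation) rank the items of $B_1,B_2,\ldots$ in order; block agents report truthfully.

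The core analysis shows that every agent's expected payoff is $O(1/k)$. Once item~$1$ is taken, the dummies adversarially deplete blocks in order: $B_1$, then $B_2$, etc. During the $\Theta(n(2^{i-1}-1)/k)$ rounds needed for the $k$ dummies (collective rate $k/n$) to deplete $B_1,\ldots,B_{i-1}$, agent $k+i$ (rate $1/n$) is selected $\Theta((2^{i-1}-1)/k)$ times and grabs that many items from $B_i$. After the dummies arrive at $B_i$, agent $k+i$ captures only a $1/(k+1)$-fraction of the remaining $\Theta(2^{i-1})$ items. At value $2^{-(i-1)}$ per item, agent $k+i$'s total expected payoff is $O(1/k)$. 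Similarly, each dummy obtains item~$1$ with probability $1/k$ by symmetry. Summing over all agents gives expected social welfare $O(1+q/k)=O(1)$ (since $k\ge n/2\ge q$), whence the price of anarchy is $\Omega(q+1)=\Omega(\min\{n,\log m\})$.

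The main obstacle is the discrete, random nature of Repeated Random Priority compared with the continuous, deterministic consumption in PS and CPS: one must show that the number of selections each agent receives within each relevant window concentrates around its expectation. Standard Chernoff bounds for the binomial selection counts, together with the observation that every relevant window has length polynomial in $n$ and $m$, should convert the expected-rate heuristic above into a rigorous bound. A sanity check: when $\log_2 m \ge n/2$ the construction gives $\Omega(n)$, matching the trivial $O(n)$ upper bound; when $\log_2 m \le n/2$ it gives $\Omega(\log m)$, matching the lower bound for PS and CPS.
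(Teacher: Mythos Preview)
Your construction is exactly the instance the paper uses (the same one from Theorem~\ref{thm:CPS-Anarchy-m}), and your equilibrium argument is fine. The difference is in the analysis, and here you are making life harder than necessary. You frame the ``main obstacle'' as needing concentration (Chernoff) to control how many times each agent is selected in each phase. The paper avoids this entirely with a one-line linearity-of-expectation argument: restrict attention to the subsequence of rounds in which some agent from $\{1,\dots,k+i\}$ is sampled; after at most $2^{i+1}-1$ such rounds all items $1,\dots,2^{i+1}-1$ (including block $B_i$) are gone, and in each such round the sampled agent is $k+i$ with probability exactly $1/(k+i)$. Hence the expected number of block-$i$ items won by agent $k+i$ is bounded by the mean of a $\mathrm{Bin}(2^{i+1},\,1/(k+i))$, namely $2^{i+1}/(k+i)\le 2^{i+2}/n$, giving expected value at most $4/n$. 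No phases, no concentration, no window lengths to estimate.

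So your proposal is correct, but the phase-and-rate heuristic plus Chernoff is an unnecessary detour; the paper's direct binomial bound is both simpler and already rigorous.
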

\begin{proof}
    Consider the example from the proof of Theorem~\ref{thm:CPS-Anarchy-m}. The expected value for an agent who get $2^i$ items is $\frac{1}{2^i}$ multiplied by their expected number of items won from the set of items they are interested in. This is at most the number of times that appear in the first $2^{i+1}$ rounds, omitting those where an agent with a higher index wins. This can be bounded by the expectation of a binomial $B\left(2^{i+1},\frac{1}{k+i}\right)$ which is $\frac{2^{i+1}}{k+i}\leq \frac{2^{i+2}}{n}$. This implies that their expected value is at most $\frac{4}{n}$. Thus the welfare of the allocation is at most $4$, even though the optimal social welfare is $q$. Consequently, when there are $2^q-1$ items the welfare of the best Nash equilibrium is at most $\frac{OPT}{q}=\frac{OPT}{\log m}$.
    \qed
\end{proof}

\section{Conclusion}\label{sec:conclusion}
We studied fair mechanisms for the asymmetric one-sided allocation problem with multi-unit demand agents. A natural open problem is to close the logarithmic gap between the upper and lower bounds in 
Theorem~\ref{thm:CPS-Anarchy} and Theorem~\ref{Anarchy-Alloc}.
Another interesting line of research is to study whether our results  extend to other classes of valuation function, specifically, 
non-additive valuation functions. We remark that while the price of anarchy bounds for the unit demand setting extend to {\em unit-range} valuations this is {\em not} the case for multi-unit demands. 
For unit-range valuations the price of anarchy is $\Omega(n)$ for any fair mechanism.\footnote{To see this, take a single agent with value 1 for every item and let other agents having value $1$ for the first item and $\varepsilon$ for the remaining items.} 

\bibliographystyle{plain}
\bibliography{references}

\appendix

\section{Epsilon Strategies and Sequential Bidding}\label{sec:epsilon}

In this section we show that an agent can mimic the sequential bidding strategy with arbitrary precision using epsilon-valuation strategies.
Recall, given a sequence $X$ of length $k$,
the epsilon-strategy $\EpsValue{X}\in\valInd$ is defined by
$$
\EpsValue{X} (j) =
\begin{cases}
1-\sum_{\ell=1}^{k-1} \varepsilon^{\ell} &\text{ if }j=x_1\\\varepsilon^\ell    &\text{ if }j=x_\ell\\
0 &\text{ otherwise}
\end{cases}
$$
The limit of the epsilon-strategy $\EpsValue{X}$ when $\varepsilon\rightarrow 0$ is the sequential strategy $\SeqValue{X}$. 
Here we will formally justify allowing the sequential strategy in the mechanism.
\begin{lemma}\label{lem:epsilon}
$\forall i\in [n]$, $\forall v\in \valGroup$, $\forall \delta>0$, $\exists \varepsilon>0$ 
such that $CPS(\TrueValue[i],(\EpsValue{X},v_{-i}))\geq (1-\varepsilon)\cdot CPS(\TrueValue[i],(\SeqValue{X},v_{-i}))$.
That is, the payoff of the epsilon-strategy is within 
$\delta$ of the payoff of the sequential strategy.
\end{lemma}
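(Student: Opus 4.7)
The plan is to compare the trajectory of item quantities under $\EpsValue{X}$ to that under $\SeqValue{X}$ and show that the former converges to the latter as $\varepsilon \to 0$. Since agent $i$'s payoff is a bounded linear functional of the trajectory, continuity in $\varepsilon$ will then yield the bound: for any tolerance, taking $\varepsilon$ small enough makes the epsilon-payoff within that tolerance of the sequential payoff, which can be absorbed into the multiplicative factor $(1-\varepsilon)$ appearing in the statement.

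The first step is pointwise control of agent $i$'s consumption rates. Since $\EpsValue{X}$ places zero value outside $X$, agent $i$ only ever consumes items in $X$. Fix a time $t$ at which the lowest-indexed remaining item of $X$ is $x_{\ell^*}$; a direct computation of the ratio
\[
\frac{\EpsValue{X}(x_{\ell^*})}{\sum_{\ell \,:\, x_\ell \text{ remains at } t} \EpsValue{X}(x_\ell)}
\]
shows that it equals $1 - O(\varepsilon)$ as $\varepsilon \to 0$, so essentially all of agent $i$'s consumption rate is devoted to $x_{\ell^*}$, matching the behavior of $\SeqValue{X}$. The second step is an induction over the finite sequence of consumption events (times at which some item is fully consumed) in the $\SeqValue{X}$ trajectory. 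Between events all consumption rates are constant, so every quantity $\remAmm{j}{\cdot}{t}$ is piecewise linear in $t$, and a standard continuous-dependence argument for piecewise-linear systems shows that for sufficiently small $\varepsilon$ the $k$-th event under $\EpsValue{X}$ occurs within $O(\varepsilon)$ of the $k$-th event under $\SeqValue{X}$ and depletes the same item. Summing across the (finitely many) phases, the total amount $\gamma_{i,j}(\varepsilon)$ that agent $i$ consumes of item $j$ satisfies $\gamma_{i,j}(\varepsilon) \to \gamma_{i,j}(0)$, hence the payoff $\sum_{j} \TrueValue[i](j)\,\gamma_{i,j}(\varepsilon)$ converges to the sequential payoff; choosing $\varepsilon$ small enough then delivers the advertised inequality.

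The main obstacle is handling degenerate configurations, specifically simultaneous consumption events (two items fully consumed at the same instant) and the phase after agent $i$ has exhausted every item of $X$. For simultaneous events the induction may fail to single out a canonical ``next'' item, but since agent $i$'s payoff depends only on the totals $\gamma_{i,j}$ and not on the tie-breaking order, an arbitrary tie-breaking rule still yields matching limits. For the post-exhaustion phase, agent $i$'s consumption rate is declared adversarial under both $\EpsValue{X}$ and $\SeqValue{X}$ with the same candidate resolutions available, so we may couple the two and contribute identically to the other agents' trajectories. With these corner cases addressed, the continuity argument goes through and the lemma follows.
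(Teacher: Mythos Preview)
Your proposal is correct and follows essentially the same approach as the paper: both argue that the consumption-event times under $\EpsValue{X}$ lie within $O(\varepsilon)$ of those under $\SeqValue{X}$, and then deduce that the resulting payoffs are close. The only difference in execution is that the paper makes this quantitative by writing explicit recursive formulas for the inter-event gaps $\Delta_j,\Delta'_j$ and bounding $|\Delta_j-\Delta'_j|\le 3j\varepsilon$ directly, whereas you appeal to continuous dependence for piecewise-linear dynamical systems; the substance is the same.
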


\begin{proof}
For convenience, in the proof we assume that the order of the items in the sequential strategy is the same as the completion time. 
If that is not the case for some item $j$, then compared to the case where $i$ is using an epsilon-strategy, the difference between the amount consumed in both cases is at most $\varepsilon$ while the difference between the amount consumed by the remaining agents is bounded by the difference between the consumption times of the items which precede, which is bounded by the rest of our proof.

Let $\Delta_j = \NETime{j} - \NETime{j-1}$ with $\Delta_1 = t_1$ in the sequential strategy. 
Let $\Delta'_j = \tau_{j} - \tau_{j-1}$ with $\Delta_1 = \tau_1$ where $\tau_j$ is the consumption time of $j$ in the epsilon-strategy. 
Let $i'$ be any agent and $i$ be the agent that changes strategy.

Then we have that for the sequential strategy:
\[\Delta_j = \frac{1 - \sum_{j'=1}^{j-1}\Delta_{j'} \cdot (\sum_{i'=1, i' \neq i}^{n} \frac{v_{i'}(j)}{1-\sum_{\tilde{j}=1}^{j'}v_{i'}(\tilde{j})})} {1+\sum_{i' \neq i} \frac{v_{i'}(j)}{1-\sum_{\tilde{j}=1}^{j-1}v_{i'}(\tilde{j})}}.\]

For the epsilon-strategy:
\begin{align*}
    &\Delta'_j\\ = &\frac{1 - \sum_{j'=1}^{j-1}\Delta'_{j'} \cdot (\sum_{i'=1, i' \neq i}^{n} \frac{v_{i'}(j)}{1-\sum_{\tilde{j}=1}^{j'-1}v_{i'}(\tilde{j})})-\sum_{j'=1}^{j-1}\Delta'_{j'}\cdot (\varepsilon^{j-j'}f_j)} {f_j + \sum_{i' \neq i} \frac{v_{i'}(j)}{1-\sum_{\tilde{j}=1}^{j-1}v_{i'}(\tilde{j})}},
\end{align*}
where \[f_j = \frac{\varepsilon^j}{1-1+\sum_{j'=1}^{k}\varepsilon^{j'}-\sum_{j'=1}^{j-1}\varepsilon^{j'}} = \frac{1}{\sum_{j'=0}^{k-j}\varepsilon^{j'}}=1-\frac{\sum_{j'=1}^{k-j}\varepsilon^{j'}}{\sum_{j'=0}^{k-j}\varepsilon^{j'}}.\]

Now let $A^{(j)}_{j'}$ be the consumption rate of $j$ by agents aside from $i$ when the items up to $j'$ have been consumed.
That is:
\[A^{(j)}_{j'} = \sum_{i'=1, i' \neq i}^{n} \frac{v_{i'}(j)}{1-\sum_{\tilde{j}=1}^{j'}v_{i}(\tilde{j})}\]

Then we can simplify the expressions for $\Delta_j$ and $\Delta'_j$ to 
\begin{align*}\Delta_j &= \frac{1 - \sum_{j'=1}^{j-1}\Delta_{j'} \cdot A^{(j)}_{j'}} {1+A^{(j)}_{j}}\\
\Delta'_j &= \frac{1 - \sum_{j'=1}^{j-1}\Delta'_{j'} \cdot (A^{(j)}_{j'} + \varepsilon^{j-j'}\cdot f_{j'})} {f_j + A^{(j)}_{j}}
\end{align*}

This gives us the following equation:
\begin{align*}
\Delta_j\ =\ &\frac{1 - \sum_{j'=1}^{j-1}\Delta_{j'} \cdot A^{(j)}_{j'}} {1+A^{(j)}_{j}}
\\
=\ &\frac{(f_j+A^{(j)}_{j})\cdot(1 - \sum_{j'=1}^{j-1}\Delta_{j'} \cdot A^{(j)}_{j'})} {(1+A^{(j)}_{j})\cdot(f_j+A_j)}\\
=\ &\frac{(f_j-1)\cdot(1 - \sum_{j'=1}^{j-1}\Delta_{j'} \cdot A^{(j)}_{j'})} {(1+A^{(j)}_{j})\cdot(f_j+A_j)}
\\&\qquad+\frac{(1+A^{(j)}_{j})\cdot(1 - \sum_{j'=1}^{j-1}\Delta_{j'} \cdot A^{(j)}_{j'})} {(1+A^{(j)}_{j})\cdot(f_j+A_j)}\\
=\ &\frac{(f_j-1)\cdot(1 - \sum_{j'=1}^{j-1}\Delta_{j'} \cdot A^{(j)}_{j'})} {(1+A^{(j)}_{j})\cdot(f_j+A_j)}
+\frac{1 - \sum_{j'=1}^{j-1}\Delta_{j'} \cdot A^{(j)}_{j'}} {f_j+A_j}
\end{align*}

So, when taking $\Delta_j-\Delta'_j$ we get the following:

\begin{align*}
\Delta_j-\Delta'_j
\ =\ &\frac{(f_j-1)\cdot(1 - \sum_{j'=1}^{j-1}\Delta_{j'} \cdot A^{(j)}_{j'})} {(1+A^{(j)}_{j})\cdot(f_j+A_j)}
+\frac{1 - \sum_{j'=1}^{j-1}\Delta_{j'} \cdot A^{(j)}_{j'}} {f_j+A_j}
\\
&\qquad -\frac{1 - \sum_{j'=1}^{j-1}\Delta'_{j'} \cdot (A^{(j)}_{j'} + \varepsilon^{j-j'}\cdot f_{j'})} {f_j + A^{(j)}_{j}}\\
=\ &\frac{(f_j-1)\cdot(1 - \sum_{j'=1}^{j-1}\Delta_{j'} \cdot A^{(j)}_{j'})} {(1+A^{(j)}_{j})\cdot(f_j+A_j)}
\\
&\qquad +\frac{\sum_{j'=1}^{j-1}\Delta'_{j'} \cdot (A^{(j)}_{j'} + \varepsilon^{j-j'}\cdot f_{j'})- \Delta_{j'} \cdot A^{(j)}_{j'}} {f_j + A^{(j)}_{j}}
\end{align*}

Remark that $1\geq f_j\geq 1-\varepsilon$ and that $\sum_{j'=1}^{j} \Delta'_j<1$
So, by taking the absolute value, we get:

\begin{align*}
|\Delta_j-\Delta'_j|\ \leq \ &\abs{\frac{(f_j-1)\cdot(1 - \sum_{j'=1}^{j-1}\Delta_{j'} \cdot A^{(j)}_{j'})} {(1+A^{(j)}_{j})\cdot(f_j+A_j)}}
\\
&\qquad +\abs{\frac{\sum_{j'=1}^{j-1}\Delta'_{j'} \cdot (A^{(j)}_{j'} + \varepsilon^{j-j'}\cdot f_{j'})- \Delta_{j'} \cdot A^{(j)}_{j'}} {f_j + A^{(j)}_{j}}}\\
\leq\ &\frac{\abs{f_j-1}}{(1+A^{(j)}_{j})\cdot(f_j+A^{(j)}_{j})}+\frac{\left |\sum_{j'=1}^{j-1}(\Delta'_{j'}-\Delta_{j'})\cdot A^{(j)}_{j'} \right |} {f_j+A^{(j)}_{j}}\\
&\qquad +\abs{\frac{\sum_{j'=1}^{j-1}\Delta'_{j'} \cdot  \varepsilon^{j-j'}\cdot f_{j'}} {f_j + A^{(j)}_{j}}}\\
\leq\ &\frac{\varepsilon}{\frac{1}{2}}+\abs{|\sum_{j'=1}^{j-1}(\Delta'_{j'}-\Delta_{j'})}\cdot \frac{A^{(j)}_{j'}} {f_j+A^{(j)}_{j}}+\varepsilon\cdot\frac{f_{j'}}{f_j + A^{(j)}_{j}}\\
\leq\ &3\varepsilon+\sum_{j'=1}^{j-1}\abs{\Delta'_{j'}-\Delta_{j'}}
\end{align*}

So we get $|\Delta_j-\Delta'_j|\leq 3j\varepsilon$.

So, the set of remaining items only changes for a time of at most $\sum_{j'=1}^j |\Delta_{j'}-\Delta'_{j'}|\leq 3j^2\varepsilon$ which implies that the payoff for the agents $i'\neq i$ changes by at most $3j^2\varepsilon$ since they have unit-sum valuations. 

On the other hand, for $i$ when both mechanisms agree on the set of remaining items $i$ only changes the item they are consuming by less than $2\varepsilon$, in particular, if we sum the difference between what is consumed when the mechanisms disagree and when they agree $i$'s consumption only changes by at most $2\varepsilon+3j^2\varepsilon\leq 4j^2\varepsilon$. Since $i$ has a unit-sum valuation, the change in the payoff is at most $4j^2\varepsilon$.

By setting $\varepsilon=\frac{\delta}{4j^2}$ we get the result we wanted.
\end{proof}

\section{Mixed Strategies}\label{sec:mixed}
Here we show that our main result, the upper bound on the price of anarchy for pure strategy Nash equilibria in symmetric instances, also applies to mixed strategy Nash equilibria and coarse correlated equilibria.
To show this we use the following definitions and notations.
A mixed strategy for an agent $i$ is a probability distribution of $\valInd$ and is denoted as $p_i:\valInd \rightarrow [0,1]$. The mixed strategy used by every agent is denoted as $p\colon \valGroup\rightarrow [0,1]$ with $p(v)=\prod_{i=1}^n p_i (v_i)$. 

\begin{theorem}\label{Mixed CPS-Anarchy}
The price of anarchy of coarse correlated equilibria is $O(\sqrt{n}\cdot \log n)$ in the one-sided allocation problem with multi-unit demand agents.
\end{theorem}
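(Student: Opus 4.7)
The plan is to adapt the argument of Theorem~\ref{thm:sym} to coarse correlated equilibria using the standard smoothness-style reduction. The crucial observation is that the pointwise inequality embedded in the proof of Lemma~\ref{lem:lower-bound}, namely
\[
v'_i(CPS(\hat{u}_X, s_{-i})) \ \geq\ \tfrac{1}{4} \sum_{\ell=1}^{k}\max\bigl(t_{x_\ell}(s) - t_{x_{\ell-1}}(s),\ 0\bigr)\, v'_i(x_\ell),
\]
does \emph{not} use that $s$ is a pure Nash equilibrium: it follows purely from the chain $\SeqTime{x_\ell}\ge \SingTime{x_\ell}\ge \tfrac14 \NETime{x_\ell}$ supplied by Lemmas~\ref{lem:minimal-t} and~\ref{lem:Single-target}, applied to whatever profile $s$ currently describes the play. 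Given any CCE $p$ and any agent $i$, I would therefore instantiate the CCE inequality with the fixed deviation $\hat{u}_{X_i}$, where $X_i$ is agent $i$'s bundle in the welfare-optimal allocation and the items of $X_i$ are ordered by decreasing true value. This yields $E_{s\sim p}[v'_i(CPS(s))]\ge E_{s\sim p}[v'_i(CPS(\hat{u}_{X_i}, s_{-i}))]$, and summing over agents bounds $E_p[SW(s)]$ below by an expectation of the pointwise lemma's right-hand side.

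The main obstacle is that the combinatorial optimization in the proof of Theorem~\ref{thm:sym} exploits the ordering of $X_i$ by \emph{actual} consumption time in $v$ and leans on the rank bound $t_j\ge j/n$ from Lemma~\ref{lem:time}. Since the CCE deviation must be fixed in advance, I cannot use the realization-dependent ordering, and the analogous statement ``the $\ell$-th expected consumption time in $X_i$ is at least $\ell/n$'' can fail (as a simple two-item symmetric example shows). The fix is to keep the fixed decreasing-value ordering in $\hat{u}_{X_i}$ but replace the per-item rank bound with the weaker, still-valid \emph{subset} inequality
\[
\sum_{j\in S} E_p[t_j(s)] \ \geq\ \frac{|S|(|S|+1)}{2n}\qquad\text{for every }S\subseteq J,
\]
obtained by taking expectations of the realization-wise statement $\sum_{j\in S} t_j(s)\ge |S|(|S|+1)/(2n)$, which in turn is immediate from Lemma~\ref{lem:time} because any $|S|$ items occupy ranks at least $1,2,\dots,|S|$ in the consumption-time order. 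Averaging in the $\bar t$-ordering then forces $\bar t_{x_{\pi(\ell)}^i}\ge (\ell+1)/(2n)$, which is exactly Lemma~\ref{lem:time} up to a factor of $2$; I would use this to rerun, almost verbatim, the logarithmic-bucket argument of Theorem~\ref{thm:sym} (grouping agents by $\ell_i\in[2^\tau,2^{\tau+1})$, the tight argmin, the $\frac{q}{2n}$ lower bound on $t/\ell_i$, and the final max over $k\in[1,n]$). The result is the same asymptotic $O(\sqrt{n}\cdot\log n)$ bound with only constant-factor degradation.

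The hardest technical step, and the one I would dwell on, is justifying that the subset-sum bound in expectation really suffices in place of the pointwise rank bound; the $\max(\cdot,0)$ inside the pointwise lemma must be handled carefully via Jensen so that no negative contributions leak into the telescoping rewrite $\sum_\ell(\bar t_{x_\ell^i}-\bar t_{x_{\ell-1}^i})v'_i(x_\ell^i)=\sum_\ell \bar t_{x_\ell^i} d_\ell^i$ with nonnegative coefficients. Once this is in place, mixed Nash equilibria follow as a special case of CCE (product distributions). Existence of mixed NE and CCE in the continuous strategy space $\valGroup=\valInd^n$ would be argued via Glicksberg's generalization of Nash's theorem, after verifying upper semi-continuity of expected payoffs under CPS (or, as a standard alternative, by discretizing the simplex to a fine grid, applying Nash's theorem, and passing to a limit point).
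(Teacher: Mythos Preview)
Your high-level plan matches the paper's: the deviation inequality underlying Lemma~\ref{lem:lower-bound} holds pointwise for any profile $s$, so for a CCE $p$ one may fix the deviation $\hat u_{X_i}$ (items ordered by decreasing true value), take expectations, and rerun the optimization argument of Theorem~\ref{thm:sym}. That part is fine.

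The gap is in your replacement for the rank bound. After summation by parts you obtain $\sum_i\sum_\ell \frac{y_\ell^i}{\ell}\,\bar t_{x_\ell^i}$ with $x_\ell^i$ the $\ell$-th item of $X_i$ in the \emph{value} ordering, and the adversary concentrates at $\ell_i=\arg\min_\ell \bar t_{x_\ell^i}/\ell$. In the bucket $I_\tau$ the pure proof needs the factor $2^\tau$ in $t_{x_{\ell_i}^{i_q^\tau}}\ge 2^\tau q/n$; this comes from the fact that by the time $x_{\ell_i}^i$ is consumed, \emph{all} of $x_1^i,\dots,x_{\ell_i}^i$ have been consumed. The plain expectation $\bar t_{x_{\ell_i}^i}$ is the expected time of a \emph{single} item and carries no information about the other $\ell_i-1$ items, so your subset-sum inequality applied to the $q$ singletons $\{x_{\ell_{i_r}}^{i_r}\}$ gives only $\sum_{i\in I_\tau}\bar t_{x_{\ell_i}^i}/\ell_i\ge |I_\tau|^2/(2^{\tau+2}n)$, a factor $2^\tau$ short. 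Your bound $\bar t_{x_{\pi(\ell)}^i}\ge(\ell+1)/(2n)$ is correct but lives in the $\bar t$-ordering $\pi$, whereas the optimization is pinned to the value ordering; you cannot swap orderings without destroying the nonnegativity of the $d_\ell^i$. The paper's fix is simple but essential: work with the expected \emph{running maximum} $\tau_\ell^i:=E_p\bigl[\max_{\ell'\le\ell}t_{x_{\ell'}^i}\bigr]$ instead of $\bar t_{x_\ell^i}$. These are automatically nondecreasing in $\ell$, so the telescoping and change of variables go through verbatim, and for every realization $s$ the set $\{x_1^{i},\dots,x_{\ell_i}^{i}\}$ is entirely consumed by time $\max_{\ell'\le\ell_i}t_{x_{\ell'}^i}(s)$; sorting these maxima within each realization over the $|I_\tau|$ agents recovers, pointwise, $\sum_{i\in I_\tau}\max_{\ell'\le\ell_i}t_{x_{\ell'}^i}(s)\ge 2^\tau|I_\tau|(|I_\tau|+1)/(2n)$, and taking expectations yields $\sum_{i\in I_\tau}\tau_{\ell_i}^i/\ell_i\ge |I_\tau|(|I_\tau|+1)/(4n)$ as required.
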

\begin{proof}
Recall Lemma~\ref{lem:lower-bound} 
states that for any pure Nash equilibrium $v$ and for any sequence of items $X=\set{x_1,x_2,\dots,x_k}$:
\[\TrueValue[i](CPS(v))\geq \frac{1}{4} \sum_{\ell=1}^k (\NETime{x_{\ell}}-\NETime{x_{\ell-1}})\cdot \TrueValue[i](x_{\ell}).\]

To prove this, we bounded the payoff $i$ obtained by deviating to the sequential bidding strategy. This also applies for mixed strategies. 
In particular, if the $x_\ell$ are ordered by $i$'s value for them, 
then by deviating to the sequential strategy from any pure strategy
agent $i$ can consume item $x_\ell$ from time $\frac{1}{4}\max_{\ell'=1,\dots,\ell-1} \NETime{x_{\ell'}}$
to time $\frac{1}{4}\max_{\ell'=1,\dots,\ell} \NETime{x_{\ell'}}$.
By the linearity of the expectation, this gives the following bound for a mixed strategy:
\begin{eqnarray*}
\lefteqn{
\fn{\bb{E}}{\sum_{\ell=1}^{k_i} \TrueValue[i](x_{\ell}) \cdot \left(\frac{1}{4}\max_{\ell'=1,\dots,\ell} \NETime{x_{\ell'}}-\frac{1}{4}\max_{\ell'=1,\dots,\ell-1} \NETime{x_{\ell'}}\right)}}\\
&\geq& 
\frac{1}{4}\cdot \sum_{\ell=1}^{k_i} \TrueValue[i](x_{\ell}) \cdot \left(\fn{\bb{E}}{\max_{\ell'=1,\dots,\ell} \NETime{x_{\ell'}}}-\fn{\bb{E}}{\max_{\ell'=1,\dots,\ell-1} \NETime{x_{\ell'}}}\right)
\end{eqnarray*}
We can now use this bound and apply the same proof as in Theorem~\ref{thm:CPS-Anarchy}
to obtain the same upper bound on the price of anarchy for mixed equilibria. A similar argument applies for coarse correlated equilibria.
\end{proof}
We remark that mixed Nash equilibria and coarse correlated equilibria
are guaranteed to exist.

\section{Lower Bounds}
\subsection{Tightness of Proof Methodology}\label{app:tight}
We show here that the tools utilized in this paper are not strong enough to remove the logarthmic term in Theorem~\ref{thm:sym}.
In fact, we conjecture that $\Theta(\sqrt{n}\cdot \log n)$ is a tight bound for the symmetric case.
In particular, the bound from 
Lemma~\ref{lem:time}
is too loose and so will 
induce a logarithmic term in the upper bound. 
Namely, assuming that the items are consumed in increasing order, 
then substituting $t_j$ by $j/n$ will lead to the appearance of a $\log$ factor.

\begin{lemma}
$O(\sqrt{n}\cdot \log^{O(1)} n)$ is a tight bound when bounding $\NETime{j}$ below by $j/n$.
\end{lemma}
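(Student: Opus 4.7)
The plan is to exhibit a concrete feasible input to the simplified optimization program of Theorem~\ref{thm:sym} (with the substitution $\NETime{j}=j/n$) for which the resulting objective $\sum_{i=1}^{k} t_i/\ell_i$ is only $\Theta(k^2/(n\log n))$. Such an instance witnesses that any analysis relying only on the bound $\NETime{j}\ge j/n$ from Lemma~\ref{lem:time} cannot push the social-welfare lower bound beyond $\Theta(k^2/(n\log n))$, and hence the resulting POA upper bound must be at least $\Omega(\sqrt{n}\log n)$ at the optimizing choice $k=\Theta(\sqrt{n})$, matching Theorem~\ref{thm:sym} up to constants.

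First, I would describe a \emph{refined dyadic} partition of $[n]$. Picking $m$ to be a power of two with $m=\Theta(\sqrt{n}/\log n)$, for each level $\tau\in\{\log m,\log m+1,\dots,\log n-1\}$ I would partition the $2^\tau$ items at ranks $[2^\tau,2^{\tau+1})$ into $m$ consecutive sub-bundles each of size $2^\tau/m$. Together with a single leftover bundle absorbing the $m$ items of rank below $2^{\log m}$, this yields a partition of $[n]$ into $k=m\cdot(\log n-\log m)+1=\Theta(\sqrt{n})$ bundles.

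Second, I would compute the contribution of each sub-bundle. For the sub-bundle at level $\tau$ and position $j\in[m]$, the items form the consecutive ranks $r_1,r_1+1,\dots,r_1+2^\tau/m-1$ with $r_1=2^\tau+(j-1)\cdot 2^\tau/m+1$. Since $r_\ell/\ell=(r_1-1)/\ell+1$ is decreasing in $\ell$ whenever $r_1>1$, the argmin $\ell_i$ is realised at the full sub-bundle size $2^\tau/m$, which yields $t_i/\ell_i=(m+j)/n$. Summing over the $m$ positions contributes $\Theta(m^2/n)$ per level, and summing over the $\Theta(\log n)$ levels gives $\sum_i t_i/\ell_i=\Theta(m^2\log n/n)=\Theta(1/\log n)=\Theta(k^2/(n\log n))$, as required. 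Plugging this into the methodology's POA formula gives an upper bound of $k/(\sum t_i/\ell_i /4)=\Theta(\sqrt{n}\log n)$, matching Theorem~\ref{thm:sym}; hence any POA bound derived solely from $\NETime{j}\ge j/n$ must carry a $\log^{O(1)}n$ factor.

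The main obstacle is routine bookkeeping: ensuring integrality of $2^\tau/m$ (handled by restricting to $\tau\ge\log m$ and folding the leftover items into a single extra bundle), verifying that the argmin is attained at the full sub-bundle size (immediate from the monotonicity of $r_\ell/\ell$ on consecutive bundles with $r_1>1$), and tracking the geometric sums across the $\Theta(\log n)$ levels to extract the $\Theta(1/\log n)$ factor cleanly.
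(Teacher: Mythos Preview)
Your proposal is correct and uses the same core idea as the paper: a dyadic construction in which bundles at $\Theta(\log n)$ geometric scales each contribute the same amount to the objective, forcing the $\log$ factor into the resulting price-of-anarchy bound.

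The execution differs. You work directly with the reduced program~(\ref{eq:final-opt}): you partition $[n]$ into $k=\Theta(\sqrt{n})$ consecutive blocks (with $m=\Theta(\sqrt{n}/\log n)$ blocks per dyadic level) and compute $\sum_i t_i/\ell_i$ explicitly under the substitution $t_j=j/n$, obtaining $\Theta(k^2/(n\log n))$. The paper instead builds an explicit valuation instance: for each $z=0,\dots,x-1$ it takes $k$ agents who each receive $2^z$ items of value $1/2^z$, and then bounds the Lemma~\ref{lem:lower-bound} expression $\sum_\ell v'_i(x_\ell)(t_{x_\ell}-t_{x_{\ell-1}})=(1/2^z)\cdot t_{x^i_{2^z}}$ directly after the substitution. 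Both constructions equalize the per-scale contribution so that summing over $\Theta(\log n)$ scales produces the logarithmic loss.

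Your route has the advantage of staying cleanly in the symmetric setting and plugging straight into the final form of the optimization program, which makes the tightness claim for Theorem~\ref{thm:sym} most transparent. The paper's route has the advantage of exhibiting concrete agent valuations, though its instance is actually asymmetric (more items than agents) and the bookkeeping there is looser. Either way, the substance---equal contribution across dyadic scales---is identical.
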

\begin{proof}
Consider the following example.
There are $x\cdot k$ agents. For each $z=0,\dots,x-1$,  
there are exactly $k$ agents who are assigned $2^z$ items in the optimal allocation and have value $\frac{1}{2^z}$ for each item. 
Hence there are $n=(2^x-1)\cdot k$ items. Setting $k=(2^x-1)/x^2$, we have $n=(xk)^2=((2^x-1)^2)/x^2$.

Note that each agent will individually consume the items they are meant to be assigned at a rate of at most $1/n$ so unless other agents consume these the consumption time will be 1.
In particular, this implies that using the remaining $n-z\cdot k$ agents, we can choose the order in which the items are consumed.
So, assume that the items of higher value, that is those assigned to agents with smaller bundles, are consumed faster.

Let $\mathcal{I}_z$ be the set of agents who receive $2^z$ items in the optimal allocation. 
Let $\mathcal{J}_z$ be the set of items assigned to agents in $\mathcal{I}_z$. 
Let $prec(j)$ be the set of items that have been consumed before or at the same time as $j$ (including $j$).
Then, for any $j\in\mathcal{J}_z$, an upper bound on the number of items that have been consumed before $j$, is the number of items that are assigned to agents with at most $2^z$ items. That is:
\[|prec(j)|\leq \abs{\bigcup_{z'=0}^z \mathcal{J}_{z'}}=\sum_{z'=0}^z 2^{z'}\cdot k=(2^{z+1}-1)\cdot k\leq 2^z\cdot k\]

In particular, by denoting $X_i=\set{x_1^{i},\dots,x_{2^z}^{i}}$ to be the set of items $i\in \mathcal{I}_z$ gets, the bound we get for the value of the allocation when substituting the time by $prec(j)/n$ is the following:

\begin{align*}
    \sum_{z=0}^x \sum_{i\in \mathcal{I}_z} \frac{1}{2^z}\cdot \frac{\sup_{\ell=1,\dots,2^z} prec(x_{\ell}^{(i)})}{n}\ \leq\ \sum_{z=0}^{x}\left|\mathcal{I}_z\right|\cdot \frac{1}{2^z}\cdot \frac{2^{z+1}\cdot k}{n}\\
    \ =\ \sum_{z=0}^{x} \frac{2k^2}{n} 
    \ =\ \frac{2xk^2}{(xk)^2} 
    \ =\ \frac{2}{x}
\end{align*}
This means that our bound will only prove $O(x\sqrt{n})$ and given that $n=\frac{(2^x-1)^2}{x^2}$, we get that $x=\log^{O(1)}(n)$.
\end{proof}

\subsection{Lower Bound of the Price of Anarchy}
Recall that by Theorem~\ref{Anarchy-Gen}, for unit demand, for the one-sided matching problem with unit sum-valuation has a price of anarchy of $\Omega(\sqrt{n})$ for any mechanism. 

We verify that the theorem extends to the one-side allocation problem with multi-unit demand agents.

\begin{proof}[Proof Sketch]
Consider the example used by Christodoulou et al.~\cite{CFF15} to prove Theorem~\ref{Anarchy-Gen} for the matching problem.
Take the following valuation function:
\[v_i(j)=\begin{cases}
\frac{1}{n}+\varepsilon\text{ if }i=j\cdot \sqrt{n}+i'\text{ for }i'=1,\dots,\sqrt{n}\\
\frac{1}{n}-\frac{\varepsilon}{n-1}\text{ otherwise}
\end{cases}\]
Now consider a Nash equilibrium for $v$.  
Let $i_j$ be the index of the agent who has positive value for item $j$
but has the smallest probability of being assigned $j$ in the Nash Equilibrium.
Next, create a new valuation $v'_i(j)$ which is $v_i(j)$ if $i\neq i_{j'}$ for any $j'$ and which is $1\text{ if }i=i_j$ and $0\text{ if }i=i_{j'}\neq i_j$

Since the agents get the same number of items in expectation, a Nash equilibrium for $v$ is also a Nash equilibrium for $v'$ 
where the agents maximize their probability of getting their favorite item.
The social welfare of the optimal allocation is $\sqrt{n}$. 
At the Nash equilibrium, since the agents $i_j$ get assigned $j$ with probability at most $\frac{1}{\sqrt{n}}$,
the social welfare is at most $\sqrt{n}\cdot \frac{1}{\sqrt{n}}+\sqrt{n}\cdot \left(1-\frac{1}{\sqrt{n}}\right)\cdot \left(\frac{1}{n}+\frac{1}{n^3}\right)\leq 3$.
This gives a lower bound of $\Omega(\sqrt{n})$ on the price of anarchy.
\end{proof}

\section{The Price of Stability}\label{app:price-of-stability}

Here we study the price of stability.
To do this, we say that a strategy $u\in \valInd$ is a {\em safety strategy} for agent $i$ if $\forall v\in\valGroup$ the allocation output on input $(u,v_{-i})$ 
gives $i$ $\frac{k}{n}$ of its top $k$ items in expectation. For the one-sided matching problem under Random Priority and Probabilistic Serial, truthtelling is known to be a safety strategy.

Similar to the price of anarchy, the {\em price of stability} is the worst case ratio between the optimal welfare
and the social welfare of the best Nash equilibrium, namely:
$$
\sup_{v'}\, \inf_{v\in NE(v')}\, \frac{OPT(v')}{\sum_{i\in I} v'_i(M(v))}
$$

Interestingly, the existence of safety strategies induces the following bound on the price of stability for the one-sided matching problem:
\begin{theorem}[\cite{CFF15}]\label{Stability-Gen}
For the one-sided matching problem, the pure price of stability of any mechanism with a safety strategy is $\Omega(\sqrt{n})$.
\end{theorem}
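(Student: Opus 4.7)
The plan is to construct an instance of the one-sided matching problem where the optimal welfare is $\Omega(\sqrt{n})$ yet every Nash equilibrium has welfare $O(1)$. The argument leverages the safety-strategy guarantee to convert equilibrium payoff lower bounds into linear constraints on the allocation probabilities, which then combine by a direct counting argument.

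First I would set $k=\sqrt{n}$ and build the instance. There are $k$ \emph{spike} agents: agent $i\le k$ values item $i$ at $1$ and every other item at $0$. There are $n-k$ \emph{spread} agents: each values items $1,\dots,k$ at $1/k$ and items $k+1,\dots,n$ at $0$. All valuations are unit-sum. The optimal allocation matches spike agent $i$ to item $i$ for welfare $k=\sqrt{n}$.

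Next I would use the safety strategy. For each agent, the safety strategy guarantees expected top-$r$ allocation $r/n$, which yields expected payoff at least $1/n$ in this instance (for spike agents from $r=1$; for spread agents from $r=k$, as $(k/n)\cdot(1/k)=1/n$). Hence in any Nash equilibrium each agent must earn at least $1/n$. Writing $p_{i,k+j}$ for the NE probability that spread agent $k+j$ wins item $i$, the spread agent's payoff is $\tfrac{1}{k}\sum_{i=1}^{k} p_{i,k+j}\ge 1/n$, whence $\sum_i p_{i,k+j}\ge k/n = 1/\sqrt{n}$. Summing over $j=1,\dots,n-k$ gives
\[
\sum_{i=1}^{k}\sum_{j=1}^{n-k} p_{i,k+j}\ \ge\ \frac{n-k}{\sqrt{n}}\ =\ \sqrt{n}-1.
\]

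Finally I would bound the NE welfare. Since each item $i\le k$ is allocated exactly once and non-$i$ spike agents have value $0$ for item $i$, $\Pr[\text{spike }i\text{ wins }i]\le 1-\sum_j p_{i,k+j}$. Summing welfare contributions,
\[
\mathrm{NE}\ \le\ \sum_{i=1}^{k}\Pr[\text{spike }i\text{ wins }i]+\frac{1}{k}\sum_{i,j} p_{i,k+j}\ \le\ k-\Bigl(1-\tfrac{1}{k}\Bigr)\sum_{i,j} p_{i,k+j}\ \le\ 2,
\]
using the lower bound $\sum_{i,j} p_{i,k+j}\ge\sqrt{n}-1$. Hence every Nash equilibrium has welfare at most $2$ while $\mathrm{OPT}=\sqrt{n}$, giving pure price of stability at least $\sqrt{n}/2=\Omega(\sqrt{n})$. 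The only non-routine step is verifying that the safety-strategy guarantee yields exactly the expected payoff lower bound $1/n$ for both agent types in this instance; once that is in hand, the remainder is linear algebra on probabilities.
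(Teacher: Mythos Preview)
Your proposal is correct and takes essentially the same approach as the paper (which in turn follows \cite{CFF15}): the same instance with $\sqrt{n}$ spike agents and $n-\sqrt{n}$ spread agents valuing items $1,\dots,\sqrt{n}$ uniformly, and the same use of the safety-strategy guarantee to force spread agents to absorb at least $\sqrt{n}-1$ units of probability mass on the first $\sqrt{n}$ items, leaving the spike agents with welfare at most a constant. The only cosmetic difference is that the paper bounds the spike and spread contributions separately (each by $1$), whereas you combine them into the single inequality $k-(1-\tfrac{1}{k})\sum_{i,j}p_{i,k+j}\le 2$.
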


As we did for the lower bound on the price of anarchy of general mechanisms, 
first we show that this lower bound extends to our setting.

\begin{theorem}\label{Stability-Alloc}
For the one-sided allocation problem, the pure price of stability of any mechanism with a safety strategy is $\Omega(\sqrt{n})$.
\end{theorem}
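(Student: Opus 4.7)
The plan is to extend the argument of Theorem~\ref{Stability-Gen} in~\cite{CFF15} to the one-sided allocation setting in much the same way that Theorem~\ref{Anarchy-Alloc} extended Theorem~\ref{Anarchy-Gen} for the price of anarchy. Since the matching problem is effectively a subproblem of the one-sided allocation problem (take $m=n$ together with unit-sum valuations chosen to be nearly concentrated on a single item per designated agent), the hard instance transfers with only minor adjustments.

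First, I would instantiate the CFF15 construction in the symmetric case $m=n$: partition the $n$ agents into $\sqrt{n}$ groups of $\sqrt{n}$, where each group shares a preferred item, and each agent $i$ has the near-uniform unit-sum valuation $v_i$ with a small spike $\varepsilon$ on its group's preferred item. Then, as in the proof of Theorem~\ref{Anarchy-Alloc}, perturb into the true valuations $v'$ by designating an agent $i_j$ per item $j$ and setting $v'_{i_j}(j) = 1$ with $v'_{i_j}(j') = 0$ for $j' \neq j$. The optimal allocation assigns item $j$ to $i_j$, so $OPT(v') = \sqrt{n}$.

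Second, I would invoke the safety strategy property in order to obtain a lower bound on payoff valid at every Nash equilibrium, not just the worst one. By Abel summation over the top-$k$ guarantees $k/n$ together with the unit-sum property, the safety strategy yields expected payoff at least $\frac{1}{n}\sum_j v'_i(j) = \frac{1}{n}$ for every agent $i$. Consequently, in every Nash equilibrium $v\in NE(v')$, each designated agent $i_j$ must receive payoff at least $\frac{1}{n}$; since $v'_{i_j}$ is concentrated entirely on item $j$, this forces $\mathrm{Pr}[i_j\text{ wins } j] \geq \frac{1}{n}$. Combined with the unit capacity of each item and the symmetry within groups, no designated agent can win its preferred item with probability larger than $\frac{1}{\sqrt{n}}$ on average, and by the same welfare accounting as in Theorem~\ref{Anarchy-Alloc} the resulting social welfare at any Nash equilibrium is at most $\sqrt{n}\cdot\frac{1}{\sqrt{n}} + O(\varepsilon \sqrt{n}) = O(1)$.

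The main obstacle is verifying that the safety-strategy argument is valid for additive multi-unit demand agents rather than for unit demand agents. In the matching setting an agent wins at most one item, so the per-item probability bound directly translates to a payoff bound; in the allocation setting an agent could in principle accumulate payoff from many non-preferred items. This is handled by two observations: fairness of the mechanism forces each agent to receive $\frac{m}{n}=1$ item in expectation, and the non-preferred items in $v'$ have value $0$ for the designated agents, so the only welfare gain at equilibrium comes through the preferred items which are provably contested. Putting these ingredients together yields $O(1)$ social welfare at every Nash equilibrium, and dividing by $OPT = \sqrt{n}$ gives the desired $\Omega(\sqrt{n})$ lower bound on the price of stability.
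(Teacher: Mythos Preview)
Your proposal has a genuine gap: you are reusing the price-of-\emph{anarchy} construction from Theorem~\ref{Anarchy-Alloc}, but that construction does not yield a price-of-\emph{stability} bound. In Theorem~\ref{Anarchy-Alloc}, the designated agent $i_j$ is chosen \emph{after} fixing a particular Nash equilibrium of the near-uniform profile $v$, namely as the agent in the group with the smallest winning probability; this is what delivers the $\frac{1}{\sqrt{n}}$ upper bound on $\Pr[i_j \text{ wins } j]$. For the price of stability you must fix the true profile $v'$ (hence the identities $i_j$) \emph{first} and then argue for \emph{every} Nash equilibrium of $v'$. Once $v'$ is fixed, the ``symmetry within groups'' you invoke is gone: agent $i_j$ has $v'_{i_j}(j)=1$ while the remaining agents in its group have near-uniform valuations, so nothing prevents a Nash equilibrium in which the designated agents win their items with high probability. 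Your safety-strategy step only gives the \emph{lower} bound $\Pr[i_j \text{ wins } j]\ge \frac{1}{n}$, which is the wrong direction for upper-bounding welfare; and for the non-designated agents the safety guarantee of payoff $\ge \frac{1}{n}$ imposes essentially no constraint on which items they receive, since every item is worth roughly $\frac{1}{n}$ to them.

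The paper's proof uses a different instance precisely to make the safety strategy bite: there are $\sqrt{n}$ ``special'' agents each valuing a distinct item at $1$, and $n-\sqrt{n}$ ``non-special'' agents each valuing exactly those $\sqrt{n}$ items at $\frac{1}{\sqrt{n}}$ (and $0$ elsewhere). Because a non-special agent's entire value mass sits on the $\sqrt{n}$ good items, the safety guarantee (for $k=\sqrt{n}$) forces $\sum_{j\in[\sqrt{n}]} p_{i,j}\ge \frac{1}{\sqrt{n}}$ at \emph{every} Nash equilibrium. Summing over the $n-\sqrt{n}$ non-special agents consumes at least $\sqrt{n}-1$ of the $\sqrt{n}$ available units of good items, leaving at most $1$ unit total for the special agents; combined with the trivial bound on the non-special agents' contribution, welfare is at most $2$ while $OPT=\sqrt{n}$. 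The crucial design choice you are missing is that the many low-value agents must have their value \emph{concentrated on the contested items}, so that the safety strategy forces them to crowd out the high-value agents in every equilibrium.
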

\begin{proof}
The example used by Christodoulou et al.~\cite{CFF15} to prove Theorem~\ref{Stability-Gen} for matchings suffices.
Consider the following valuation:
$$v_i(j)=
\begin{cases}
1 &\text{if }i=j\leq \sqrt{n}\\
\frac{1}{\sqrt (n)} &\text{if }i>\sqrt{n}\geq j\\
0&\text{otherwise}
\end{cases}
$$

Then clearly the optimal allocation is to assign an item to the agent who has value 1 for it, if possible, and to assign the remaining items in any way. Denoting $p_{i,j}$ to be the probability assigning $j$ to $i$, then:
\[\sum_{i\in [n]} \sum_{j\in[\sqrt{n}]}p_{i,j}=\sqrt{n}.\]

However, since the mechanism has a safety strategy, the agents who are matched in the optimal solution get their top item with probability at least $1/n$ so, 
we get the following bound on the contribution of the remaining agents to the social welfare of the Nash equilibrium:
\[\sum_{i\in [n]\setminus [\sqrt{n}]}\sum_{j\in[\sqrt{n}]} p_{i,j}\cdot \frac{1}{\sqrt{n}}\leq \frac{1}{\sqrt{n}}\cdot \left(\sqrt{n}-\frac{1}{\sqrt{n}}\right)\leq 1\]

On the other hand, the agents who do not get matched can get their top $\sqrt{n}$ items
with probability at least $\frac{1}{\sqrt{n}}$, so we get the following bound on the contribution of the matched agents to the social welfare of the Nash Equilibrium:

\[\sum_{i\in[\sqrt{n}]}\sum_{j\in[\sqrt{n}]} p_{i,j}\leq \sqrt{n}-(n-\sqrt{n})\cdot \frac{1}{\sqrt{n}}=1\]

So, the contribution of all the agents to the social welfare of the Nash equilibrium
is at most 2. But the optimal allocation clearly has value $\sqrt{n}$.
\end{proof}

So, if we can show that Cardinal Probabilistic Serial has a safety strategy then we get a lower bound of $\Omega(\sqrt{n})$ on the price of stability. However, interestingly, unlike it's ordinal counterpart, 
truthtelling is not a safety strategy.

\begin{lemma}
Truthtelling is \textbf{NOT} a safety strategy for Cardinal Probabilistic Serial.
\end{lemma}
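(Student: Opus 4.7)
The plan is to disprove the safety claim by exhibiting a small explicit counterexample with $n=2$ agents and $m=2$ items. The underlying intuition is structural: under ordinal Probabilistic Serial, a truthful agent always consumes its top available item at full rate $1$, which is precisely why truthfulness is a safety strategy. Under Cardinal Probabilistic Serial, however, a truthful agent splits its unit consumption rate across \emph{all} remaining items in proportion to their reported cardinal values. So an agent with a nearly balanced valuation over two items consumes its favorite item at rate near $1/2$, which lets other agents snatch it away. I will exploit exactly this slowdown.

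Concretely, I would set $v'_1 = (1/2 + \varepsilon,\, 1/2 - \varepsilon)$ for small $\varepsilon>0$, so that agent~$1$'s unique top item is item~$1$, and have agent~$2$ report $v_2 = (1-\delta,\, \delta)$ for small $\delta>0$ (the tiny $\delta$ just avoids invoking the adversarial rule). A direct computation of the CPS dynamics gives: at time $0$, item~$1$ is consumed at total rate $(1/2 + \varepsilon) + (1-\delta)$, so item~$1$ is depleted at time $t_1 = 1/(3/2 + \varepsilon - \delta)$, and agent~$1$'s total share of item~$1$ equals
\[
\gamma_{1,1} \;=\; \frac{1/2 + \varepsilon}{3/2 + \varepsilon - \delta} \;\xrightarrow[\varepsilon,\delta \to 0]{}\; \frac{1}{3}.
\]
Thus, for $\varepsilon$ and $\delta$ sufficiently small, agent~$1$ receives item~$1$ with probability strictly less than $1/2$.

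To conclude, I would compare this to the safety requirement: with $k=1$ and $n=2$, the safety property demands that agent~$1$ obtain its top $k=1$ item with total probability at least $k/n = 1/2$. But we produced an instance in which truthful reporting yields only probability $\approx 1/3 < 1/2$ on its unique favorite item. Hence truthful reporting is \emph{not} a safety strategy for Cardinal Probabilistic Serial.

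There is essentially no main obstacle, only one cosmetic subtlety: if one used $v_2 = (1,0)$ instead, then after item~$1$ is depleted agent~$2$ would have zero value for everything remaining, triggering the ``adversarial consumption rate'' clause of CPS. This does not affect the conclusion (the top-$k=1$ safety test only concerns item~$1$, which is resolved before that), but working with $v_2 = (1-\delta, \delta)$ keeps the argument clean and robust, and the same calculation shows $\gamma_{1,1} \to 1/3$ as the perturbations vanish.
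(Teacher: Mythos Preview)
Your proposal is correct and follows essentially the same approach as the paper: exhibit a truthful agent whose valuation is spread across items (so it consumes its favorite at rate strictly below $1$) while the opponents concentrate on that favorite, driving the agent's share of its top item below $1/n$. The paper's construction uses general $n$ with $v'_1(1)=1-(n-1)\varepsilon$ and all other agents single-minded on item~$1$, yielding a share just below $1/n$; your $n=2$ instance with $v'_1=(1/2+\varepsilon,\,1/2-\varepsilon)$ is the same idea with a more pronounced gap ($\approx 1/3$ versus the required $1/2$), and your $\delta$-perturbation of $v_2$ to avoid the adversarial clause is a clean touch the paper omits.
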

\begin{proof}
Assume that $v_{i}(j)=1$, $v_{i}(j)=0$, $v_{1}(1)=1-(n-1)\varepsilon$ and $v_{1}(j)=\varepsilon$ for any $i\neq 1$ and $j\neq 1$. 
Then if agent $1$ is truthful it has a probability less than $1/n$ of getting item $1$, which is its top item. So, truthtelling is not a safety strategy.
\end{proof}
Nonetheless, we can find a safety strategy for Cardinal Probabilistic Serial.
\begin{lemma}\label{lem:Safe}
Cardinal Probabilistic Serial has a safety strategy.
\end{lemma}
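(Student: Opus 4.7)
The plan is to prove that the sequential bidding strategy based on agent $i$'s preference ordering is a safety strategy for Cardinal Probabilistic Serial, in the limit sense permitted by Lemma~\ref{lem:epsilon}. Fix agent $i$ with true valuation $\TrueValue[i]$, let $X=(x_1,\dots,x_m)$ be an ordering of the items with $\TrueValue[i](x_1)\ge \TrueValue[i](x_2)\ge\cdots$ (ties broken arbitrarily), and take $u=\SeqValue{X}$. Under this strategy the behavior of agent $i$ in CPS coincides with that of an agent in truthful Probabilistic Serial: at each time $t$, agent $i$ consumes at its full rate $1$ from the highest-ranked remaining item in $\remSet{v}{t}$.

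The key claim to establish is that, for every $v_{-i}\in \valInd^{n-1}$ and every $k\le m$, agent $i$'s expected allocation from $\{x_1,\dots,x_k\}$ is at least $k/n$. I would argue this by a short counting argument mirroring the classical PS fact. Let $\tau_k$ be the first time at which every item in $\{x_1,\dots,x_k\}$ has been entirely consumed. Because $\SeqValue{X}$ keeps agent $i$ consuming only from $\{x_1,\dots,x_k\}$ at rate $1$ throughout $[0,\tau_k)$, agent $i$ accumulates exactly $\tau_k$ units of consumption from that set. The other $n-1$ agents each have total consumption rate at most $1$, so together they contribute at most $(n-1)\tau_k$ units to $\{x_1,\dots,x_k\}$ by time $\tau_k$; since these items have total size $k$, we get $n\tau_k\ge k$, i.e.\ $\tau_k\ge k/n$. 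The expected number of items from the top $k$ won by agent $i$ equals the sum of its fractional consumptions on those items, which is at least $\tau_k\ge k/n$ as required.

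The main obstacle is the formal issue that $\SeqValue{X}$ is defined as a limit of valid strategies rather than a member of $\valInd$ itself. In fact, the simple example $m=n=2$ with agent $2$ playing single-mindedly on $x_1$ shows that no finite $\EpsValue{X}$ satisfies the exact $k/n$ bound (agent $i$ receives a fraction $(1-\varepsilon)/(2-\varepsilon)<\tfrac{1}{2}$ of $x_1$), so safety is only achieved in the limit. I would resolve this precisely as the paper treats $\SeqValue{X}$ elsewhere: Lemma~\ref{lem:epsilon} guarantees that the valid strategies $\EpsValue{X}\in \valInd$ approximate the payoffs of $\SeqValue{X}$ arbitrarily closely for every fixed $v_{-i}$, so the $k/n$ safety guarantee is attained in the limit. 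This is enough for the downstream application in Theorem~\ref{Stability-Alloc}, whose proof uses only the top-$1$ and top-$\sqrt{n}$ guarantees of a safety strategy to lower-bound Nash equilibrium welfare.
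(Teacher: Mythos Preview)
Your proof is correct and follows essentially the same approach as the paper. The paper's argument invokes Lemma~\ref{lem:time} to conclude that before time $j/n$ at most $j-1$ items have been consumed, so under the sequential strategy $\SeqValue{X}$ with $X$ the preference ordering of~$i$, the agent consumes from its top~$j$ items throughout $[0,j/n)$; your counting argument that $\tau_k\ge k/n$ is exactly the content of Lemma~\ref{lem:time} reproved in situ for the set $\{x_1,\dots,x_k\}$. The only difference is that you explicitly discuss the limit caveat via Lemma~\ref{lem:epsilon}, which the paper leaves implicit here.
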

\begin{proof}
This follows directly from Lemma~\ref{lem:time}
by considering the sequential strategy with $X_i=[n]$. Before time $j/n$, at most $j-1$ items have been consumed so under the sequential strategy agent $i$ is consuming from one of their $j$ favorite items. 
This is what we need.
\end{proof}

\begin{corollary}
For the one-sided allocation problem, the price of stability of Cardinal Probabilistic Serial is $\Omega(\sqrt{n})$ and $\bigO{\sqrt{n}\cdot \log~n}$
\end{corollary}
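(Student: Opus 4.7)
The plan is to package this as a direct corollary of three pieces that are already in place, with no new machinery required.

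For the upper bound $O(\sqrt{n}\cdot \log n)$, I would simply invoke Theorem~\ref{thm:sym} (the symmetric case of the price of anarchy for Cardinal Probabilistic Serial). Since the price of stability is taken over the \emph{best} Nash equilibrium and the price of anarchy over the \emph{worst}, the price of stability is bounded above by the price of anarchy, giving $O(\sqrt{n}\cdot \log n)$ immediately.

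For the lower bound $\Omega(\sqrt{n})$, I would combine Theorem~\ref{Stability-Alloc} with Lemma~\ref{lem:Safe}. Theorem~\ref{Stability-Alloc} asserts that every mechanism equipped with a safety strategy has price of stability at least $\Omega(\sqrt{n})$ (using the Christodoulou--Filos-Ratsikas--Frederiksen construction lifted to the multi-unit demand setting). Lemma~\ref{lem:Safe} in turn exhibits a safety strategy for Cardinal Probabilistic Serial, namely the sequential bidding strategy with target sequence $X_i=[n]$ ordered by the agent's own preference; the justification there appeals to Lemma~\ref{lem:time}, which guarantees that at most $j-1$ items have been consumed before time $j/n$, so the sequential bidder always pulls from their top $j$ items up to that time, and hence obtains, in expectation, $k/n$ of their top $k$ items for every $k$.

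The only thing to verify carefully is that these two ingredients are compatible: Theorem~\ref{Stability-Alloc} as stated requires a safety strategy in the sense used in its proof, and Lemma~\ref{lem:Safe} produces exactly such a strategy. So the conclusion is immediate, and the two bounds combined give the stated range.

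There is no real obstacle here; the corollary is essentially a bookkeeping statement. The only mild subtlety is to note that in the asymmetric multi-unit demand setting the safety-strategy definition and the lower bound construction both still apply, since the construction used in Theorem~\ref{Stability-Alloc} already uses $n$ agents and $n$ items in a way that does not rely on unit-demand valuations.
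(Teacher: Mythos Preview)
Your proposal is correct and follows essentially the same approach as the paper: the lower bound combines Lemma~\ref{lem:Safe} with Theorem~\ref{Stability-Alloc}, and the upper bound comes from the price of anarchy dominating the price of stability. The only cosmetic difference is that the paper cites Theorem~\ref{thm:CPS-Anarchy} for the upper bound whereas you cite Theorem~\ref{thm:sym}; either yields the stated $O(\sqrt{n}\cdot\log n)$ bound in this setting.
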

\begin{proof}
CPS has a safety strategy by Corollary~\ref{lem:Safe}. So, the lower bound follows Theorem~\ref{Stability-Alloc} which states that any mechanism with a safety strategy has a price of stability of $\Omega(\sqrt{n})$.
The upper bound follows from Theorem~\ref{thm:CPS-Anarchy}
because the price of anarchy upper bounds the price of stability.

Given that the sequential bidding strategy can be used for PS as well, the statement applies to PS. 
\end{proof}

\section{The Relative Merits of PS and CPS}\label{sec:relative}

Here we give examples where PS and CPS have a major difference in performance. 
The reader may verify that in Example 3, CPS performs dramatically better than PS whereas in Example 4 PS performs dramatically better.

\begin{example}
Assume that $n$ is a square. Then, for $i=1,\dots,\sqrt{n}$, agent $i$ has value $1$ for item $i$ and 0 for other items. Then, assume that the remaining agents have value $\frac{1}{n}+\varepsilon$ for items $1$ to $\sqrt{n}$ and $\frac{1}{n}-\frac{\varepsilon}{\sqrt{n}-1}$ for the remaining items. Then, for $\varepsilon$ small enough the Optimal allocation and CPS have value $\Theta(\sqrt{n})$ while PS has value $\Theta(1)$.
\end{example}

\begin{example}
Assume that for $i=1,\dots,n$ agent $i$ has value $\frac{1}{\sqrt{n}}$ for item $i$ and value $\frac{1-\frac{1}{\sqrt{n}}}{n-1}$ for the remaining items. Then, the optimal allocation and PS will return a matching whose value is $\sqrt{n}$. On the other hand, the value of CPS for this instance will be $\Theta(1)$.
\end{example}

\end{document}